\newcommand{\nc}{\newcommand}
\nc{\ketbra}[2]{|#1\rangle\!\langle#2|}
\nc{\braket}[2]{\langle#1|#2\rangle}
\nc{\proj}[1]{| #1\rangle\!\langle #1 |}
\newtheorem{theorem}{Theorem}
\newtheorem{lemma}[theorem]{Lemma}
\newtheorem{corollary}[theorem]{Corollary}
\newtheorem{remark}{Remark}
\newtheorem{definition}[theorem]{Definition}
\numberwithin{equation}{subsection}
\numberwithin{theorem}{section}
\numberwithin{example}{section}
\numberwithin{remark}{section}
\begin{document}

\title[Algebraic Reasoning of Quantum Programs via Non-idempotent Kleene Algebra (Extended Version)]{Algebraic Reasoning of Quantum Programs via Non-idempotent Kleene Algebra (Extended Version)}

\author{Yuxiang Peng}
\affiliation[obeypunctuation=true]{
  \institution{University of Maryland}, \country{USA}
}

\author{Mingsheng Ying}
\affiliation[obeypunctuation=true]{
\institution{Chinese Academy of Sciences},
\country{China}
}
\affiliation[obeypunctuation=true]{
\institution{Tsinghua University},
\country{China}
}

\author{Xiaodi Wu}
\affiliation[obeypunctuation=true]{
  \institution{University of Maryland}, \country{USA}
}

\begin{abstract}
We investigate the \emph{algebraic} reasoning of quantum programs inspired by the success of classical program analysis based on Kleene algebra. 
One prominent example of such is the famous Kleene Algebra with Tests (KAT), which has furnished both theoretical insights and practical tools. 
The succinctness of algebraic reasoning would  
be especially desirable for scalable analysis of quantum programs, given 
the involvement of exponential-size matrices in most of the existing methods. 
A few key features of KAT including the idempotent law and the nice properties of classical tests, however, 
fail to hold in the context of quantum programs due to their unique quantum features, especially in branching. 
We propose Non-idempotent Kleene Algebra (NKA) as a natural alternative and identify complete and sound semantic models for NKA as well as their quantum interpretations.
In light of applications of KAT, we demonstrate algebraic proofs in NKA of quantum compiler optimization and the normal form of quantum \textbf{while}-programs. 
Moreover, we extend NKA with Tests (i.e., NKAT), where tests model quantum predicates following  effect algebra, 
and illustrate how to encode propositional quantum Hoare logic as NKAT theorems. 
\end{abstract}

\begin{CCSXML}
<ccs2012>
<concept>
<concept_id>10003752.10003766.10003767.10003768</concept_id>
<concept_desc>Theory of computation~Algebraic language theory</concept_desc>
<concept_significance>500</concept_significance>
</concept>
<concept>
<concept_id>10003752.10003790.10003798</concept_id>
<concept_desc>Theory of computation~Equational logic and rewriting</concept_desc>
<concept_significance>300</concept_significance>
</concept>
</ccs2012>
\end{CCSXML}

\ccsdesc[500]{Theory of computation~Algebraic language theory}
\ccsdesc[300]{Theory of computation~Equational logic and rewriting}

\newif\iftechrep\techreptrue 

\keywords{
non-idempotent Kleene algebra, compiler optimization, normal form theorem, quantum Hoare logic.
}

\newtheorem{construction}[theorem]{Construction}

\newcommand{\eq}[1]{\hyperref[eq:#1]{(\ref*{eq:#1})}}
\renewcommand{\sec}[1]{\hyperref[sec:#1]{Section~\ref*{sec:#1}}}
\newcommand{\thm}[1]{\hyperref[thm:#1]{Theorem~\ref*{thm:#1}}}
\newcommand{\rdef}[1]{\hyperref[def:#1]{Definition~\ref*{def:#1}}}
\newcommand{\rprop}[1]{\hyperref[prop:#1]{Proposition~\ref*{prop:#1}}}
\newcommand{\rsdp}[1]{\hyperref[sdp:#1]{Invariant-SDP~\ref*{sdp:#1}}}
\newcommand{\lem}[1]{\hyperref[lem:#1]{Lemma~\ref*{lem:#1}}}
\newcommand{\cor}[1]{\hyperref[cor:#1]{Corollary~\ref*{cor:#1}}}
\newcommand{\itm}[1]{\hyperref[itm:#1]{\ref*{itm:#1}}}
\newcommand{\tab}[1]{\hyperref[tab:#1]{Tabel~\ref*{tab:#1}}}

\iftechrep
\newcommand{\app}[1]{\hyperref[app:#1]{Appendix~\ref*{app:#1}}}
\else 
\newcommand{\app}[1]{\hyperref[app:#1]{Supplementary Materials~\ref*{app:#1}}}
\fi

\newcommand{\fig}[1]{\hyperref[fig:#1]{Figure~\ref*{fig:#1}}}
\newcommand{\ex}[1]{\hyperref[ex:#1]{Example~\ref*{ex:#1}}}
\newcommand{\defn}[1]{\hyperref[defn:#1]{Definition~\ref*{defn:#1}}}
\newcommand{\cstr}[1]{\hyperref[cstr:#1]{Construction~\ref*{cstr:#1}}}

\newcommand{\at}[0]{\mathrm{at}}
\newcommand{\1}[0]{\mathbbm{1}}
\newcommand{\A}[0]{\mathcal{A}}
\newcommand{\B}[0]{\mathcal{B}}
\newcommand{\D}[0]{\mathcal{D}}
\newcommand{\F}[0]{\mathcal{F}}
\renewcommand{\H}[0]{\mathcal{H}}
\renewcommand{\C}[0]{\mathcal{C}}
\renewcommand{\L}[0]{\mathcal{L}}
\renewcommand{\O}[0]{\mathcal{O}}
\renewcommand{\P}[0]{\mathcal{P}}
\newcommand{\Q}[0]{\mathcal{Q}}
\renewcommand{\S}[0]{\mathcal{S}}
\newcommand{\T}[0]{\mathcal{A}}
\newcommand{\X}[0]{\mathcal{X}}
\newcommand{\Y}[0]{\mathcal{Y}}
\newcommand{\K}[0]{\mathcal{K}}
\newcommand{\N}[0]{\mathcal{N}}
\newcommand{\M}[0]{\mathcal{M}}
\newcommand{\nat}[0]{\mathbb{N}}
\newcommand{\natinf}[0]{\overline{\mathbb{N}}}
\newcommand{\cmpl}{\mathbb{C}}
\newcommand{\llrrangle}[1]{\langle\! \langle #1\rangle\! \rangle}
\newcommand{\ratps}[0]{\natinf^{\mathrm{rat}}\llrrangle{\Sigma^*}}
\newcommand{\ps}[0]{\natinf\llrrangle{\Sigma^*}}
\newcommand{\rpsfin}[0]{\nat^{\mathrm{rat}}\llrrangle{\Sigma^*}}
\newcommand{\psfin}{\nat\llrrangle{\Sigma^*}}
\newcommand{\psft}{\natinf\langle\Sigma^*\rangle}
\newcommand{\fps}[1]{\mathbf{ #1 }}
\newcommand{\fpsc}[2]{\mathbf{ #1 }[#2]}
\newcommand{\CP}[0]{\mathcal{QC}}
\newcommand{\I}[0]{\mathcal{I}}
\newcommand{\empstr}[0]{\epsilon}
\newcommand{\eval}[0]{\mathrm{eval}}
\newcommand{\pred}[0]{\LT_{\mathrm{Pred}}}
\newcommand{\meas}[0]{\LT_{\mathrm{Meas}}}
\newcommand{\cmeas}[0]{\LT_{\mathrm{CMeas}}}
\newcommand{\cp}[1]{\langle #1 \rangle^{\uparrow}}
\renewcommand{\neg}[1]{\overline{ #1 }}
\newcommand{\PO}{\mathcal{PO}}
\newcommand{\cpPO}{\cp{\PO}}
\newcommand{\cpCP}{\cp{\CP}}
\newcommand{\PS}{\mathcal{PS}}
\newcommand{\cpPS}{\cp{\PS}}
\newcommand{\KATmodel}{\vdash_{\mathrm{KAT}}}
\newcommand{\NKAmodel}{\vdash_{\mathrm{NKA}}}
\newcommand{\NKAT}{\mathrm{NKAT}}
\newcommand{\NKA}{\mathrm{NKA}}
\newcommand{\KAmodel}{\vdash_{\mathrm{KA}}}
\newcommand{\circdag}{\diamond}
\newcommand{\enc}{\mathrm{Enc}}
\newcommand{\expsig}{\mathrm{Exp}_{\Sigma}}
\newcommand{\intp}{\mathrm{int}}
\newcommand{\Qint}{\Q_{\intp}}
\newcommand{\Kint}{\K_{\intp}}
\newcommand{\Qintdag}{\Q_{\intp}^{\dagger}}

\newcommand{\bra}[1]{\langle #1 \vert}
\newcommand{\ket}[1]{\vert #1 \rangle}
\newcommand{\tr}[0]{\mathrm{tr}}
\newcommand{\Tr}[0]{\mathrm{tr}}
\newcommand{\sem}[1]{\left\llbracket #1 \right\rrbracket}
\newcommand{\norm}[1]{\left\lVert #1 \right\rVert}
\newcommand{\rpssem}[1]{\{\!\!\{ #1 \}\!\!\}}

\newcommand{\cskip}[0]{{\mathbf{skip}}}
\newcommand{\cfail}[0]{{\mathbf{abort}}}
\newcommand{\cif}[3]{{\mathbf{if}~#1~\mathbf{then}~#2~\mathbf{else}~#3}}
\newcommand{\qif}[1]{\mathbf{case}~#1~\mathbf{end}}
\newcommand{\qwhile}[2]{\mathbf{while}~#1~\mathbf{do}~#2~\mathbf{done}}
\newcommand{\cwhile}[2]{\mathbf{while}~#1~\mathbf{do}~#2~\mathbf{done}}
\newcommand{\qskip}[0]{\mathbf{skip}}
\newcommand{\triple}[3]{\{#1\}#2\{#3\}}
\newcommand{\atriple}[3]{\{#1\}#2\{#3\}}
\newcommand{\nm}[1]{\| #1 \|}
\newcommand{\E}[0]{\mathcal{E}}
\def \quwhile {quantum \textbf{while}}
\newcommand{\reduce}[2]{t_{#2}\left(#1\right )}
\newcommand{\eqv}[1]{\left[ #1 \right]}
\newcommand{\Seqv}[0]{\PO_{\infty}}
\newcommand{\LT}[0]{\P}

\maketitle

\section{Introduction}

\subsection{Background and Motivation}
Kleene algebra (KA)~\cite{Kleene56} that establishes the equivalence of regular expressions and finite automata is an important connection built between programming languages and abstract machines with a wide range of applications.
One very successful extension of KA, called Kleene algebra with tests (KAT), was introduced by~\citet{K97c} that combines KA with Boolean algebra (BA) to model the fundamental constructs arising in programs: sequencing, branching, iteration, etc. 
More importantly, the equational theory of KAT, which can be finitely axiomatized~\cite{KS96a}, allows \emph{algebraic reasoning} about corresponding classical programs.  

The mathematical elegance and succinctness of algebraic reasoning with KAT have furnished deep theoretical insights as well as practical tools. 
A lot of topics can be investigated with KAT including, e.g., 
program transformations~\cite{AK01a}, compiler optimization~\cite{KP00}, Hoare logic~\cite{K00a}, and so on. 
An important recent application of KAT is NetKAT~\cite{AFGJKSW13a} that reasons about the packet-forwarding behavior of software-defined networks,
with both a solid theoretical foundation~\cite{FKMST15a} and scalable practical performance~\cite{AFGJKSW13a}.
An efficient fragment of KAT, called Guarded KAT (GKAT), has also been identified~\cite{GKAT} to model typical imperative programs with an almost linear time equational theory. 
In contrast, KAT's equational theory is {\bf PSPACE}-complete~\cite{CKS96a}.

Quantum computation has been a topic of significant recent interest. 
With breakthroughs in experimental quantum computing 
and the introduction of many quantum programming languages such as Quipper \citep{Green2013}, Scaffold \citep{Sca12}, QWIRE \citep{PRZ2017}, 
Microsoft's Q\#~\citep{Svore:2018}, IBM's Qiskit \cite{aleksandrowicz2019qiskit}, Google's Cirq \cite{Cirq18}, Rigetti's Forest \cite{ForestRig},
there is an imperative need for the analysis and verification of quantum programs. 

Indeed, program analysis and verification have been a central topic ever since the seminal work on quantum programming languages~\citep{Om03,SZ00, Sabry-Haskel, Se04, AG05}. 
There have been many attempts of developing Hoare-like logic~\cite{Hoare:1969} for verification of quantum programs \citep{BJ04,CHADHA200619, Baltag2011, Feng:2007, Kaku09, Yin11}. 
In particular, \citet{DP2006} proposed the notions of quantum predicate and weakest precondition. \citet{Yin11} established the quantum Hoare logic with (relative) completeness for reasoning about a quantum extension of the \textbf{while}-language with many subsequent  developments \cite{YYW17,Li:2017,ZYY19}. 
We refer curious readers to surveys~\cite{Selinger04, Gay:2006, Ying2019} for details. 

Quantum \textbf{while}-programs have similar (yet semantically different) fundamental constructs (e.g., sequencing, branching, iterations) like classical ones, 
which gives rise to a natural question of the possibility of using KA/KAT to algebraically reason about quantum programs. 
Existing methods for quantum program analysis and verification usually involve exponential-size matrices in terms of the system size, which hence significantly limits the scalability. 
In contrast, a succinct KA-based algebraic reasoning, if possible, would greatly increase the scalability of such analyses for quantum programs due to its mathematical succinctness. 

\subsection{Research Challenges and Solutions}

Let us first revisit KAT-based algebraic reasoning and highlight the challenges in extending the framework to the quantum setting. We assume a few self-explanatory quantum notations with detailed quantum preliminaries in \sec{qInt:prelim}. 

\vspace{1mm} \noindent \textbf{KAT-based Reasoning.} A typical reasoning framework based on KAT, similarly for NetKAT and GKAT, will establish that KAT models the targeted computation by showing 
\begin{align} \label{eqn:KAT}
        \KATmodel e=f \quad \Leftrightarrow \quad \forall \intp, \Kint(e)=\Kint(f),
\end{align}
where $\Kint$ is an interpretation mapping from expressions to a language (or semantic) model of the desired computation. 
In reasoning about while programs, one encodes them as KAT expressions as in Propositional Dynamic Logic~\cite{FISCHER1979194}: 
\begin{align}
    p;q &:= pq \label{eqn:int-seq}\\
    \cif{b}{p}{q} &:= bp+\overline{b}q \label{eqn:cbranch}\\
    \cwhile{b}{p} &:= (bp)^*\overline{b} \label{eqn:int-while},
\end{align}
where $b$ is a classical guard/test and $\overline{b}$ is its Boolean negation. 

Intuitively, if one can derive the equivalence of encodings of two classical programs in KAT, then through the soundness direction ($\Rightarrow$), one can also establish the equivalence between the semantics of the original programs by applying an appropriate interpretation.

\vspace{1mm} \noindent \textbf{Quantum Branching.} 
One \emph{critical} difference between quantum and classical programs lies in the \emph{branching} statement. 
The quantum branching statement,  
\begin{align}
    \qif{M[q]\rightarrow^{i} P_i},
\end{align}
refers to a \emph{probabilistic} procedure to execute branch $P_i$ depending on the outcome of quantum measurement $M$ on quantum variable $q$ ({of which the state is} denoted by a density operator $\rho$). 
Consider the two-branching case ($i$=0,1), and let $M=\{M_0, M_1\}$ be the quantum measurement operators. 
Measurement $M$ will \emph{collapse} $\rho$ to the state $\rho_0= M_0\rho M_0^\dagger/\Tr(M_0\rho M_0^\dagger)$ with probability $p_0=\Tr(M_0\rho M_0^\dagger)$, and  the state $\rho_1=M_1\rho M_1^\dagger/\Tr(M_1\rho M_1^\dagger)$ with probability $p_1= \Tr(M_1\rho M_1^\dagger)$ respectively (here $\Tr(\cdot)$ is the matrix trace). 
After the measurement $M$, the program will execute $P_i$ on state $\rho_i$ with probability $p_i$ $(i=0,1)$. 

There are two important differences between quantum and classical branching. The \emph{first} is that quantum branching allows probabilistic choices over different branches. 
Even though random choices also appear in probabilistic programs, the probabilistic choices in quantum branching are due to quantum mechanics (i.e., measurements). 
In particular, their distributions are determined by the underlying quantum states and the corresponding quantum measurements, and hence \emph{implicit} in the syntax of quantum programs, whereas specific probabilities are usually \emph{explicitly} encoded in the syntax of probabilistic programs. 
Moreover, different quantum measurements do not necessarily commute with each other, which could hence lead to more complex probability distributions in quantum branching than ones allowed in classical probability theory and hence probabilistic programs.  

The \emph{second} difference lies in the different roles played by classical guards and quantum measurements in branching. 
Note that classical guards serve two functionalities simultaneously: (1) first, their values are used to choose the branches before the control; (2) second, they can also be deemed as property tests (i.e. logical  propositions) on the state of the program after the control but before executing each branch. These two points might be so natural that one tends to forget that they are based on \emph{an assumption that observing the guard won't change the state of the program}, which is also naturally held classically. 
The classical  guards, when deemed as tests in KAT, enjoy further the Boolean algebraic properties so that they can be conveniently manipulated. 

This natural assumption, however, fails to hold in quantum branching since  quantum measurements will change underlying states in the branching statement. 
This is mathematically evident as we see $\rho$ is collapsed to either $\rho_0$ or $\rho_1$ for different branches. 
Therefore, it is conceivable that quantum branching (and hence quantum programs) should refer to a different semantic model and quantum measurements should be deemed different from the tests in KAT. 

\vspace{1mm} \noindent \textbf{Issues with directly adopting KAT/KA.} Aforementioned differences make it hard to directly work with KAT/KA for quantum programs. 
First, there is a well-known issue when combining 
non-determinism, which is native to KAT, with probabilistic choices~\cite{MISLOVE2006,varacca_winskel_2006}, 
the latter of which is however essential in quantum branching. 
A similar issue also showed up in the probabilistic extension of NetKAT~\cite{FKMRS15a}, which does not satisfy all the KAT rules, especially the \emph{idempotent} law. 
One might wonder about the possibility of using GKAT~\cite{GKAT}, which is designed to mitigate this issue by restricting KAT with guarded structures.
Unfortunately, the classical guarded structure modeled in GKAT is semantically different from quantum branching, which makes it hard to connect GKAT with appropriate quantum models. 

\vspace{1mm} \noindent \textbf{Solution with NKA and NKAT.}
Our strategy is to work with the variant of KA without the idempotent law, namely, the \emph{non-idempotent Kleene algebra} (NKA).
This change will help model the probabilistic nature of quantum programs in a natural way, however, at the cost of losing properties implied by the idempotent law. 
Fortunately, thanks to the existing research on NKA \cite{kuich2012semirings, esik2004inductive}, many properties of KA are recovered in NKA for its applications to quantum programs.

Since there is no single "test" in quantum programs that can serve two purposes like classical guards, we simply separate the treatments for them. 
The branching functionality of quantum measurements can hence be expressed in NKA by treating them as normal program statements. Precisely, any quantum two-branching can be encoded as 
\begin{align} \label{eqn:qbranch}
    m_0 p_0 + m_1 p_1, 
\end{align}
where $m_{0/1}$ are encodings of measurements and $p_{0/1}$ are encodings of programs in each branch. 
Comparing with the classical encoding \eqref{eqn:cbranch}, $m_{0/1}$ no longer enjoy the Boolean algebraic properties and should be treated separately. 

It turns out that many classical applications of KAT such as compiler optimization~\cite{KP00} and the proof of the normal form of \textbf{while}-programs~\cite{K97c} can be implemented in NKA for quantum programs with branching functionality only. 

However, one needs to extend NKA to recover other applications of KAT which makes essential use of the proposition functionality of tests. 
A prominent example in KAT is its application to propositional Hoare logic~\cite{K00a}. 
Indeed, a typical Hoare triple $\{b\} p \{c\}$ asserts that whenever $b$ holds before the execution of the program $p$, then if and when $p$ halts, $c$ will hold of the output state,  where $b, c$ are both tests in KAT leveraging their proposition functionality. 

A similar triple $\{A\} P \{B\}$ is also used in quantum Hoare logic~\cite{Yin11}, where $P$ is the quantum program and $A, B$ become \emph{quantum predicates}~\cite{DP2006}.
To encode quantum Hoare logic, we extend NKA with the "test", denoted NKAT, which mimics the behavior of quantum predicates following the effect algebra~\cite{foulis1994effect}.
With quantum predicates, we develop a more delicate description of measurements in quantum branching, called \emph{partitions}, which  allow us to reason about the relationship among quantum branches caused by the same quantum measurement, e.g., the $m_0$ and $m_1$ branches in \eqref{eqn:qbranch}. 

\vspace{1mm} \noindent \textbf{Quantum Path Model.} One of our main technical contributions is the identification of the so-called \emph{quantum path model}, a complete and sound semantic model for NKA. Namely, 
\begin{align}
        \NKAmodel e=f \quad \Leftrightarrow \quad \forall \intp, \Qint(e)=\Qint(f),
\end{align}
where $\Qint$ is an interpretation mapping from NKA expressions to quantum path actions, which can be deemed as quantum evolution in the path integral formulation of quantum mechanics. 
$\Qint$ will connect the NKA encoding of any quantum  program $P$ with its denotational semantics $\sem{P}$.
\footnote{Since we relate NKA to quantum models which imply the probabilistic feature inherently, there is no need to explicitly add probability to NKA.}

The key motivation of the quantum path model is to address the infinity issue in NKA.  
For an intuitive understanding, one can deem any KA or NKA expression as a collection of potentially infinitely many traces, where "infinitely many" is caused by $*$ operations. 
In the case of KA, by the idempotent law, every single trace is either in or out of the collection. 
However, in the case of NKA, each trace is associated with a  weight, which by itself could be infinite.
To distinguish between nonequivalent NKA expressions, one needs to build a semantic model that can characterize a collection of weighted traces with potentially infinite weights. 
We also require the quantum nature of this semantic model for connection with the denotational semantics of quantum programs. 

The path integral formulation becomes very natural in this regard: it formulates quantum evolution as the accumulative effect of a collection of evolutions on individual trajectories. 
Our quantum path model basically characterizes the accumulative quantum evolution over a collection of potentially infinite evolutions over individual traces.
By identifying quantum path actions representing quantum predicates and quantum measurements in the quantum path model, a soundness theorem is proved for NKAT as well. 

\vspace{1mm} \noindent \textbf{Quantum-Classical differences as exhibited in NKA and NKAT.} 
The quantum-classical difference is not explicit in the syntax of NKA, as there is no special symbol for quantum measurements.
This is also reflected in the proof of the completeness of NKA where an interpretation of essentially classical probabilistic processes is constructed (Remark~\ref{rm:completeness_NKA}). 

However, the difference becomes explicit in NKAT: 
the two functionalities of the quantum guards are characterized separately by \emph{effects} and \emph{partitions}, in contrast with the classical guards in KAT.
The general noncommutativity of quantum measurements in NKAT demonstrates its quantumness and distinguishes itself from any classical model.

\begin{figure*}
\scalebox{0.9}{
    \centering
\begin{tikzpicture}[shorten >=1pt,on grid,very thick,
mainnode/.style={rectangle, rounded corners, draw=black, very thick, minimum size=5mm, align=center},
subnode/.style={rectangle, rounded corners, draw=blue, very thick, minimum size=7mm},
]
\node[mainnode] (NKA) {Non-idempotent Kleene algebra \\ \sec{NKA}};
\node[mainnode] (QPM) [below left=2cm and 3 cm of NKA] {Quantum path model \\ \sec{path_model}};
\node[mainnode] (QP)  [below right=2cm and 3 cm of NKA] {Quantum program \\ \sec{qInt}};

\draw[<->] (QPM) -- (NKA) node[midway,left=0.3cm, align=center] {\emph{sound and complete} \\ \thm{q-model-sound-complete}} ;
\draw[->] (NKA) -- (QP) node[midway,right=0.4cm, align=center] {\emph{encode} \\ \thm{enc-recovery}} ;
\draw[->] (QP) -- (QPM) node[midway,below, align=center] {\emph{embedded in} \\ \lem{PL-embed}} ;

\node (App) [right=of NKA.north, xshift=4.5cm, yshift=0.5cm]{\textbf{Applications}};
\draw [draw=black] (App.north west) rectangle +(5.8cm, -4.2cm);
\draw[->] (NKA.east) -- (NKA.east-|App.west);

\node[mainnode] (Comp) [below=0.8cm of App.west, anchor=west, xshift=2mm, align=center] {Compiler optimizations rules \\ \sec{app:optimization}};
\node[mainnode] (NF) [below=of Comp.west, anchor=west, align=center, yshift=-0.2cm] {Normal form theorem \\ \sec{app:normal}};
\node[mainnode] (NKAT) [below=of NF.west, anchor=west, align=center, yshift=-0.2cm] {NKAT \\ \sec{NKAT:NKAT}};
\node[mainnode] (pQHL) [right=of NKAT, align=center, xshift=2cm] {Propositional QHL \\ \sec{NKAT:pqhl}};
\draw[->] (NKAT) -- (pQHL)  ;

\end{tikzpicture}
}
    \caption{The structure and main results of this paper. }
    \label{fig:flowchart}
\end{figure*}

\vspace{-2mm}
\subsection{Contributions} 
To our best knowledge, we contribute the first investigation of Kleene-like algebraic reasoning of quantum programs and demonstrate its feasibility. 
We introduce the non-idempotent Kleene algebra (NKA) and existing results on the semantic model of NKA in  \sec{NKA}.
Our contributions include: 

\begin{itemize} [leftmargin=4mm]
    \item We illustrate the quantum path model and its relation with normal quantum superoperators in \sec{path_model}. 
    \item 
We prove that the NKA axioms are sound and complete with respect to the quantum path model, given encodings of quantum programs in NKA and an appropriate  interpretation of NKA to the quantum path model in \sec{qInt}. 
\item We demonstrate several applications of NKA for quantum programs, including: (1) the verification of optimization in quantum compilers
(\sec{app:optimization}); (2) an algebraic equational proof of the quantum counterpart of the classic B\"{o}hm-Jacopini theorem~\cite{BJ66} 
(\sec{app:normal}). 
\item  We extend NKA with the effect algebra to obtain the Non-idempotent Kleene Algebra with Tests (NKAT), which is proven sound for the quantum path model. 
We also encode the entire propositional quantum Hoare logic as NKAT theorems in light of \citet{K00a} (\sec{NKAT}). 
\end{itemize}

\vspace{1mm} \noindent \textbf{Main Theorem.} 
Our main theorem presented below formally guarantees that quantum program equivalences are implied if we can algebraically derive the corresponding NKA theorems. This approach is similar to deriving classical program equivalence via KAT. 
\begin{theorem}
\label{thm:main}
Given two quantum programs $P, Q$ and sub-program pairs $\{\langle P_i, Q_i\rangle\}$ where $\sem{P_i}=\sem{Q_i}$, if Horn theorem
\begin{align*}
    &\NKAmodel \left(\bigwedge\nolimits_i \enc(P_i)=\enc(Q_i)\right)\rightarrow \enc(P)=\enc(Q)
\end{align*}
is derivable, then we have $\sem{P}=\sem{Q}.$
Here $\enc$ is the encoding of quantum program in a similar manner of \eqref{eqn:int-seq}-\eqref{eqn:int-while}.
\end{theorem}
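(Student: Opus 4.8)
The plan is to reduce the desired denotational identity $\sem{P}=\sem{Q}$ to an identity in the quantum path model, and then to discharge the latter using the soundness half of \thm{q-model-sound-complete}. The crucial observation is that a single canonical interpretation underlies every program appearing in the Horn clause: since $P$, $Q$ and all the $P_i,Q_i$ are built over one common alphabet of atomic quantum actions (unitaries and measurement branches) acting on the same state space, there is one interpretation $\intp_0$ sending each atom to its actual quantum path action. By \lem{PL-embed} the denotational semantics embeds faithfully into the quantum path model, and by \thm{enc-recovery} this embedding is realized by $\Q_{\intp_0}\circ\enc$; that is, for every quantum program $R$ the path action $\Q_{\intp_0}(\enc(R))$ recovers $\sem{R}$, so that equality of two such path actions forces equality of the corresponding semantics.

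With this bridge in place I would argue in three steps. First, translate the hypotheses: from $\sem{P_i}=\sem{Q_i}$ together with the recovery property above, each conjunct of the antecedent holds semantically, i.e. $\Q_{\intp_0}(\enc(P_i))=\Q_{\intp_0}(\enc(Q_i))$ for all $i$, so $\intp_0$ validates the premise $\bigwedge_i \enc(P_i)=\enc(Q_i)$. Second, invoke soundness of the Horn clause at $\intp_0$: because the clause is derivable in NKA, every inference step preserves validity in the quantum path model, so an interpretation validating the premises must also validate the conclusion, whence $\Q_{\intp_0}(\enc(P))=\Q_{\intp_0}(\enc(Q))$. Third, push this equality back through the faithful embedding to obtain $\sem{P}=\sem{Q}$. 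Notably, only the soundness direction of \thm{q-model-sound-complete} is needed; completeness plays no role here.

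The main obstacle is that soundness must be used at the level of Horn clauses (quasi-equations) rather than bare equations. Concretely, the derivation of the clause may use the conditional star-induction rules of NKA, and one must know these rules are valid in the quantum path model; this is exactly what the soundness portion of \thm{q-model-sound-complete} provides, so that adjoining the hypotheses $\enc(P_i)=\enc(Q_i)$ as extra axioms and running the NKA derivation preserves validity under any interpretation satisfying those hypotheses. The second delicate point is the faithfulness of the embedding from \lem{PL-embed}: without injectivity, equality of path actions would not descend to equality of denotational semantics and the final step would fail. Granting these two ingredients, the remainder is a routine assembly, namely fixing $\intp_0$, discharging the premises, applying soundness, and reflecting the conclusion back along the embedding.
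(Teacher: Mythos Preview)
Your proposal is correct and follows essentially the same route as the paper: fix the canonical interpretation $\intp_0=(\H,E^{-1})$, use \thm{enc-recovery} to convert $\sem{P_i}=\sem{Q_i}$ into validity of the antecedents under $\intp_0$, invoke soundness of the Horn derivation in the quantum path model, and then reflect the conclusion back to $\sem{P}=\sem{Q}$ via the injectivity in \lem{PL-embed}. The only minor imprecision is that the Horn-level soundness you need is packaged in the paper as \cor{primal-int} rather than in the equational statement of \thm{q-model-sound-complete}; both rest on \thm{QS-NKA}, so your reasoning is sound, just cite the corollary for the quasi-equational step.
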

We display the essential concepts leading to this theorem in \fig{flowchart}, illustrating how our efforts in later sections connect to it, and its applications and extensions. 

\vspace{2mm} \noindent \textbf{Related Works.} It is worthwhile comparing quantum algebraic reasoning based on NKA with other techniques on quantum program analysis, e.g., quantum Hoare logic~\cite{Yin11}. As we see, classical algebraic reasoning is extremely good at certain tasks (e.g, equational proofs). However, since it abstracts away a lot of semantic information, it cannot tell about detailed specifications on the state of programs, which can otherwise be reasoned by Hoare logic~\cite{Hoare:1969}.

Our quantum algebraic reasoning inherits the advantages and disadvantages of its classical counterpart. 
It allows elegant applications in Section~\ref{sec:app:optimization} \& \ref{sec:app:normal}, which is very hard (e.g., involving exponential-size matrices) to solve with the quantum Hoare logic~\cite{Yin11} or its relational variants~\cite{Unruh-POPL-19,popl20-relational}. 
However, it cannot replace quantum Hoare logic to reason about, e.g., specifications on the state of quantum programs either. 

A recent result of quantum abstract interpretation~\cite{YuPalsberg21}
contributes to another promising approach to verifying quantum assertions with succinct proofs, although its applicability and technique are incomparable to ours. 

There are many other verification tools developed for quantum programs. \citet{hietala2019verified} built VOQC, an infrastructure for quantum circuits in Coq with numerous verified programs and compiler optimization rules. Another theory for equational reasoning of quantum circuits is introduced in \cite{staton2015algebraic}. They serve as good complements of our framework when loops are absent. 

\vspace{2mm} \noindent \textbf{Future Directions.}
One interesting question is the automation related to NKA, e.g., through co-algebra and bi-simulation techniques, in light of~\cite{K17a,GKAT,BONCHI201277, silva2010kleene}. 
This could lead to efficient symbolic algorithms for algebraic reasoning of quantum programs in light of~\cite{Pous15}. \citet{kiefer2013complexity} proposed an algorithm checking $\mathbb{Q}-$weighted automata equivalences, which works for NKA when no infinity presents.

Another direction is to include quantum-specific rules to NKA to ease the expression of practical quantum applications. For example, one may embed unitary superoperators into NKA as a group to encode their reversibility.

Given the promising applications of KAT in network programming (e.g., NetKAT~\cite{FKMRS15a}), 
an exciting opportunity is to investigate the possibility of a quantum version of NetKAT in the software-defined model of the emerging quantum internet (e.g., \cite{q-internet,qnetwork}) based on our work. 

\section{Non-idempotent Kleene Algebra} \label{sec:NKA}

In this section, we introduce the theory of a Kleene algebraic system without the idempotent law, which is called non-idempotent Kleene algebra (NKA). 

\begin{figure*}[t]
\scalebox{0.95}{
  \hspace{-8mm}
  \begin{subfigure}{.7\linewidth}
  \begin{minipage}{.45\linewidth}
  \begin{align*}
    \label{fp-rule} &1+pp^*=1+p^*p=p^*  \tag{fixed-point} \\
    \label{star-mono} &p\leq q\rightarrow p^*\leq q^* \tag{monotone-star} \\
    \label{product-star-rule} &1+p(qp)^*q=(pq)^* \tag{product-star}
  \end{align*}
  \end{minipage}
  ~~
  \begin{minipage}{.53\linewidth}
  \begin{align*}
    \label{sliding-rule} &(pq)^*p=p(qp)^* \tag{sliding} \\
    \label{denesting-rule} &(p+q)^*=(p^*q)^*p^*=p^*(qp^*)^* \tag{denesting} \\
    \label{pos-rule} &0\leq p \tag{positivity}
  \end{align*}
  \end{minipage}
  \caption{Commonly used theorems of NKA}
  \label{fig:NKAtheorems}
  \end{subfigure}
  ~~
  \begin{subfigure}{.32\linewidth}
  \centering
  \begin{align*}
    \label{unr-rule} &(pp)^*(1+p)=p^*   \tag{unrolling} \\
    \label{comstar-rule} &pq=qp\rightarrow p^*q=qp^*  \tag{swap-star} \\
    \label{star-rew-rule} &pq=rp\rightarrow pq^*=r^*p  \tag{star-rewrite}
  \end{align*}
  \caption{Several theorems of NKA for applications}
  \label{fig:optlem}
  \end{subfigure}
}
\vspace{-2mm}
  \caption{Derivable formulae in NKA.}
\end{figure*}

We inherit Kozen's axiomatization for Kleene algebra (KA) in \cite{kozen1990completeness} with several weakenings.

\begin{figure}[b]
\centering
\scalebox{0.95}{
    \begin{minipage}{.4\linewidth}
    \centering
    {\textbf{Axioms of KA}}\\
    \begin{align*}
    \hline
    &\text{\textsc{Semiring Laws}} \\
    &p+(q+r)=(p+q)+r; \\
    &p+q=q+p; \\
    &p+0=p; \\
    &p(qr)=(pq)r; \\
    &1p=p1=p; \\
    &0p=p0=0; \\
    &p(q+r)=pq+pr; \\
    &(p+q)r=pr+qr; \\
    &{\color{red} p+p=p;} \\
    &\\
    \hline
    &\text{\textsc{Partial Order Laws}} \\
    &{\color{red} p\leq q\leftrightarrow p+q=q;} \\
    & \\
    & \\
    & \\
    & \\
    & \\
    \hline
    &\text{\textsc{Star Laws}} \\
    &1+pp^*\leq p^*; \\
    &q+pr\leq r \rightarrow p^*q\leq r; \\
    &q+rp\leq r \rightarrow qp^*\leq r;
    \end{align*}
    \end{minipage}
    \qquad
    \begin{minipage}{.5\linewidth}
    \centering
    {\textbf{Axioms of NKA}}\\
    \begin{align*}
    \hline
    &\text{\textsc{Semiring Laws:}} \\
    &p+(q+r)=(p+q)+r; \\
    &p+q=q+p; \\
    &p+0=p; \\
    &p(qr)=(pq)r; \\
    &1p=p1=p; \\
    &0p=p0=0; \\
    &p(q+r)=pq+pr; \\
    &(p+q)r=pr+qr; \\
    & \\
    & \\
    \hline
    &\text{\textsc{Partial Order Laws}} \\
    &{\color{blue} p\leq p;} \\
    &{\color{blue} p\leq q\wedge q\leq p \rightarrow p=q;} \\
    &{\color{blue} p\leq q\wedge q\leq r \rightarrow p\leq r;} \\
    &{\color{blue} p\leq q\wedge r\leq s \rightarrow p+r\leq q+s;} \\
    &{\color{blue} p\leq q\wedge r\leq s \rightarrow pr\leq qs;} \\
    & \\
    \hline
    &\text{\textsc{Star Laws}} \\
    &1+pp^*\leq p^*; \\
    &q+pr\leq r \rightarrow p^*q\leq r; \\
    &q+rp\leq r \rightarrow qp^*\leq r;
    \end{align*}
    \end{minipage}
}
    \caption{Axioms of KA and NKA. Axioms marked in {\color{blue} blue} ({\color{red} red}) only present in NKA (KA).}
    \label{fig:axiom-ka-nka}
\end{figure}

\begin{definition}
  A non-idempotent Kleene algebra (NKA) is a 7-tuple $(\K, +, \cdot, *, \leq, 0, 1)$, where $+$ and $\cdot$ are binary operations, $*$ is a unary operation, and $\leq$ is a binary relation. It satisfies the axioms in \fig{axiom-ka-nka}.
\end{definition}

The most essential weakening is the deletion of the idempotent law. The partial order in KA cannot directly fit in the scenario when the  idempotent law is absent.
We hence generalize the KA partial order to any partial order that is preserved by $+$ and $\cdot$. Therefore,  $*$ also preserves this partial order. Moreover, we did not include the symmetric fixed point inequality $1+p^*p\leq p^*$ because it is derivable by other axioms, both in KA and in NKA \cite{esik2004inductive}.

\begin{definition}
For an alphabet $\Sigma$, an expression over $\Sigma$ is inductively defined by:
\begin{align}
    e::=0~|~1~|~a~|~e_1+e_2~|~e_1\cdot e_2~|~e_1^*,
\end{align}
where $a\in \Sigma$. We denote all the expressions over $\Sigma$ by $\expsig$.

A Horn formula $\phi$ is defined as the form $(\bigwedge_i e_i\leq f_i)\rightarrow e\leq f$. One may also substitute equation for inequality in $\phi$ since $e=f \leftrightarrow e\leq f \wedge f\leq e$.

We write $\NKAmodel \phi$ if $\phi$ is derivable in NKA with equational logic. Any derivable formula in NKA is a \emph{theorem} of NKA.
\end{definition}

Apparently, every theorem in NKA is derivable in KA, since the partial order in KA is monotone. The reverse direction is not true in general.
Indeed, the idempotent law, for example, is nowhere derivable from the NKA axioms. It is thus natural to ask what important theorems in KA are still derivable in NKA. We provide affirmative answers to many of them in the following. (Proofs in \app{nka-rules}.)

\begin{lemma}
The following formulae are derivable in NKA.
\label{lem:iter-star-rules}
  \begin{enumerate}
      \item The formulae in \fig{NKAtheorems} due to \cite{esik2004inductive}. 
      \item The formulae in \fig{optlem}.
  \end{enumerate}
\end{lemma}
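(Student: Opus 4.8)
The plan is to reconstruct the \'Esik-style derivations \cite{esik2004inductive} entirely from the NKA axioms of Figure \ref{fig:axiom-ka-nka}, the crucial point being that the KA order law $p\le q\leftrightarrow p+q=q$ is unavailable, so I cannot split a bound $a+b\le c$ into $a\le c$ and $b\le c$ for free, nor recombine two copies via idempotence. Every identity will therefore be established as a pair of inequalities closed under the abstract partial order and then collapsed to an equation by antisymmetry. The one surviving ingredient of the KA order is the ``upper bound'' half, and recovering it is the first milestone. First I would derive \emph{positivity} $0\le p$: instantiating the left induction axiom $q+pr\le r\to p^*q\le r$ with $q=0$, $p=1$, and arbitrary $r$ gives the hypothesis $0+1\cdot r\le r$, which holds by $1r=r$ and reflexivity, while its conclusion $1^*\cdot 0\le r$ simplifies to $0\le r$ via $x0=0$. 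With positivity in hand, monotonicity of $+$ yields the absorption $p=p+0\le p+q$ (and symmetrically $q\le p+q$), hence the extraction rule: from $a+b\le c$ and $a\le a+b$, transitivity gives $a\le c$, and likewise $b\le c$. From the star axiom $1+pp^*\le p^*$ I then read off $1\le p^*$ and $pp^*\le p^*$, and thus $p=p1\le pp^*$ by monotonicity.

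Next I would complete the \emph{fixed-point} law. The axiom already gives $1+pp^*\le p^*$; for the converse, multiplying the axiom on the left by $p$ gives $p+ppp^*\le pp^*$, whence $1+p(1+pp^*)=1+p+ppp^*\le 1+pp^*$, and the left induction axiom (with $q=1$, $r=1+pp^*$) delivers $p^*\le 1+pp^*$, so antisymmetry finishes it; $1+p^*p=p^*$ is symmetric via right induction. The key technique already appears here: rather than splitting sums, I route every bound back into the shape $1+pp^*\le p^*$ of the axiom, never re-adding two copies of a term. The same move proves \emph{monotone-star}: assuming $p\le q$, I bound $1+pq^*\le 1+qq^*\le q^*$ using monotonicity and the axiom, so left induction gives $p^*\le q^*$.

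The workhorse for the remaining identities is \emph{star-rewrite} $pq=rp\to pq^*=r^*p$. For $pq^*\le r^*p$ I apply right induction with target $r^*p$, and the hypothesis $p+(r^*p)q\le r^*p$ holds as the \emph{equality} $p+r^*(rp)=(1+r^*r)p=r^*p$, using $pq=rp$ and the fixed-point law $1+r^*r=r^*$; the reverse $r^*p\le pq^*$ is the mirror computation via left induction, where $p+r(pq^*)=p+p(qq^*)=p(1+qq^*)=pq^*$. Antisymmetry then gives the equation, and crucially the troublesome sums collapse because each hypothesis is an identity rather than a genuine join. From star-rewrite the rest fall out cheaply: \emph{sliding} $(pq)^*p=p(qp)^*$ is the instance with first variable $p$, starred variable $qp$, and third variable $pq$ (so the premise holds by associativity); \emph{swap-star} is the instance with first variable $q$, starred variable $p$, third variable $p$ under $pq=qp$; and \emph{product-star} $1+p(qp)^*q=(pq)^*$ follows by rewriting $p(qp)^*q=(pq)(pq)^*$ with sliding and applying fixed-point.

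The remaining obstacle is \emph{denesting} $(p+q)^*=(p^*q)^*p^*=p^*(qp^*)^*$ (and, similarly, \emph{unrolling}). The equality of the two right-hand sides is immediate from sliding with first variable $p^*$ and starred variable $q$, so it suffices to prove $(p+q)^*=p^*(qp^*)^*$ by two inequalities. Here the non-idempotent bookkeeping is the most delicate, since the natural inductions produce sums that one is tempted to collapse with idempotence; I expect this to be the main difficulty. The remedy is again to arrange each induction hypothesis into the exact axiom form $1+xx^*\le x^*$ (with $x=p+q$ on one side and $x=qp^*$ on the other) so that no spurious duplication of terms is ever required, following the inductive $*$-semiring derivation of \cite{esik2004inductive}; the analogous even/odd decomposition handles \emph{unrolling} $(pp)^*(1+p)=p^*$.
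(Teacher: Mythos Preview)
Your proposal is correct and follows the same \'Esik-style strategy as the paper, but you reorganize the dependency order: you prove \emph{star-rewrite} early and obtain \emph{sliding}, \emph{swap-star}, and \emph{product-star} as instances of it, whereas the paper proves \emph{product-star} directly by two inductive-law applications, derives \emph{sliding} from \emph{product-star}, and only afterwards proves \emph{star-rewrite} and \emph{swap-star} independently. Your route is a little more economical, since one general lemma subsumes three. Two small slips to fix: first, $1+p^*p=p^*$ is \emph{not} symmetric to $1+pp^*=p^*$, because only $1+pp^*\le p^*$ is an axiom; the paper handles it by first deriving $p^*p\le pp^*$ via left induction from $p+p(pp^*)=p(1+pp^*)\le pp^*$, and separately $p^*\le 1+p^*p$ from the equality $1+p(1+p^*p)=1+(1+pp^*)p=1+p^*p$. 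Second, in your product-star step, sliding actually yields $p(qp)^*q=(pq)^*(pq)$ rather than $(pq)(pq)^*$, so you need precisely the $1+p^*p=p^*$ form of fixed-point there---which is why the first slip matters for your ordering.
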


It is known that NKA also has a natural semantic model, called rational power series, which is a special class of formal power series over $\natinf=\nat\cup\{\infty\}$. We present a brief introduction to them in \app{NKA:FPS} for interested readers.

\begin{remark}[Complexity related to NKA]
\citet{bloom2009axiomatizing} have proposed an algorithm to determine the equivalence of two rational power series, so the equational theory of NKA is decidable.
Meanwhile, a subset $1^*\K=\{1^*p ~:~ p\in\K\}$ satisfies the Kleene algebra axioms, and the equational theory of KA is \textsc{PSpace}-complete \cite{stockmeyer1973word}, thus equational theory of NKA is also \textsc{PSpace}-hard. However, by linking formal power series to weighted finite automata, \citet{eilenberg1974automata} shows that it is undecidable whether a given inequality $e\leq f$ holds in NKA. 
\end{remark}

\section{Quantum Path Model} \label{sec:path_model}

To address the infinity issue, we introduce a generalization of quantum superoperators in this section, named quantum path model, a sound model of NKA.
We include detailed quantum preliminaries in \sec{qInt:prelim}, introduce extended positive operators as a generalization of quantum states in \sec{qPath:epo}, define the quantum path model as an analog of the path integral in quantum mechanics in \sec{qPath:ap}, and embed quantum superoperators in the quantum path model in \sec{qPath:embed}. 
We recommend that first-time readers skip technical construction details in this section.

\subsection{Quantum Preliminaries} \label{sec:qInt:prelim}

We review basic notations from quantum information that are used in this paper. Curious readers should refer to \cite{nielsen_chuang_2010, watrous_2018} for more details. 

An $n$-dimensional Hilbert space $\H$
is essentially the space $\mathbb{C}^n$ of complex vectors.
We use Dirac's notation, $\ket{\psi}$, to denote a complex vector in $\mathbb{C}^n$. The inner product of $\ket{\psi}$ and $\ket{\varphi}$ is denoted by $\langle\psi|\varphi\rangle$,
which is the product of the Hermitian conjugate of $\ket{\psi}$, denoted by $\bra{\psi}$, and the vector $\ket{\varphi}$.

Linear operators between $n$-dimensional Hilbert spaces are represented by $n\times n$ matrices.
For example, the zero operator $O_{\H}$ and the identity operator $I_\H$ can be identified by the zero matrix and the identity matrix on $\H$.
The Hermitian conjugate of operator $A$ is denoted by $A^\dag$.
Operator $A$ is \emph{positive semidefinite} if for all vectors $\ket{\psi}\in\H$, $\bra{\psi}A\ket{\psi}\geq 0$.
The set of positive semidefinite operators over $\H$ is denoted by $\PO(\H).$
This gives rise to the \emph{L\"owner order} $\sqsubseteq$ among operators: $A\sqsubseteq B$ $\Leftrightarrow$ $B-A$ is positive semidefinite.

A \emph{density operator} $\rho$ is a positive semidefinite operator $\rho=\sum_i p_i\proj{\psi_i}$ where $\sum_i p_i=1, p_i>0$.
A special case $\rho=\proj{\psi}$ is conventionally denoted as $\ket{\psi}$.
A positive semidefinite operator $\rho$ on $\H$ is a \emph{partial} density operator if $\tr(\rho)\leq 1,$ where $\tr(\rho)$ is the matrix trace of $\rho$.
The set of partial density operators is denoted by $\D(\H)$.

The evolution of a quantum system can be characterized by a \emph{completely-positive} and \emph{trace-non-increasing} linear superoperator $\E$\footnote{A superoperator $\E$ is \emph{positive} if it maps from $\D(\H)$ to $\D(\H')$ for Hilbert spaces $\H, \H'$.
It is \emph{completely-positive} if for any Hilbert space $\A$, the superoperator $\E\otimes I_\A$ is positive.
It is \emph{trace-non-increasing} if for any initial state $\rho\in\D(\H)$, the final state $\E(\rho)\in \D(\H')$ satisfies $\tr(\E(\rho))\leq\tr(\rho)$.}, which is a mapping from $\D(\H)$ to $\D(\H')$ for Hilbert spaces $\H, \H'$.
We denote the set of such superoperators by $\CP(\H, \H').$
The special case when $\H'=\H$ is denoted by $\CP(\H)$.

For two superoperators $\E_1, \E_2\in\CP(\H),$ the composition is defined as $(\E_1\circ\E_2)(\rho)=\E_2(\E_1(\rho)).$ 
If there exists $\E$ and $\E_i\in\CP(\H)$ satisfying $\E(\rho)=\sum_i\E_i(\rho)$ for every $\rho\in\PO(\H),$ then we define $\E$ as $\sum_i\E_i$.
For every superoperator $\E\in\CP(\H, \H')$, by \cite{kraus1983states} there exists a set of Kraus operators $\{E_k\}_k$ such that $\E(\rho)=\sum_k E_k\rho E_k^\dag$ for any input $\rho\in\D(\H)$.
The Schr\"odinger-Heisenberg \emph{dual} of a superoperator $\E(\rho)=\sum_k E_k\rho E_k^\dag$ is $\E^\dag(\rho)=\sum_k E_k^\dag\rho E_k$.

A quantum \emph{measurement} on a system over Hilbert space $\H$ can be described by a set of linear operators $\{M_m\}_m$ where $\sum_m M_m^\dag M_m=I_\H$. 
The measurement outcome $m$ is observed with probability $p_m=\tr(M_m\rho M_m^\dag)$ for each $m$, which will collapse the pre-measure state $\rho$ to $\M_m(\rho)=M_m\rho M_m^\dag/p_m$.
A quantum measurement is \emph{projective} if $M_iM_j=M_i$ if $i=j$ and $O_{\H}$ otherwise. Namely, all $M_i$ are projective operators orthogonal to each other. 

\subsection{Extended Positive Operators} \label{sec:qPath:epo}

The set $\PO(\H)$ does not contain any infinity. We need to incorporate different infinities into it to distinguish different path sets which may lead to different divergent summations.

\begin{definition}
  A series of $\PO(\H)$ is a countable multiset of $\PO(\H)$, and can be written as $\biguplus_{i\in I} \rho_i,$ where $I$ is a countable index set. Symbol $\biguplus_{i\in I}$ enumerates every element $\rho_i$ in the multiset. The set of series of $\PO(\H)$ is denoted by $\S(\H)$.
  
  The \emph{union} of countably many series is denoted by:
  \begin{align}
    \biguplus\nolimits_{i\in I} \left(\biguplus\nolimits_{j\in J_i}\rho_{ij}\right)=\biguplus\nolimits_{(i, j):i\in I, j\in J_i}\rho_{ij}.
  \end{align}
  Note $\biguplus_{i\in I} \biguplus_{j\in J_i}\rho_{ij}\in\S(\H)$ since the index set is countable.
  
  A \emph{binary relation}  $\lesssim$ over $\S(\H)$ is defined by: $\biguplus_{i\in I}\rho_i \lesssim \biguplus_{j\in J}\sigma_j$ if and only if for every $\epsilon>0$ and finite $I'\subseteq I$, there exists a finite $J'\subseteq J$, such that
  \begin{align}
      \label{eq:pre-def}
      \sum\nolimits_{i\in I'}\rho_i\sqsubseteq \epsilon I_{\H}+\sum\nolimits_{j\in J'}\sigma_j.
  \end{align}
  
  We induce another binary relation $\sim$ from $\lesssim$ on $\S(\H)$ by:
  \begin{align*}
      \biguplus_{i\in I} \rho_i \sim \biguplus_{j\in J}\sigma_j \quad\Leftrightarrow\quad \biguplus_{i\in I}\rho_i \lesssim \biguplus_{j\in J}\sigma_j ~\wedge~ \biguplus_{j\in J}\sigma_j \lesssim \biguplus_{i\in I}\rho_i.
  \end{align*}
\end{definition}

Symbol $\biguplus_{i\in I}$ is employed to distinguish the series from the normal summation $\sum_{i\in I}$ over $\PO(\H)$. 
We will build connections between these two notions so that $\biguplus_{i\in I}$ can readily help us in the analysis of convergence, and more. 

We represent a finite series by enumerating its elements.
Like a series with one element $O_{\H}$, we denote it by $\{\!|O_{\H}|\!\}$. 

The definition of $\lesssim$ aims at a generalization to the L\"owner order in $\S(\H)$ that distinguishes the different infinities while preserving relations like $\{\!|I_{\H}|\!\}\lesssim \biguplus_{i>0} \frac{1}{2^i}I_{\H},$ whose correspondence in $\PO(\H)$ holds.

\begin{lemma}
  \label{lem:SHequiv}
  $\lesssim$ is a preorder, so $\sim$ is an equivalence relation.
\end{lemma}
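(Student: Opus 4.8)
The plan is to show that $\lesssim$ is reflexive and transitive; the claim that $\sim$ is an equivalence relation then follows from the standard fact that the symmetric part of a preorder (its intersection with its converse) is automatically reflexive, symmetric, and transitive. Throughout I would lean on three elementary properties of the L\"owner order $\sqsubseteq$ on $\PO(\H)$: that $\sqsubseteq$ is itself a partial order (in particular transitive), that it is translation-invariant (if $A\sqsubseteq B$ then $A+C\sqsubseteq B+C$ for any $C$, since $(B+C)-(A+C)=B-A$), and that $O_\H\sqsubseteq \epsilon I_\H$ for every $\epsilon>0$, so that $X\sqsubseteq \epsilon I_\H + X$ always holds.

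For reflexivity, given a series $\biguplus_{i\in I}\rho_i$, any $\epsilon>0$, and any finite $I'\subseteq I$, I would simply take $J'=I'$ on the right-hand side. Then $\sum_{i\in I'}\rho_i\sqsubseteq \epsilon I_\H + \sum_{i\in I'}\rho_i$ holds because $O_\H\sqsubseteq\epsilon I_\H$, which verifies \eq{pre-def} and gives $\biguplus_{i\in I}\rho_i\lesssim\biguplus_{i\in I}\rho_i$.

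For transitivity, suppose $\biguplus_{i\in I}\rho_i\lesssim\biguplus_{j\in J}\sigma_j$ and $\biguplus_{j\in J}\sigma_j\lesssim\biguplus_{k\in K}\tau_k$. Given $\epsilon>0$ and a finite $I'\subseteq I$, I would first apply the first relation with tolerance $\epsilon/2$ and finite set $I'$ to obtain a finite $J'\subseteq J$ with $\sum_{i\in I'}\rho_i\sqsubseteq \tfrac{\epsilon}{2}I_\H+\sum_{j\in J'}\sigma_j$. I would then feed this very same $J'$ into the second relation with tolerance $\epsilon/2$ to obtain a finite $K'\subseteq K$ with $\sum_{j\in J'}\sigma_j\sqsubseteq \tfrac{\epsilon}{2}I_\H+\sum_{k\in K'}\tau_k$. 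Adding $\tfrac{\epsilon}{2}I_\H$ to both sides of the latter and chaining the two inequalities via transitivity of $\sqsubseteq$ yields $\sum_{i\in I'}\rho_i\sqsubseteq \epsilon I_\H+\sum_{k\in K'}\tau_k$, which is exactly \eq{pre-def} for $K'$, so $\biguplus_{i\in I}\rho_i\lesssim\biguplus_{k\in K}\tau_k$.

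The argument poses no genuine obstacle; the only point demanding care is the $\epsilon$-bookkeeping in transitivity, where one must split the single tolerance $\epsilon$ into two halves so that the slack accumulated across the two steps does not exceed $\epsilon$, and one must ensure the finite witness $J'$ produced in the first step is precisely the finite set supplied to the defining condition in the second step. Once $\lesssim$ is established as a preorder, $\sim$ is its symmetric part, and hence an equivalence relation, which completes the proof.
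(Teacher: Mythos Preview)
Your proposal is correct and follows essentially the same approach as the paper's proof: reflexivity via $J'=I'$, and transitivity via the $\epsilon/2$ split to chain the two witnessed inequalities through the intermediate finite set $J'$. The only difference is that you spell out the elementary L\"owner-order facts (translation invariance, $O_\H\sqsubseteq\epsilon I_\H$) more explicitly than the paper does.
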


The proof of this lemma along with several basic facts about $\S(\H)$ is in \app{SHfacts}.

\begin{definition}
  We define the extended positive operators $\Seqv(\H)=\S(\H)/\sim$ as the set of equivalence classes of $\sim$. Let the equivalence class including $\biguplus_{i\in I}\rho_i$ be
  \begin{align}
      \eqv{\biguplus\nolimits_{i\in I}\rho_i}=\left\{\biguplus\nolimits_{j\in J}\sigma_j ~\vline~ \biguplus\nolimits_{j\in J}\sigma_j\sim\biguplus\nolimits_{i\in I}\rho_i\right\},
  \end{align}
  where on the right hand side is a set of series.
  
  A partial order $\leq$ over $\Seqv(\H)$ is induced from the preorder $\lesssim$ over $\S(\H)$ by:
  \begin{align}
      \eqv{\biguplus\nolimits_{i\in I}\rho_i}\leq\eqv{\biguplus\nolimits_{j\in J}\sigma_j} ~~\Leftrightarrow~~ \biguplus\nolimits_{i\in I}\rho_i\lesssim\biguplus\nolimits_{j\in J}\sigma_j.
  \end{align}
  
  We define countable summation over $\Seqv(\H)$ from the union in $\S(\H)$ by
  \begin{align} \label{eqn:summation}
      \sum\nolimits_{i\in I} \eqv{\biguplus\nolimits_{j\in J_i}\rho_{ij}}=\eqv{\biguplus\nolimits_{i\in I} \biguplus\nolimits_{j\in J_i}\rho_{ij}}.
  \end{align}
\end{definition}

The summation defined above is independent of the choices of $\biguplus_{j\in J_i}\rho_{ij}$ because of \lem{S-biguplus-order}.

We slightly abuse notation, writing $\eqv{\rho}$ to represent $\eqv{\{\!|\rho|\!\}}$ for $\rho\in\PO(\H)$. A frequently used case of \eqref{eqn:summation} is to write the equivalence class of a series as  
\begin{align}
    \eqv{\biguplus\nolimits_{i\in I}\rho_i}=\sum\nolimits_{i\in I}\eqv{\rho_i},
\end{align}
where we can intuitively deem the countable summation over $\Seqv(\H)$ as a generalized summation over $\PO(\H).$ 
For example, we have $\sum_{i>0}\left[\frac{1}{2^i}I_{\H}\right]=\left[\sum_{i>0}\frac{1}{2^i}I_{\H}\right]=[I_{\H}]$ according to \lem{convergent-sum}.

\begin{remark}
$\PO(\H)$ is embedded in $\Seqv(\H)$ by $\rho\mapsto[\rho]$ as finite positive operators. Besides these, $\Seqv(\H)$ contains distinguishable divergent summations unattainable by $\PO(\H)$: e.g., $\sum_{i>0}[\ket{0}\bra{0}]$ is different from $\sum_{i>0}[\ket{1}\bra{1}]$, and less than $\sum_{i>0}[I_{\H_2}].$ These divergent summations are leveraged to depict the domain and the range of our extended quantum superoperators.
\end{remark}

\subsection{Quantum Actions}\label{sec:qPath:ap}

We are now ready to introduce quantum actions, a generalization of superoperators in the quantum path model, inspired by the path integral formulation of quantum mechanics. 

\begin{definition}
  A \emph{quantum action}, or \emph{action} for simplicity, over $\Seqv(\H)$ is a mapping from $\Seqv(\H)$ to $\Seqv(\H)$.
  
  A quantum action $\T$ is \emph{linear} if for series $\sum_{j\in J_i}\eqv{\rho_{ij}},$
  \begin{align}
      \T\left(\sum\nolimits_{i\in I} \sum\nolimits_{j\in J_i}\eqv{\rho_{ij}}\right)=\sum\nolimits_{i\in I}\T\left(\sum\nolimits_{j\in J_i}\eqv{\rho_{ij}}\right).
  \end{align}
  
  A quantum action $\T$ is \emph{monotone} if for any two series $\sum_{i\in I}\eqv{\rho_{i}}\leq\sum_{j\in J}\eqv{\sigma_j},$
  \begin{align}
      \T\left(\sum\nolimits_{i\in I}\eqv{\rho_i}\right)\leq\T\left(\sum\nolimits_{j\in J}\eqv{\sigma_j}\right).
  \end{align}
  
  We denote the set of linear and monotone quantum actions over $\Seqv(\H)$ by $\LT(\H)$ as the set of \emph{quantum path actions}. 
  
  The zero action $\O_{\H}$ maps everything to $\eqv{O_{\H}},$ and the identity action is denoted by $\I_{\H}.$ 
\end{definition}

A physical interpretation of quantum path actions in $\LT(\H)$ is the collection of quantum evolution along a single or many possible trajectories of the underlying system. 
Thus, one can readily define the composition and the sum of quantum path actions, as the concatenation and the union of trajectories.  

\begin{definition}
  We define the operations in $\LT(\H)$ by:
  \begin{align}
      \left(\sum_{i\in I} \T_i\right)\left(\sum_{j\in J}\eqv{\rho_j}\right)&=\sum_{i\in I} \T_i\left(\sum_{j\in J}\eqv{\rho_j}\right), \\
      (\T_1; \T_2)\left(\sum\nolimits_{j\in J}\eqv{\rho_j}\right)&=\T_2\left(\T_1\left(\sum\nolimits_{j\in J}\eqv{\rho_j}\right)\right), \\
      \T^*&=\sum\nolimits_{i\geq 0} \T^i.
  \end{align}
  Here $\T^i=\I_{\H}; \T; \T; \cdots; \T$ where $\T$ repeats $i$ times. 
  
  Additionally, we define $\T_1\diamond \T_2=\T_2; \T_1.$
  
  A point-wise partial order $\preceq$ in $\LT(\H)$ is induced point-wisely: $\T_1\preceq \T_2$ if and only if
  \begin{align}
       \forall \sum\nolimits_{i\in I}\eqv{\rho_i}, ~\T_1\left(\sum\nolimits_{i\in I}\eqv{\rho_i}\right)\leq\T_2\left(\sum\nolimits_{i\in I}\eqv{\rho_i}\right).
  \end{align}
\end{definition}

Our main result is that $\LT(\H)$ with the above partial order and operations satisfies the axioms of NKA. 
The proof is postponed to \app{PH-thms}.
Since infinite summations are well-defined over quantum path actions, any NKA derivation safely induces a derivation over quantum path actions without worrying about the infinity issue.

\begin{theorem}
  \label{thm:QS-NKA}
  The NKA axioms are sound for the \emph{quantum path model}, defined by $(\LT(\H), +, ~;, *, \preceq, \O_{\H}, \I_{\H})$. Here $+$ is the $\sum_i$ operation restricted on two operands.
\end{theorem}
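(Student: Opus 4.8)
The plan is to verify each NKA axiom of \fig{axiom-ka-nka} directly in the model $(\LT(\H), +, ;, *, \preceq, \O_\H, \I_\H)$, working pointwise through the evaluation of an action on an element $\sum_{i\in I}\eqv{\rho_i}$ of $\Seqv(\H)$. Before touching the axioms I would first establish closure of $\LT(\H)$ under $+$, $;$ and $*$: the composite and the (possibly infinite) sum of linear monotone actions must again be linear and monotone, so that $\T^*=\sum_{i\geq0}\T^i$ lies in $\LT(\H)$. Linearity of a composite follows by pushing the inner action's linearity outward and then the outer action's; linearity of a sum needs the reindexing/commutativity of countable summation in $\Seqv(\H)$, which holds because $+$ is defined by order-independent multiset union. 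Monotonicity of $;$ is the two-step chain $\T_2(\T_1(x))\leq\T_2(\T_1'(x))\leq\T_2'(\T_1'(x))$, using monotonicity of $\T_2$ and then $\T_2\preceq\T_2'$.

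For the semiring laws, all eight identities reduce to pointwise equalities in $\Seqv(\H)$. Associativity, commutativity, and the additive identity come straight from countable summation in $\Seqv(\H)$; $0p=p0=0$ follows because linearity forces $\T(\eqv{O_\H})=\eqv{O_\H}$ (take an empty index set) while $\O_\H$ absorbs on the other side; $1p=p1=p$ is immediate from $\I_\H$ being the identity map. The left-distributive law $p(q+r)=pq+pr$ holds for arbitrary actions, since applying the first factor and then splitting the outer sum needs no linearity; the right-distributive law $(p+q)r=pr+qr$ is exactly where linearity of the right factor enters, as $\T_r(\T_p(x)+\T_q(x))=\T_r(\T_p(x))+\T_r(\T_q(x))$. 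The partial-order laws are inherited pointwise: $\preceq$ is a partial order because $\leq$ on $\Seqv(\H)$ is one (\lem{SHequiv}), and the two monotonicity implications follow from monotonicity of $+$ in $\Seqv(\H)$ and of the actions themselves.

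The substantive work is the three star laws. For the fixed-point inequality $\I_\H+\T;\T^*\preceq\T^*$ I would in fact prove equality: distributing $;$ over the infinite sum (licensed by the pointwise definition of sums of actions) gives $\T;\T^*=\sum_{i\geq1}\T^i$, whence $\I_\H+\T;\T^*=\T^0+\sum_{i\geq1}\T^i=\T^*$. For the induction law $q+pr\leq r\rightarrow p^*q\leq r$, the hypothesis reads $\T_q+\T_p;\T_r\preceq\T_r$, yielding $\T_q\preceq\T_r$ and $\T_p;\T_r\preceq\T_r$. A finite induction on $n$ then shows every truncation satisfies $\sum_{i=0}^{n}(\T_p^i;\T_q)\preceq\T_r$: the base case is $\T_q\preceq\T_r$, and the step rewrites $\sum_{i=0}^{n+1}\T_p^i;\T_q=\T_q+\T_p;\bigl(\sum_{i=0}^{n}\T_p^i;\T_q\bigr)\preceq\T_q+\T_p;\T_r\preceq\T_r$, using associativity, monotonicity of $;$, and the hypothesis. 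The symmetric law $q+rp\leq r\rightarrow qp^*\leq r$ is handled identically with the factors on the other side.

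The main obstacle is the final passage from all finite truncations to the infinite sum, i.e. concluding $\T_p^*;\T_q=\sum_{i\geq0}(\T_p^i;\T_q)\preceq\T_r$ from each finite $\sum_{i=0}^{n}(\T_p^i;\T_q)\preceq\T_r$. This is exactly what the finite-subset formulation of $\lesssim$ in \eq{pre-def} is designed to deliver: testing $\biguplus_i\rho_i\lesssim\biguplus_j\sigma_j$ quantifies only over finite $I'\subseteq I$, so a countable sum in $\Seqv(\H)$ is the least upper bound of its finite truncations, and the inductive bound propagates to the limit. Establishing this supremum property of countable sums (together with the reordering fact used for closure) is where the real care lies; once it is in hand, every star-law verification becomes routine.
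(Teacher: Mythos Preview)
Your proposal is correct and follows essentially the same route as the paper: establish closure (the paper's \lem{P-oper-closed}), verify the semiring and order axioms pointwise, prove the fixed-point law as an equality, and handle the inductive star laws by bounding all finite truncations and then passing to the limit. The paper packages that last step as an explicit $*$-continuity property of $\LT(\H)$ (backed by \lem{S-*-continuity}) and then derives both inductive laws from it together with $0\leq p$, whereas you argue directly from the finite-subset formulation of $\lesssim$; the content is the same.
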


\subsection{Embedding of $\CP(\H)$ in $\LT(\H)$}\label{sec:qPath:embed}

We mentioned the intuition that quantum path actions are generalizations of quantum superoperators in the quantum path model.  
We now make it precise by building an embedding from quantum superoperators to quantum path actions (and hence the quantum path model),
which allows us to prove superoperator equations via NKA theorems.

\begin{definition}
  \emph{Path lifting} is a mapping from $\E\in\CP(\H)$ to a quantum path action $\cp{\E}:\sum_{i\in I}\eqv{\rho_i}\mapsto\sum_{i\in I}\eqv{\E(\rho_i)}.$
\end{definition}

$\cp{\E}$ is well-defined (it does not depend on the choices of $\sum_{i\in I}[\rho_i]$) because of \lem{E-preorder}.

The path lifting embeds $\CP(\H)$ in $\LT(\H)$ by the following lemma, whose proof is routine and in \app{lift-embed}.

\begin{lemma}
\label{lem:PL-embed}
  The path lifting has the following properties:
  \begin{enumerate}[label=(\roman*), ref={\ref{lem:PL-embed}.(\roman*)}]
      \item $\cp{\E}\in \LT(\H),$ for $\E\in\CP(\H).$ 
      \item \label{lem:E-embedding}
      $\E_1=\E_2 \Leftrightarrow \cp{\E_1}=\cp{\E_2},$ for $\E_1, \E_2\in\CP(\H).$
      \item \label{lem:struct-preserving}
  operations $\circ$ and $\sum_i$ (when defined) in $\CP(\H)$ are preserved by path lifting as $;$ and $\sum_i$ operations in $\LT(\H)$.
  \end{enumerate}
\end{lemma}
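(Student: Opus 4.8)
The plan is to verify the three clauses in order, using throughout that the path lifting is well defined on equivalence classes by \lem{E-preorder} and that the embedding $\rho\mapsto\eqv{\rho}$ of $\PO(\H)$ into $\Seqv(\H)$ is \emph{faithful}. The latter follows by unfolding $\lesssim$ with $I'$ and $J'$ taken to be the full singleton index sets: for $A,B\in\PO(\H)$, $\{\!|A|\!\}\lesssim\{\!|B|\!\}$ holds iff $A\sqsubseteq \epsilon I_\H+B$ for every $\epsilon>0$, i.e.\ $A\sqsubseteq B$, so $\eqv{A}=\eqv{B}$ iff $A=B$. This faithfulness of singletons and the convergent-sum identity of \lem{convergent-sum} are the two facts that carry most of the weight.

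For clause (i), linearity of $\cp{\E}$ is immediate from its component-wise definition together with the definition of countable summation in $\Seqv(\H)$: both $\cp{\E}\bigl(\sum_i\sum_{j\in J_i}\eqv{\rho_{ij}}\bigr)$ and $\sum_i\cp{\E}\bigl(\sum_{j\in J_i}\eqv{\rho_{ij}}\bigr)$ equal $\eqv{\biguplus_i\biguplus_{j\in J_i}\E(\rho_{ij})}$. For monotonicity, suppose $\sum_{i\in I}\eqv{\rho_i}\le\sum_{j\in J}\eqv{\sigma_j}$, i.e.\ $\biguplus_i\rho_i\lesssim\biguplus_j\sigma_j$. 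Fix $\epsilon>0$ and finite $I'\subseteq I$; the hypothesis yields a finite $J'\subseteq J$ with $\sum_{i\in I'}\rho_i\sqsubseteq \epsilon I_\H+\sum_{j\in J'}\sigma_j$. Since $\E$ is completely positive it is positive, hence monotone for the L\"owner order and linear on operators, so
\begin{align*}
    \sum\nolimits_{i\in I'}\E(\rho_i)\sqsubseteq \epsilon\,\E(I_\H)+\sum\nolimits_{j\in J'}\E(\sigma_j).
\end{align*}
Here I expect the only real subtlety: the right-hand side carries $\epsilon\,\E(I_\H)$ rather than a multiple of $I_\H$. Because $\E(I_\H)\in\PO(\H)$ is a fixed bounded operator, $\E(I_\H)\sqsubseteq C\,I_\H$ with $C=\norm{\E(I_\H)}$ finite (in fact $C\le\tr(\E(I_\H))\le\dim\H$ by trace-non-increasingness), so $\epsilon\,\E(I_\H)\sqsubseteq \epsilon C\,I_\H$. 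As $\epsilon>0$ is arbitrary the constant $C$ is harmless, and this witnesses $\biguplus_i\E(\rho_i)\lesssim\biguplus_j\E(\sigma_j)$, i.e.\ $\cp{\E}\bigl(\sum_i\eqv{\rho_i}\bigr)\le\cp{\E}\bigl(\sum_j\eqv{\sigma_j}\bigr)$.

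Clause (ii) is then short. The forward implication is trivial. Conversely, if $\cp{\E_1}=\cp{\E_2}$, then evaluating at each singleton gives $\eqv{\E_1(\rho)}=\eqv{\E_2(\rho)}$ for every $\rho\in\PO(\H)$, and by the faithfulness noted above this forces $\E_1(\rho)=\E_2(\rho)$ on all of $\PO(\H)$. Since $\PO(\H)$ spans the full operator space (every Hermitian operator is a difference of positive ones, and every operator a complex combination of Hermitians) and superoperators are linear, $\E_1=\E_2$.

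Clause (iii) is a direct computation. For composition, using $(\E_1\circ\E_2)(\rho)=\E_2(\E_1(\rho))$ and $(\T_1;\T_2)(x)=\T_2(\T_1(x))$, both $\cp{\E_1\circ\E_2}$ and $\cp{\E_1};\cp{\E_2}$ send $\sum_i\eqv{\rho_i}$ to $\sum_i\eqv{\E_2(\E_1(\rho_i))}$, so they agree. For summation, whenever $\E=\sum_i\E_i$ is defined---meaning $\E(\rho)=\sum_i\E_i(\rho)$ converges in $\PO(\H)$ for every $\rho$---I compute $(\sum_i\cp{\E_i})(\eqv{\rho})=\sum_i\eqv{\E_i(\rho)}$ and $\cp{\E}(\eqv{\rho})=\eqv{\sum_i\E_i(\rho)}$, which coincide by the convergent-sum identity $\sum_i\eqv{\E_i(\rho)}=\eqv{\sum_i\E_i(\rho)}$ of \lem{convergent-sum}; linearity from clause (i) then extends this equality from singletons to all of $\Seqv(\H)$. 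The main obstacle throughout is the monotonicity argument in clause (i), where the $\epsilon I_\H$ slack must be transported through $\E$; everything else reduces to the faithfulness of the singleton embedding and the convergent-sum lemma.
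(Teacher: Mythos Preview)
Your proof is correct and follows essentially the same approach as the paper's. One minor observation: for monotonicity in clause~(i) you cite \lem{E-preorder} for well-definedness and then reprove its content inline for monotonicity, but these are the same fact (preservation of $\lesssim$ by $\E$), so the paper simply invokes \lem{E-preorder} once; your upfront faithfulness argument for singletons in clause~(ii) is actually slightly cleaner than the paper's case split on whether $\E_1(\rho)=O_\H$, but the substance is identical.
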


\section{Quantum Interpretation and Quantum Programs} \label{sec:qInt}

In this section, we link expressions, quantum path actions and quantum programs by quantum interpretation (\sec{qInt:interpretation}) and encoding (\sec{qInt:encoding}). 

\subsection{Quantum Interpretation} \label{sec:qInt:interpretation}

We endow equations in NKA with quantum interpretations.

\begin{definition} 
  A \emph{quantum interpretation setting} over an alphabet $\Sigma$ is a pair $\intp=(\H, \eval)$ where
  \begin{enumerate}
  \item $\H$ is a finite dimensional Hilbert space.
  \item $\eval:\Sigma \rightarrow \CP(\H)$ is a function to interpret symbols.
  \end{enumerate}
  
  The \emph{quantum interpretation} $\Qint$ w.r.t. a quantum interpretation setting $\intp$ is a mapping from $\expsig$ to $\LT(\H)$ where
  \begin{align*}
      &\Qint(0)=\O_{\H}, & \Qint(e+f)&=\Qint(e) + \Qint(f), \\
      &\Qint(1)=\I_{\H}, & \Qint(e\cdot f)&=\Qint(e); \Qint(f), \\
      &\Qint(a)=\cp{\eval(a)}, & \Qint(e^*)&=\Qint(e)^*.
  \end{align*}
  Here $a\in\Sigma,$ and $\cp{\eval(a)}$ is the path lifting of $\eval(a)$.
\end{definition}

\begin{theorem}
\label{thm:q-model-sound-complete}
  The axioms of NKA are sound and complete w.r.t. the quantum interpretation. That is, for any $e, f\in\expsig$,
  \begin{align}
    \NKAmodel e=f \quad \Leftrightarrow \quad \forall \intp, \Qint(e)=\Qint(f).
  \end{align}
\end{theorem}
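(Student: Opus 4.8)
The plan is to leverage \thm{QS-NKA}, which already establishes that the quantum path model $(\LT(\H), +, ;, *, \preceq, \O_{\H}, \I_{\H})$ satisfies the NKA axioms for every fixed $\H$. Since $\Qint$ is defined as a structure-preserving map from $\expsig$ into $\LT(\H)$ (sending $+,\cdot,{*}$ to $+,;,{*}$ and the constants to $\O_{\H}, \I_{\H}$), any equation derivable in NKA's equational logic must hold under $\Qint$: one simply replays the derivation inside $\LT(\H)$, where each axiom instance is valid by \thm{QS-NKA}. Hence $\NKAmodel e=f$ implies $\Qint(e)=\Qint(f)$ for every interpretation setting $\intp$. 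This direction is routine once \thm{QS-NKA} is in hand.

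\textbf{Completeness ($\Leftarrow$).} This is the substantive direction, and the strategy is to reduce it to the known completeness of NKA with respect to its standard algebraic semantics (the rational power series model $\ratps$ mentioned in \app{NKA:FPS}, for which completeness is cited via \cite{bloom2009axiomatizing, esik2004inductive}). The contrapositive is the natural route: assuming $\NKAmodel e=f$ fails, I would produce a single quantum interpretation setting $\intp=(\H,\eval)$ witnessing $\Qint(e)\neq\Qint(f)$. By the completeness of NKA over the power series model, $\not\NKAmodel e=f$ means the two expressions denote distinct rational power series over $\natinf$, i.e.\ there is some word $w\in\Sigma^*$ on which the coefficients of $e$ and $f$ differ (possibly one being finite and the other $\infty$). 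The crux is then to \emph{realize} the rational power series semantics faithfully inside the quantum path model, so that a coefficient mismatch on some word forces an observable difference between $\Qint(e)$ and $\Qint(f)$ as quantum path actions.

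\textbf{The key construction.} To realize power series coefficients quantumly, I would build an interpretation that behaves like a ``free'' or ``generic'' assignment, recording word multiplicities as distinguishable divergent summations in $\Seqv(\H)$. The \emph{Complexity} remark already foreshadows this by noting that $1^*\K$ recovers KA, and Remark~\ref{rm:completeness_NKA} (referenced in the text) indicates the completeness proof constructs an essentially classical probabilistic process. Concretely, I expect to encode each letter $a\in\Sigma$ as a superoperator that shifts or tags a computational-basis index so that a word $w$ maps to a distinct basis state, and to exploit the extended positive operators precisely so that the \emph{weight} (number of traces yielding $w$, possibly $\infty$) is preserved and distinguishable — this is exactly the role the Remark after the $\Seqv(\H)$ definition highlights, that $\sum_{i>0}[\ket{0}\bra{0}]$, $\sum_{i>0}[\ket{1}\bra{1}]$, and $\sum_{i>0}[I_{\H_2}]$ are mutually distinct. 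Applying such an action to a seed state like $\eqv{\proj{\text{start}}}$ should yield $\sum_{w}\,c_w\,\eqv{\sigma_w}$ where $c_w\in\natinf$ is the power series coefficient of the expression on $w$ and the $\sigma_w$ are linearly independent markers, so that equality of the resulting extended-positive-operator outputs is equivalent to equality of all coefficients.

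\textbf{Main obstacle.} The hardest part will be handling the $\infty$ coefficients and the $*$ operation faithfully: a word may be produced by infinitely many distinct traces (through starred subexpressions), and I must ensure the embedding distinguishes an infinite coefficient from every finite one and from other infinite ones where the power series model does, without collapsing them under the $\sim$ equivalence on $\S(\H)$. This requires choosing $\H$ (likely of dimension depending on $e,f$, e.g.\ indexed by subexpressions or automaton states) and markers $\sigma_w$ so that the $\lesssim$ preorder on the output exactly mirrors the coefficient-wise order on $\natinf$, and verifying that $\cp{\cdot}$, $;$, $+$, and $*$ in $\LT(\H)$ track the corresponding power series operations under this encoding — in effect transporting the weighted-automaton/power-series semantics into the path model and checking, via \lem{PL-embed} and the convergence lemmas (\lem{convergent-sum}, \lem{S-biguplus-order}), that no information is lost. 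Once this faithful realization is established, a single witnessing $\intp$ follows immediately, completing the contrapositive.
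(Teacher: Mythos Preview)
Your proposal is correct and follows essentially the same route as the paper: soundness via \thm{QS-NKA}, and completeness by reducing to the rational power series model (\thm{rps-sound-complete}) through a ``word-tagging'' interpretation in which each letter $a$ acts as the superoperator $\rho\mapsto\sum_s K_{a,s}\rho K_{a,s}^\dagger$ with $K_{a,s}\propto\ket{sa}\bra{s}$, so that applying $\Qint(e)$ to $\eqv{\proj{\epsilon}}$ produces $\sum_{w}\sum_{k=1}^{\rpssem{e}[w]}\eqv{c_w\proj{w}}$ and a coefficient mismatch is visible in $\Seqv(\H)$. The one concrete point you leave slightly vague is how to keep $\H$ finite-dimensional as required by the definition of an interpretation setting: the paper handles this by truncating to words of length at most $n$ (so $\H=\mathrm{span}\{\ket{s}:|s|\le n\}$) and letting $n$ range over $\nat$, which in your contrapositive framing amounts to taking $n=|w|$ for the distinguishing word $w$; also note that $\natinf$ has a single $\infty$, so you need not worry about separating ``different infinities'' on the power-series side.
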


The soundness comes directly from \thm{QS-NKA}. The completeness proof makes use of formal power series and is postponed to \app{q-model-sc}.
This result indicates that equations of NKA are all possible tautologies when atomic symbols are interpreted as any (lifted) quantum superoperator. These equations and interpretations do not necessarily correspond to quantum programs, so further exploitation of algebraic structures specifically for quantum programs is possible.

\begin{remark} \label{rm:completeness_NKA}
The completeness proof constructs interpretations with probabilistic processes only. It suggests that quantum processes have similar algebraic behaviors to probabilistic processes when probabilities are implicit (abstracted inside atomic operations). This is valid when measurements are not distinguished from other processes. We will discuss additional axioms for quantum measurements in \sec{NKAT}.
\end{remark}

Most of the derived rules in our applications rely on external hypotheses aside from the NKA axioms. A formula with inequalities as hypotheses is called a Horn clause. We present the relation of the Horn theorems of NKA and quantum interpretations by the following theorem. 

\begin{corollary}
\label{cor:primal-int}
  For expressions $\{e_i\}_{i=1}^n, \{f_i\}_{i=1}^n\subset\expsig$ and $e, f\in\expsig$, if
  \begin{align}
      \NKAmodel \left(\bigwedge_{i=1}^n e_i\leq f_i\right)\rightarrow e\leq f,
  \end{align}
  and $\intp=(\H, \eval)$ satisfies $\Qint(e_i)\preceq \Qint(f_i)$ for $1\leq i\leq n$, then $\Qint(e)\preceq\Qint(f).$
\end{corollary}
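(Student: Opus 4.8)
The plan is to treat this as a soundness statement for Horn-clause derivations with respect to the quantum path model, extending the $(\Rightarrow)$ direction of \thm{q-model-sound-complete}. The two facts that make this essentially automatic are already in hand. First, by \thm{QS-NKA} the structure $(\LT(\H), +, ;, *, \preceq, \O_\H, \I_\H)$ is itself a model of all NKA axioms; in particular $\preceq$ is a genuine partial order preserved by $+$, $;$ and $*$, and the star laws hold in it as quasi-equations. Second, $\Qint$ is by definition a homomorphism: it sends $0,1$ to $\O_\H, \I_\H$ and commutes with $+$, $\cdot$, $*$. Everything follows from combining these.

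First I would unfold the hypothesis $\NKAmodel \left(\bigwedge_i e_i\leq f_i\right)\rightarrow e\leq f$ into the existence of a finite derivation of $e\leq f$ in equational logic, taking the NKA axioms together with the $n$ premises $e_i\leq f_i$ as available assumptions. Such a derivation is a finite sequence of (in)equalities between expressions, each line being either (a) an instance of an NKA axiom, (b) one of the hypotheses $e_i\leq f_i$, or (c) obtained from earlier lines by a rule of equational logic -- reflexivity, antisymmetry, transitivity, or a congruence/substitution step through $+$, $\cdot$, $*$ (with the star-induction laws applied as conditional axioms once their premise has been derived).

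Next I would transport the derivation through $\Qint$ and prove, by induction on its length, that for every derived line $g\leq h$ we have $\Qint(g)\preceq \Qint(h)$ in $\LT(\H)$. For base case (a), \thm{QS-NKA} guarantees each axiom instance holds in the model, and since $\Qint$ is a homomorphism the interpreted statement is exactly such an instance. For base case (b), the assumption $\Qint(e_i)\preceq \Qint(f_i)$ delivers the conclusion directly. For the inductive steps (c), each equational-logic rule is validated by the corresponding algebraic property of $\LT(\H)$ as an NKA: transitivity and antisymmetry of $\preceq$, and monotonicity of $+$, $;$, $*$ under $\preceq$. Applied to the final line $e\leq f$, this yields $\Qint(e)\preceq \Qint(f)$, which is the claim; note that only soundness of \thm{QS-NKA} is used, never completeness.

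The one point needing care is the conditional (quasi-equational) axioms -- the two star-induction laws and the monotonicity laws -- since these are implications rather than plain identities. When such an axiom is invoked, its premise already occurs as an earlier derived line whose interpreted validity is furnished by the induction hypothesis, and the model then discharges the implication because it satisfies the quasi-equation. Beyond this bookkeeping I expect no genuine obstacle: the corollary is a direct soundness consequence of \thm{QS-NKA} together with the homomorphism property of $\Qint$.
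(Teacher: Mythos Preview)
Your proposal is correct and takes essentially the same approach as the paper: both argue by induction on the NKA derivation of $e\leq f$, using \thm{QS-NKA} for the soundness of each axiom instance in $\LT(\H)$ and the assumed $\Qint(e_i)\preceq\Qint(f_i)$ for the premise steps. The paper's proof is simply a two-sentence sketch of exactly this argument, whereas you spell out the base cases and the treatment of the conditional (quasi-equational) axioms more explicitly.
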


Note that the inequalities above can be replaced by equations, using the fact that $p=q\leftrightarrow p\leq q\wedge q\leq p$.

\begin{proof}
  The proof comes from \thm{QS-NKA} similarly. Along the derivation of $e\leq f$, we apply the NKA axioms and premises $e_i\leq f_i$ for $1\leq i\leq n$. The soundness of $e\leq f$ comes from the soundness of NKA axioms, proved in \thm{QS-NKA}, and the soundness of each premises, provided by the assumption $\Qint(e_i)\preceq \Qint(f_i)$ for each $e_i\leq f_i$.
\end{proof}

\subsection{Encoding of Quantum Programs} \label{sec:qInt:encoding}

The syntax of a \emph{\quwhile~program}, also called a program for simplicity, $P$ is defined as follows.
\footnote{The $\cskip$ statement does nothing and terminates. The $\cfail$ statement announces that the program fails, and halts the program without any result. Statement $q:=\ket{0}$ resets the register $q$ to $\ket{0}$, and $\overline{q}:=U[\overline{q}]$ applies a unitary operation on register set $\overline{q}$. These four statements' denotational semantics are called elementary superoperators. 
Note that there is no assignment statement due to the quantum no-cloning theorem~\cite{no-cloning}.
The loop $\qwhile{M[\overline{q}]=1}{P_1}$ executes repeatedly. Each time it measures $\overline{q}$ by $M$. If the measurement result is $1$, it executes $P_1$ and then starts over. Otherwise, it terminates.
When there are only two branches, we define syntax sugar $\mathbf{if~}M[\overline{q}]=1\mathbf{~then~}P_1\mathbf{~else~}P_2$ as an alternative to $\qif{M[\overline{q}]\rightarrow^{i} P_i}$. Moreover, if $P_2\equiv\cskip$, we write $\mathbf{if~}M[\overline{q}]=1\mathbf{~then~}P_1.$}
\begin{align*}
 P \enskip ::= & \enskip \cskip 
                \enskip | \enskip \cfail
                \enskip | \enskip q:=\ket{0} 
                \enskip | \enskip \overline{q}:=U[\overline{q}] \enskip | \enskip  P_1;P_2 \enskip | \enskip \\
              & \enskip \qif{M[\overline{q}]\xrightarrow{i} P_i} \enskip | \enskip \qwhile{M[\overline{q}]=1}{P_1}.
\end{align*}

The denotational semantics of $P$ is a quantum superoperator, denoted by $\sem{P}$. \citet{Ying16} proves that:
\begin{align*}
    &\sem{\cskip}(\rho)=\rho,  \quad \sem{\qif{M[\overline{q}]\xrightarrow{i} P_i}}=\sum_i\M_i\circ \sem{P_i}, \\
    &\sem{\cfail}(\rho)=O_{\H}, \quad \sem{q:=\ket{0}}(\rho)=\sum_{i}\ket{0}_q\bra{i}\rho\ket{i}_q\bra{0}, \\
    &\sem{P_1; P_2}=\sem{P_1}\circ\sem{P_2}, \quad \sem{\overline{q}:=U[\overline{q}]}(\rho)=U_{\overline{q}}\rho U_{\overline{q}}^{\dagger}, \\
    &\sem{\qwhile{M[\overline{q}]=1}{P}}=\sum_{n\geq 0} ((\M_1\circ \sem{P})^n\circ\M_0), 
\end{align*}
where for a quantum measurement $\{M_i\}_{i\in I}, \M_i$ is defined by $\M_i(\rho)=M_i\rho M_i^{\dagger}.$ Both $\circ$ and $\sum_i$ are operations over quantum superoperators. 

We formally define how to encode a quantum program as an expression, and how to recover the denotational semantics of a quantum program from an expression.

\begin{definition}
  An \emph{encoder setting} is a mapping $E$ from a finite subset of $\CP(\H)$ to $\Sigma$, that assigns a unique symbol in $\Sigma$ to the elementary superoperators (qubit resetting, unitary application, and measurement branches) in the target programs.

  The encoder $\enc$ of a program to $\expsig$ with respect to an encoder setting $E$ is defined inductively by:
  \begin{align*}
    &\enc(\cskip)=1; \qquad \ \ \enc(q:=\ket{0})=E(\sem{q:=\ket{0}}); \\
    &\enc(\cfail)=0; \qquad \enc(\overline{q}:=U[\overline{q}])=E(\sem{\overline{q}:=U[\overline{q}]});  \\
    &\enc(P_1;P_2)=\enc(P_1)\cdot \enc(P_2); \\
    &\enc(\qif{M[\overline{q}]\xrightarrow{i}P_i})=\sum_i E(\M_i)\cdot \enc(P_i); \label{eq:enc-case} \\
    &\enc(\qwhile{\!M[\overline{q}]\!=\!1\!\!}{ P \!})\!=\!(E(\M_1)\!\cdot\! \enc(P))^*\!\cdot\! E(\M_0),
  \end{align*}
  where $\Sigma_i$ in \eq{enc-case} is an abbreviation of expression summation.
\end{definition}

\begin{theorem}
  \label{thm:enc-recovery}
  For any quantum program $P$ and encoder setting $E$, let $\intp=(\H, E^{-1}),$ where $E^{-1}$ maps back the unique symbol for an elementary superoperator. Then
  \begin{align}
    \Qint(\enc(P))=\cp{\sem{P}}.
  \end{align}
\end{theorem}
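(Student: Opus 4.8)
The plan is to prove $\Qint(\enc(P))=\cp{\sem{P}}$ by structural induction on the program $P$. The theorem states that, under the interpretation $\intp=(\H, E^{-1})$ obtained by inverting the encoder setting, the quantum interpretation of the encoding of $P$ coincides with the path lifting of its denotational semantics. The backbone of the argument is that both $\enc$ and $\Qint$ are defined compositionally, while $\sem{\cdot}$ is likewise compositional (via Ying's semantics recalled above) and path lifting preserves all the relevant operations by \lem{PL-embed}. So the strategy is to push $\cp{\cdot}$ through the denotational semantics and match it term-by-term with the interpretation of the encoding.

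**First I would** dispatch the base cases. For the elementary statements ($\cskip$, $\cfail$, $q:=\ket{0}$, $\overline{q}:=U[\overline{q}]$), the encoder assigns either a constant ($0$ or $1$) or a unique symbol $E(\E)$ for the corresponding elementary superoperator $\E=\sem{\cdot}$. Since $\intp$ uses $\eval=E^{-1}$, we get $\Qint(E(\E))=\cp{E^{-1}(E(\E))}=\cp{\E}=\cp{\sem{\cdot}}$ directly from the definition of $\Qint$ on atoms. The cases $\enc(\cskip)=1$ and $\enc(\cfail)=0$ need the matching identities $\Qint(1)=\I_{\H}=\cp{\sem{\cskip}}$ and $\Qint(0)=\O_{\H}=\cp{\sem{\cfail}}$, which hold because $\sem{\cskip}$ is the identity superoperator and $\sem{\cfail}$ is the zero superoperator, and path lifting sends them to $\I_{\H}$ and $\O_{\H}$ respectively.

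**Next I would** handle the three compositional cases, each reducing to a structure-preservation clause of \lem{PL-embed}. For sequencing, $\Qint(\enc(P_1;P_2))=\Qint(\enc(P_1));\Qint(\enc(P_2))$, which by the induction hypothesis equals $\cp{\sem{P_1}};\cp{\sem{P_2}}=\cp{\sem{P_1}\circ\sem{P_2}}=\cp{\sem{P_1;P_2}}$, using \lem{struct-preserving} to commute $\cp{\cdot}$ with composition (noting the order reversal in $(\E_1\circ\E_2)(\rho)=\E_2(\E_1(\rho))$ matches the convention for $;$). For branching, $\Qint$ turns the encoded sum into $\sum_i \Qint(E(\M_i));\Qint(\enc(P_i))$, which by the base case for $\M_i$ and the induction hypothesis becomes $\sum_i \cp{\M_i};\cp{\sem{P_i}}=\sum_i\cp{\M_i\circ\sem{P_i}}=\cp{\sum_i \M_i\circ\sem{P_i}}=\cp{\sem{\qif{M[\overline{q}]\xrightarrow{i}P_i}}}$, again invoking \lem{struct-preserving} for both $;$ and $\sum_i$.

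**The hard part will be** the $\mathbf{while}$-loop, because it is the only place where the infinitary $*$ operation and its path-model definition $\T^*=\sum_{i\geq 0}\T^i$ meet the infinite sum $\sem{\qwhile{M[\overline{q}]=1}{P}}=\sum_{n\geq 0}((\M_1\circ\sem{P})^n\circ\M_0)$ in the superoperator semantics. Here $\cp{\cdot}$ must be commuted past a genuinely infinite summation, so \lem{struct-preserving}'s ``when defined'' caveat on preserving $\sum_i$ is doing real work; I would verify that the relevant sum converges in $\CP(\H)$ so the lemma applies. Concretely, $\Qint(\enc(\qwhile{M[\overline{q}]=1}{P}))=(\Qint(E(\M_1));\Qint(\enc(P)))^*;\Qint(E(\M_0))$, which by the induction hypothesis and base case equals $(\cp{\M_1};\cp{\sem{P}})^*;\cp{\M_0}$. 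I would then expand the star as $\sum_{n\geq 0}(\cp{\M_1};\cp{\sem{P}})^n;\cp{\M_0}$, pull $\cp{\cdot}$ outside using that it preserves $;$ and the convergent $\sum_n$, and recognize $(\cp{\M_1};\cp{\sem{P}})^n=\cp{(\M_1\circ\sem{P})^n}$ by iterating the composition clause. This yields $\sum_{n\geq 0}\cp{(\M_1\circ\sem{P})^n\circ\M_0}=\cp{\sum_{n\geq 0}((\M_1\circ\sem{P})^n\circ\M_0)}=\cp{\sem{\qwhile{M[\overline{q}]=1}{P}}}$, completing the induction. The only subtlety to check carefully is that the path-model star, defined over $\LT(\H)$ where infinite sums always exist, agrees with the superoperator-level infinite sum precisely on the image of path lifting — which is exactly the content that makes the quantum path model the right vehicle for handling the loop's infinity issue.
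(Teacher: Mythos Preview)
Your proposal is correct and follows essentially the same structural induction as the paper's proof, invoking \lem{struct-preserving} at each compositional step and handling the \textbf{while}-loop by expanding the star and pushing path lifting through the (convergent) infinite sum. If anything, your treatment of the base cases and the order-reversal convention for composition is slightly more explicit than the paper's; the only detail you leave implicit is the justification that $\sum_{n\geq 0}((\M_1\circ\sem{P})^n\circ\M_0)$ is defined in $\CP(\H)$, which the paper obtains by citing \cite{Ying16}.
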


A full proof by induction on $P$ is in \app{enc-recovery}.

Note that in real applications, we usually define the encoder setting $E$ jointly for multiple programs $\{P_i\}$ for technical convenience and easy comparison.

\vspace{3mm}
Now we have all the ingredients for \thm{main}.
\begin{proof}[Proof of \thm{main}]
We have constructed the quantum path model and proved it a sound model of NKA in \thm{QS-NKA}, leading to the soundness of Horn theorems by \cor{primal-int}. 
We also show an embedding of quantum superoperators into quantum path actions in \lem{E-embedding}, so Horn theorems are interpreted as quantum superoperator equivalences.
For each quantum program, we encode it with a symbolic expression whose interpretation corresponds to its denotational semantics, according to \thm{enc-recovery}.
Hence, if the NKA equivalence of quantum programs' encoding is derivable, the equivalence of their denotational semantics is induced.
\end{proof}
In the next sections, we show applications of \thm{main}.

\section{Validation of Quantum Compiler Optimizing Rules} \label{sec:app:optimization}

We demonstrate a few quantum compiler optimizing rules and their validation in NKA, in light of a similar application of KAT~\cite{KP00}. 
Note that many classical compiler optimizing rules do not hold or make sense in the quantum setting. 
We have carefully selected those rules with reasonable quantum counterparts, as well as quantum-specific rules found in real quantum applications. 

The validation of quantum program equivalence via NKA consists of three steps: (1) \emph{program encoding}: encode the programs as expressions over an alphabet; (2) \emph{condition formulation}: identify necessary hypotheses and construct a formula that encodes hypotheses and target equation; (3) \emph{NKA derivation}: derive the formula with the NKA axioms.

\subsection{Loop Unrolling}
\label{sec:loop-unroll}

Consider programs $\textsc{Unrolling1}$ and $\textsc{Unrolling2}$ in \fig{loop-unroll} with a program $P$ and a projective measurement $M$.

\vspace{1mm} \noindent \emph{Program Encoding}:
We encode the two programs by expressions $\enc(\textsc{Unrolling1})=(m_0p)^*m_1$ and $ \enc(\textsc{Unrolling2})=(m_0p(m_0p+m_1\cdot 1))^*m_1.$
The encoder setting is inferred easily.

\vspace{1mm} \noindent \emph{Condition Formulation}:
Because $M$ is a projective measurement, $\M_1\circ \M_1=\M_1$ and $\M_1\circ \M_0=\O_{\H}$ can be encoded by $m_1m_1=m_1$ and $m_1m_0=0.$ Their equivalence can be verified by the following formula:
\begin{align}
  \label{eq:loop-unroll}
  \NKAmodel ~~&m_1m_1=m_1\wedge m_1m_0=0\rightarrow \notag \\ &~~(m_0p)^*m_1=(m_0p(m_0p+m_1\cdot1))^*m_1.
\end{align}

\vspace{1mm} \noindent \emph{NKA Derivation}:
This formula can be derived in NKA by:
\begin{align}
  &(m_0p(m_0p+m_1\cdot1))^*m_1 \notag \\
  =~&(m_0pm_0p+m_0pm_1)^*m_1 \tag{distributive-law}\\
  =~&(m_0pm_0p)^*(m_0pm_1(m_0pm_0p)^*)^*m_1 \tag{\ref{denesting-rule}}\\
  =~&(m_0pm_0p)^*(m_0pm_1(1+m_0pm_0p(m_0pm_0p)^*))^*m_1 \tag{\ref{fp-rule}}\\
  =~&(m_0pm_0p)^*(m_0pm_1)^*m_1 \tag{$m_1m_0=0$}\\
  =~&(m_0pm_0p)^*(1+m_0pm_1(1+m_0pm_1(m_0pm_1)^*))m_1 \tag{\ref{fp-rule}}\\
  =~&(m_0pm_0p)^*(1+m_0pm_1)m_1 \tag{$m_1m_0=0$}\\
  =~&(m_0pm_0p)^*(1+m_0p)m_1 \tag{$m_1m_1=m_1$, distributive-law}\\
  =~&(m_0p)^*m_1. \tag{\ref{unr-rule}}
\end{align}

By \thm{main}, we have $\sem{\textsc{Unrolling1}}=\sem{\textsc{Unrolling2}}$.

\begin{figure}[t]
    \centering
    \scalebox{0.95}{
    \begin{subfigure}{.4\linewidth}
    \begin{align*}
    &\textsc{Unrolling1}\equiv~~\\
    &\mathbf{while~}M[q]=0\mathbf{~do}\\
    &\quad P \\
    &\mathbf{done.} \\
    \noalign{\vskip 3mm} 
    &\textsc{Boundary1}\equiv~~\\
    &\mathbf{while~} M[w]=0 \mathbf{~do~} \\
    &\quad q:=U[q]; \\
    &\quad P; \\
    &\quad q:=U^{-1}[q] \\
    &\mathbf{done}.
    \end{align*}
    \end{subfigure}
    \qquad
    \begin{subfigure}{.4\linewidth}
    \begin{align*}
    &\textsc{Unrolling2}\equiv~~\\
    &\mathbf{while~} M[q]=0 \mathbf{~do~} \\
    &\quad P; \enskip \mathbf{if~} M[q]=0 \mathbf{~then~} P \\
    &\mathbf{done}. \\
    \noalign{\vskip 3mm} 
    &\textsc{Boundary2}\equiv~~\\
    &q:=U[q]; \\
    &\mathbf{while~} M[w]=0 \mathbf{~do~} \\
    &\quad P; \\
    &\mathbf{done}; \\
    &q:=U^{-1}[q].
    \end{align*}
    \end{subfigure}
    }
    \caption{Two pairs of equivalent programs with conditions.
    }
    \label{fig:loop-unroll}
\end{figure}

\subsection{Loop Boundary}

This rule is quantum-specific because it makes use of the reversible property of quantum operations.
Consider programs $\textsc{Boundary1}$ and $\textsc{Boundary2}$ in \fig{loop-unroll}, where $P$ is an arbitrary program.
Here the unitary $U$ acting on $q$ does not affect the measurement on qubit $w$. In other words, quantum measurement $M_0$ and $M_1$ commute with $U$.

\vspace{1mm} \noindent \emph{Program Encoding}:
We encode these program by expressions $\enc(\textsc{Boundary1})=(m_0upu^{-1})^*m_1$ and $\enc(\textsc{Boundary2})=u(m_0p)^*m_1u^{-1},$
where the encoder setting $E$ can be inferred. 

\vspace{1mm} \noindent \emph{Condition Formulation}:
The reversibility property $UU^{-1}=U^{-1}U=I$ can be encoded by $uu^{-1}=u^{-1}u=1$ (at the level of quantum superoperators). Besides, the commutativity property of measurement and $U$ is encoded as $um_0=m_0u$ and $um_1=m_1u$. Then the formula we need to derive is
\begin{align}
  \label{eq:loop-boundary}
  \NKAmodel ~~& uu^{-1}=u^{-1}u=1\wedge um_0=m_0u\wedge um_1=m_1u\rightarrow \notag \\
  &\quad(m_0upu^{-1})^*m_1=u(m_0p)^*m_1u^{-1}.
\end{align}

\noindent \emph{NKA Derivation}:
The derivation of this formula in NKA is
\begin{align}
  &(m_0upu^{-1})^*m_1 \notag\\
  =~&(um_0pu^{-1})^*m_1 \tag{$um_0=m_0u$}\\
  =~&(1+u(m_0pu^{-1}u)^*m_0pu^{-1})m_1 \tag{\ref{product-star-rule}} \\
  =~&u(m_0p)^*m_1u^{-1}. \tag{$u^{-1}u=1$, \ref{fp-rule}}
\end{align}

Then $\sem{\textsc{Boundary1}}=\sem{\textsc{Boundary2}}$ by \thm{main}.

Due to space limitations, we showcase in \app{qsp} the use of the Loop Boundary rule to optimize, as observed in \cite{Childs9456}, one leading quantum Hamiltonian simulation algorithm called quantum signal processing (QSP)~\cite{low2017optimal}, as well as its algebraic verification. 

\section{Normal Form of Quantum Programs} \label{sec:app:normal}

Here we use NKA to prove a quantum counterpart of the classic B\"{o}hm-Jacopini theorem~\cite{BJ66}, namely, a normal form of quantum {\bf while} programs consisting of only a single loop. The normal form of classical programs depends on the folk operation, which copies the value of a variable to a new variable. However, in quantum programs, the no-cloning theorem prevents us from directly copying unknown states. Our approach is to store every measurement result in an augmented classical space and depends on the classical variables to manipulate the control flow of the program.
We note a quantum version of the B\"{o}hm-Jacopini theorem was recently shown in~\cite{2019arXiv190800158Y}, however, using a completely different and non-algebraic approach. 

Let us illustrate our idea with a simple example below first.
To unify the two \textbf{while} loops of \textsc{Original} into one, we redesign the control flow as in \textsc{Constructed} with a fresh classical guard variable $g\in\{0, 1, 2\}$.

\scalebox{0.85}{
\hspace{-6mm}
\begin{minipage}{.22\linewidth}
\begin{align*}
    &\textsc{Original}\equiv \\
    & \qwhile{M_1[p]=1}{P_1}; \\
    & \qwhile{M_2[p]=1}{P_2}; \\
    & g:=\ket{0}. \\
    & \\
    & \\
    & \\
    &
\end{align*}
\end{minipage}
\quad
\begin{minipage}{.26\linewidth}
\begin{align*}
    & \textsc{Constructed}\equiv \\
    &g:=\ket{1}; \\
    & \textbf{while }\text{Meas}[g]>0\textbf{ do} \\
    & \quad \textbf{if } \text{Meas}[g]>1 \textbf{ then } \\
    & \quad \quad \textbf{if } M_2[p]=1 \textbf{ then } P_2 \textbf{ else } g:=\ket{0}\\
    & \quad \textbf{else } \\
    & \quad \quad \textbf{if } M_1[p]=1 \textbf{ then } P_1 \textbf{ else } g:=\ket{2}\\
    & \textbf{done}.
\end{align*}
\end{minipage}
}
Here $\text{Meas}[g]$ is the computational basis measurement on variable $g$. When $g$ is classical, $\text{Meas}[g]$ returns the value of $g$, and does not modify $g$. The variable $g$ is used to store the measurement results and decide which branch the program executes in the next round. We prove $\sem{\textsc{Original}}=\sem{\textsc{Constructed}}$ via NKA, using the outline in \sec{app:optimization}.

\vspace{2mm} \noindent \emph{Program Encoding}:
We encode $g:=\ket{i}$ as $g^i$, and $\text{Meas}[g]>i$ as $g_{>i}$ and $g_{\leq i}$. Then the two programs are encoded as
\begin{align*}
    \enc(\textsc{Original})&=(m_{11}p_1)^*m_{10}(m_{21}p_2)^*m_{20}g^0, \\
    \enc(\textsc{Constructed})&=g^1(g_{>0}(g_{>1}(m_{21}p_2+m_{20}g^0) \\
    &\qquad \qquad +g_{\leq 1}(m_{11}p_1+m_{10}g^2)))^*g_{\leq 0}.
\end{align*}

\vspace{2mm} \noindent \emph{Condition Formulation}:
Since $g$ is fresh, operations on $g$ commutes with the quantum measurements $M_1, M_2$ and subprograms $P_1, P_2.$ This is encoded as $g^im_{jk}=m_{jk}g^i, g^ip_j=p_jg^i.$ Two consecutive assignment on $g$ will make the first one be overwritten, which is encoded as $g^ig^j=g^j.$ On top of these, $g^ig_{>j}=g^i$ if $i>j$ and $g^ig_{>j}=0$ if $i\leq j$. Similarly, $g^ig_{\leq j}=g^i$ if $i\leq j$ and $g^ig_{\leq j}=0$ if $i>j$.

\vspace{2mm} \noindent \emph{NKA derivation}:
To simplify the proof, let
\begin{align*}
    X&=g_{>0}g_{>1}(m_{21}p_2+m_{20}g^0), ~~
    Y=g_{>0}g_{\leq 1}(m_{11}p_1+m_{10}g^2).
\end{align*}
Then $\enc(\textsc{Constructed})$ is equivalent to $g^1(X+Y)^*g_{\leq 0}.$ We simplify $g^iX^*$ first.

\begin{align*}
    g^1X^*&=g^1(1+g_{>0}g_{>1}(m_{20}g^0+m_{21}p_2)X^*) \tag{\ref{fp-rule}} \\
    &=g^1, \tag{distributive-law}\\
    g^2X^*&=g^2(g_{>0}g_{>1}m_{21}p_2)^*(g_{>0}g_{>1}m_{20}g^0 \\
    &\qquad \cdot (1+g_{>0}g_{>1}m_{21}p_2(g_{>0}g_{>1}m_{21}p_2)^*))^* \tag{\ref{denesting-rule}, \ref{fp-rule}}\\
    &=(m_{21}p_2)^*g^2(g_{>0}g_{>1}m_{20}g^0)^* \tag{\ref{star-rew-rule}}\\
    &=(m_{21}p_2)^*g^2(1+g_{>0}g_{>1}m_{20}g^0)
    \tag{\ref{fp-rule}} \\
    &=(m_{21}p_2)^*(g^2+m_{20}g^0). \tag{distributive-law}
\end{align*}

Consider $g^1(X+Y)^*=g^1X^*(YX^*)^*=g^1(YX^*)^*,$ and then
\begin{align*}
    g^1(YX^*)^*
    =~&g^1(g_{>0}g_{\leq 1}m_{11}p_1X^*)^* \\
    &\cdot (g_{>0}g_{\leq 1}m_{10}g^2X^*(g_{>0}g_{\leq 1}m_{11}p_1X^*)^*)^* \tag{\ref{denesting-rule}}
\end{align*}
\begin{align*}
    =~&(m_{11}p_1)^*g^1(g_{>0}g_{\leq 1}m_{10}m_{21}^*(g^2+m_{20}g^0) \\
    &~~ \cdot (1+(g_{>0}g_{\leq 1}m_{11}p_1X^*)(g_{>0}g_{\leq 1}m_{11}p_1X^*)^*)) \tag{\ref{star-rew-rule}, \ref{fp-rule}} \\
    =~&(m_{11}p_1)^*m_{10}(m_{21}p_2)^*(g^2+m_{20}g^0).
\end{align*}

Insert the above equation into $g^1(X+Y)^*g_{\leq 0}.$
\begin{align*}
    g^1(X+Y)^*g_{\leq 0}
    =~&(m_{11}p_1)^*m_{10}(m_{21}p_2)^*(g^2+m_{20}g^0)g_{\leq 0} \\
    =~&(m_{11}p_1)^*m_{10}(m_{21}p_2)^*m_{20}g^0.
\end{align*}

This is exactly $\enc(\textsc{Constructed})=\enc(\textsc{Original})$. 
\thm{main} gives $\sem{\textsc{Constructed}}=\sem{\textsc{Original}}.$
Hence the two loops have been merged into one, with an additional classical space which is restored to 0 at the end.

We employ a similar idea to arbitrary programs by induction. Note that our above example corresponds to the case $S_1;S_2$ in induction. And our analysis above, which results in an equivalent program with one \textbf{while}-loop and additional classical space, constitutes a proof in that case. 
The more complicated cases are proved similarly, whose details are in \app{normal-form}.

\begin{theorem}
  \label{thm:normal-form}
  For any quantum while program $P$ over Hilbert space $\H$, there are a classical space $\C$ and a quantum while program \begin{align} P_0; ~\qwhile{M}{P_1}; ~p_{\C}:=\ket{0} \end{align} equivalent to $P; ~p_{\C}:=\ket{0}$ over $\H\otimes\C,$ where $P_0$, $P_1$ are while-free, $p_{\C}:=\ket{0}$ resets the classical variables in $\C$ to $\ket{0}$.
\end{theorem}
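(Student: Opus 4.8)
The plan is to prove \thm{normal-form} by structural induction on the program $P$, constructing in each case an explicit single-loop program together with an NKA-provable equivalence of encodings, and then invoking \thm{main} to transfer the algebraic equivalence to the semantic equivalence $\sem{\cdot}$. Throughout, the classical space $\C$ is built up by adjoining a fresh classical guard variable $g$ at each inductive step; its defining algebraic properties are that $g$ commutes with every quantum operation and measurement on the original registers, that repeated assignments overwrite ($g^ig^j=g^j$), and that the basis tests satisfy $g^ig_{>j}=g^i$ (resp.\ $0$) when $i>j$ (resp.\ $i\leq j$) and symmetrically for $g_{\leq j}$. These are exactly the hypotheses fed to the Horn theorems, in the spirit of the derivations in \sec{app:optimization}.

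For the base cases $\cskip$, $\cfail$, $q:=\ket{0}$ and $\overline{q}:=U[\overline{q}]$, the program is already while-free, so one takes $P_0:=P$ together with a loop whose guard never passes (initialize $g$ to a value failing the test), yielding the required shape with a vacuous loop. Since sequencing and branching of while-free programs remain while-free, the only substantive work occurs once a subprogram carries a loop.

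The sequencing case $P=P_1;P_2$ is precisely the $\textsc{Original}$/$\textsc{Constructed}$ construction displayed above: by the induction hypothesis each $P_i$ has a normal form, and a three-valued phase variable $g\in\{0,1,2\}$ threads control so that phase $2$ runs the second loop, phase $1$ runs the first, and phase $0$ terminates, all inside one loop with a while-free body. The displayed NKA derivation, built from denesting, fixed-point, and star-rewrite together with the guard and commutation identities, establishes $\enc(\textsc{Constructed})=\enc(\textsc{Original})$. The branching case $P=\qif{M[\overline{q}]\xrightarrow{i}P_i}$ is analogous: one first performs the measurement $M$ and records its outcome in classical bits, then dispatches on the recorded value into the appropriate subprogram's normal form, again folded into a single loop via a phase variable; the corresponding encoding identity is derived with the same family of star laws.

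The main obstacle is the loop case $P=\qwhile{M[\overline{q}]=1}{Q}$. Here the induction hypothesis gives $Q$ a normal form that already contains a loop, so the construction must flatten two genuinely nested loops into one. The idea is to introduce a phase variable distinguishing the \emph{outer-measurement} phase from the \emph{inner-loop} phase, so that one unified loop alternately simulates a single outer iteration and drives the inner loop to completion before returning to the outer test. Algebraically this is the most delicate derivation: starting from $\enc(P)=(m_1\,\enc(Q))^*m_0$ with $\enc(Q)$ itself of star-nested form, one must repeatedly apply denesting and sliding to pull the inner star outward, use product-star and the fixed-point rule to reassociate, and finally star-rewrite (licensed by the commutation of $g$ with the quantum symbols) to collapse everything into the encoding $\enc(P_0)\,(E(\M_1)\cdot\enc(P_1))^*\,E(\M_0)$ of a single-loop program. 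Verifying that the body $P_1$ indeed stays while-free after this collapse, i.e.\ that no residual nested star survives, is the crux of the argument; the remaining cases and the full derivations are deferred to \app{normal-form}.
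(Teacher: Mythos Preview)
Your proposal is correct and follows essentially the same approach as the paper: structural induction on $P$, introducing a fresh classical guard variable at each step with the stated commutation and overwrite hypotheses, and discharging each case (base, sequencing, branching, loop) via NKA derivations built from \ref{denesting-rule}, \ref{sliding-rule}, \ref{star-rew-rule}, and \ref{fp-rule} before invoking \thm{main}. One small correction of emphasis: in the loop case the while-freeness of the constructed body $P_1$ is immediate by construction (it is a nest of \textbf{case} statements and assignments), so the ``crux'' is not that no residual star survives in $P_1$ but rather the purely algebraic equality of the constructed encoding with $(m_1\,\enc(Q))^*m_0$; the paper carries this out by simplifying the constructed encoding forward rather than pulling stars outward from the original, but the two directions are equivalent.
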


\section{Non-idempotent Kleene Algebra with Tests} \label{sec:NKAT}

As we stated before, NKA is not specifically designed for quantum programs: the measurements are treated as normal processes. Further characterization of measurements will grant finer algebraic structure. KAT introduces tests into KA, relying on the ability to simultaneously represent branching and predicates by Boolean algebra. However, for quantum programs, there is a gap between branching and predicates, which requires us to treat predicates and branching separately. We introduce effect algebra as a subalgebra of NKA to tackle quantum predicates in \sec{NKAT:effect}. As for branching, quantum measurements are abstracted as algebraic rules based on predicates. These lead to non-idempotent Kleene algebra with tests (NKAT) in \sec{NKAT:NKAT}. As an application, we show how propositional quantum Hoare logic is subsumed into algebraic rules of NKAT in \sec{NKAT:qht} and \sec{NKAT:pqhl}.

\subsection{Effect Algebra} \label{sec:NKAT:effect}

The notion of quantum predicates was defined in \cite{DP2006} as an operator $A\in\PO(\H)$ satisfying $\norm{A}\leq 1,$ and its negation $\neg{A}=I_{\H}-A.$ In the quantum foundations literature, quantum predicates are also called \emph{effects}. Their algebraic properties have been extensively studied as effect algebras. 

\begin{definition}[\cite{foulis1994effect}]
  An effect algebra (EA) is a $4$-tuple $(\L, \oplus, 0, e)$, where $0,e\in \L$, and $\oplus:\L\times \L\rightarrow \L$ is a partial binary operation satisfying the following properties: for any $a,b,c\in\L$, 
  \begin{enumerate}
  \item if $a\oplus b$ is defined then $b\oplus a$ is defined and $a\oplus b=b\oplus a$;
  \item if $a\oplus b$ and $(a\oplus b)\oplus c$ are defined, then $b\oplus c$ and $a\oplus (b\oplus c)$ are defined and $(a\oplus b)\oplus c=a\oplus (b\oplus c)$;
  \item if $a\oplus e$ is defined, then $a=0$;
  \item for every $a\in \L$ there exists a unique $\neg{a}\in \L$ such that $a\oplus\neg{a}=e$;
  \item for every $a\in\L$, $0\oplus a=a$.
  \end{enumerate}
\end{definition}

The fourth rule of the effect algebra defines a unary operator, the \emph{negation} over $\L$, denoted by $\bar{a}$ for $a\in\L$.

An effect algebra is easily embedded in NKA by viewing $\oplus$ as a restricted $+$ of NKA. Then we need to identify the correspondence of predicates in the quantum path model.

\begin{definition}
For a predicate $A$, a constant superoperator $\C_A\in\CP(\H)$ for $A\in\PO(\H)$ is defined by
\begin{align}
    \C_A(\rho)=\tr[\rho]A.
\end{align}

We let $\pred(\H)=\{\cp{\C_A}: A\in\PO(\H), \norm{A}\leq 1\}$ be the subset of $\LT(\H)$ containing the lifted constant superoperator.

A partial binary addition $\oplus$ over $\pred(\H)$ inherits from the addition in $\LT(\H)$, defined by:
\begin{align*}
  \cp{\C_A}\oplus \cp{\C_B}=\begin{cases} \cp{\C_A}+\cp{\C_B} & \cp{\C_A}+\cp{\C_B}\preceq \cp{\C_{I_{\H}}},\\ 
      \text{undefined} &\text{otherwise}. \end{cases}
\end{align*}
\end{definition}

\begin{lemma}
\label{lem:p-is-ea}
$(\pred(\H), \oplus, \O_{\H}, \cp{\C_{I_{\H}}})$ forms an effect algebra. Specifically, the negation of it satisfies $\neg{\cp{\C_A}}=\cp{\C_{\neg{A}}}.$
\end{lemma}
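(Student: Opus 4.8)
The plan is to verify directly that $(\pred(\H), \oplus, \O_{\H}, \cp{\C_{I_{\H}}})$ satisfies each of the five effect-algebra axioms, leveraging the fact that $\oplus$ is inherited from the $+$ operation of $\LT(\H)$ restricted to stay below $\cp{\C_{I_{\H}}}$. The key simplifying observation I would establish first is a structural lemma: constant superoperators compose additively on the nose, i.e., $\cp{\C_A} + \cp{\C_B} = \cp{\C_{A+B}}$ whenever $A+B$ remains a valid operator, and moreover $\cp{\C_A} \preceq \cp{\C_B} \Leftrightarrow A \sqsubseteq B$ in the L\"owner order. This reduces every algebraic question about $\pred(\H)$ to a question about positive semidefinite operators $A$ with $\norm{A}\leq 1$ under ordinary addition, where the answers are classical.

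With that reduction in hand, I would check the axioms in order. Commutativity (1) follows immediately since $+$ in $\LT(\H)$ is commutative and the definedness condition $\cp{\C_A}+\cp{\C_B}\preceq\cp{\C_{I_\H}}$ is symmetric in $A,B$. Associativity (2) translates to: if $A+B$ and $A+B+C$ are valid (each $\sqsubseteq I_\H$), then so are $B+C$ and $A+(B+C)$, which holds because $B+C \sqsubseteq A+B+C \sqsubseteq I_\H$ using positivity of $A$; the equality is just associativity of operator addition. For (3), if $\cp{\C_A}\oplus\cp{\C_{I_\H}}$ is defined then $A+I_\H \sqsubseteq I_\H$, forcing $A\sqsubseteq O_\H$, hence $A=O_\H$ and $\cp{\C_A}=\O_\H$. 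For (5), $\O_\H \oplus \cp{\C_A} = \cp{\C_{O_\H+A}} = \cp{\C_A}$ since $\O_\H = \cp{\C_{O_\H}}$.

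The main substantive step is the negation axiom (4) together with the claimed formula $\neg{\cp{\C_A}} = \cp{\C_{\neg{A}}}$. Here I would take $\neg{A} = I_\H - A$, which is positive semidefinite (since $A\sqsubseteq I_\H$ follows from $\norm{A}\leq 1$ and $A\in\PO(\H)$) and satisfies $\norm{\neg{A}}\leq 1$, so $\cp{\C_{\neg A}}\in\pred(\H)$. Then $\cp{\C_A}\oplus\cp{\C_{\neg A}} = \cp{\C_{A+(I_\H-A)}} = \cp{\C_{I_\H}}$, witnessing existence. For uniqueness, if $\cp{\C_A}\oplus\cp{\C_B}=\cp{\C_{I_\H}}$ then $A+B = I_\H$ by the injectivity of $B\mapsto\cp{\C_B}$ (which comes from \lem{E-embedding} applied to constant superoperators), forcing $B = I_\H - A = \neg{A}$.

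The one place demanding genuine care is the structural lemma itself, specifically the equivalence $\cp{\C_A}\preceq\cp{\C_B} \Leftrightarrow A\sqsubseteq B$ and the injectivity of $A\mapsto\cp{\C_A}$, since these are claims in the quotient $\Seqv(\H)$ rather than in $\PO(\H)$ directly. I expect the forward direction and injectivity to follow from evaluating both actions at $\eqv{\rho}$ for a normalized $\rho$ (so $\tr[\rho]=1$), giving $\eqv{A}\leq\eqv{B}$, which by the embedding of $\PO(\H)$ into $\Seqv(\H)$ as finite positive operators recovers $A\sqsubseteq B$; the subtlety is confirming that the $\Seqv(\H)$ order restricted to finite operators coincides with the L\"owner order, which the remark following the definition of $\Seqv(\H)$ asserts. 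This verification of the order-correspondence is the crux, after which all five axioms reduce to elementary operator inequalities.
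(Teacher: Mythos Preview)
Your proposal is correct and follows essentially the same approach as the paper: a direct verification of the five effect-algebra axioms, leveraging that $\cp{\C_A}+\cp{\C_B}$ behaves like $\cp{\C_{A+B}}$ and that evaluating at a trace-one state recovers the underlying operator. The only organizational difference is that you front-load the correspondence $\cp{\C_A}\preceq\cp{\C_B}\Leftrightarrow A\sqsubseteq B$ as a structural lemma and then reduce every axiom to operator inequalities, whereas the paper argues axioms (1), (2), (5) abstractly in $\LT(\H)$ using NKA properties and only evaluates at $[\ket{0}\bra{0}]$ on-the-fly for axioms (3) and (4); both routes are equivalent in substance.
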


The proof is straightforward and in \app{ea-proofs}.

\subsection{Non-idempotent Kleene Algebras with Tests}
\label{sec:NKAT:NKAT}

We can characterize quantum measurements with the help of predicates, for which we propose \emph{partitions} algebraically.

\begin{definition}
\label{defn:nkat}
  An $\NKAT$ is a many-sorted algebra $(\K, \L, \N, \\ +, \cdot, *, \leq, 0, 1, e)$ such that
  \begin{enumerate}
  \item $(\K, +, \cdot, *, \leq, 0, 1)$ is an $\NKA$;
  \item $\L$ is a subset of $\K$, and $(\L, \oplus, 0, e)$ is an effect algebra, where $\oplus$ is the restriction of $+$ w.r.t. top element $e$ and partial order $\leq$; that is, for any $a,b\in\mathcal{L}$
    \begin{align}
      a\oplus b=\begin{cases} a+b & a+b\leq e,\\ 
      \text{undefined} &\text{otherwise}; \end{cases}
    \end{align}
  \item $\N$ is a set of tuples $(m_i)_{i\in I}$, where $I$ are finite index sets and $m_i\in\K$, satisfying:
  \begin{enumerate}
      \item each entry in the tuples satisfies $m_i\L\subseteq\L$; that is, for $a\in\L$, $m_i a\in\L.$
      \item for each tuple, $\sum_{i\in I} m_ie=e.$
  \end{enumerate}
  \end{enumerate}
  
  The tuples in $\N$ are called \emph{partitions}.
\end{definition}

We use $\L$ to characterize quantum predicates, and $\N$ to characterize branching in quantum programs. For a quantum measurement $\{M_i\}_{i\in I}$, its dual superoperators transform quantum predicates to quantum predicates: $\E_{M_i}^{\dagger}(A)=M_i^{\dagger}AM_i$. This is captured by $m_i\L\subseteq\L.$ Besides, general quantum measurements satisfies $\sum_{i\in I} M_i^{\dagger}M_i=I$, which is captured by $\sum_{i\in I}m_i e=e$, since $e$ represents predicate $I_{\H}.$
\footnote{
Our characterization of measurements matches positive-operator-valued measurements (POVM), the most general quantum measurements. We can further classify structures inside $\N$ to depict specific classes of quantum measurements. For example, projection-valued measurements (PVM) can be modeled as tuples $(m_i)_{i\in I}$ where $m_im_j=m_i$ if $i=j$, otherwise $m_im_j=0$.

Furthermore, a set of projective and pair-wise commutative measurement superoperators, defined by
$C(\H)=\{\E\in\CP(\H): \E(\rho)=D\rho D^{\dagger}, D\mathrm{~is~diagonal}, D^2=D \},$
represents the measurement superoperators in probabilistic programs. A Boolean algebra can be observed from it. It would be an interesting future direction to investigate the algebraic relation between NKAT and this Boolean algebra.
}

\begin{definition}
  {The set of quantum measurements lifted as quantum path actions in the dual sense is $\meas(\H)=$ \parfillskip=0pt\par}
  
  \noindent $\left\{\left(\cp{\M_i^{\dagger}}\right)_{i\in I} ~ : ~ \M_i(\rho)=M_i\rho M_i^{\dagger}, \sum_{i\in I}M_i^{\dagger}M_i=I_{\H}\right\}$.
\end{definition}

Then we augment the quantum path model in the NKAT framework, supporting quantum predicates ($\pred(\H)$) and quantum measurements ($\meas(\H)$).

\begin{theorem}
\label{thm:q-nkat}
The NKAT axioms are sound for the algebra $(\LT(\H), \pred(\H), \meas(\H), +, \diamond, *, \preceq, \O_{\H}, \I_{\H}, \cp{\C_{I_{\H}}})$.
\end{theorem}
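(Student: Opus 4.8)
The plan is to verify the three requirements of \defn{nkat} for the given algebra. For the underlying NKA $(\LT(\H),+,\diamond,*,\preceq,\O_{\H},\I_{\H})$, I would reduce to \thm{QS-NKA}, which already shows that $(\LT(\H),+,;,*,\preceq,\O_{\H},\I_{\H})$ is an NKA. Since $\diamond$ is the opposite of $;$ (namely $\T_1\diamond\T_2=\T_2;\T_1$) while $+$, $*$, $\preceq$, $\O_{\H}$ and $\I_{\H}$ are unchanged, it suffices to note that the NKA axioms are stable under passing to the opposite multiplication: associativity, the two-sided unit and annihilator laws, and monotonicity of the product are symmetric in their arguments; left distributivity for $\diamond$ is right distributivity for $;$ and conversely; and the two star laws are simply interchanged (using also the derivable symmetric fixed-point law, which is exactly what the $\diamond$-version of $1+pp^*\leq p^*$ demands). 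As each holds for $;$, each holds for $\diamond$, and $\T^*=\sum_{i\geq 0}\T^i$ is literally the same element in both since repeated self-composition is order-independent.

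The effect-algebra requirement is exactly \lem{p-is-ea}, whose $\oplus$ is the restriction of $+$ to sums bounded by $\cp{\C_{I_{\H}}}$, matching the $\oplus$ of \defn{nkat} with $e=\cp{\C_{I_{\H}}}$. The remaining, quantum-specific, requirement is that every tuple $(\cp{\M_i^{\dagger}})_{i\in I}\in\meas(\H)$ arising from a measurement $\{M_i\}_{i\in I}$ with $\sum_i M_i^{\dagger}M_i=I_{\H}$ is a valid partition. For condition (a) I would compute the action of an entry on a predicate: $m_i\diamond\cp{\C_A}=\cp{\C_A};\cp{\M_i^{\dagger}}=\cp{\C_A\circ\M_i^{\dagger}}$ by \lem{struct-preserving}, and a direct evaluation gives $(\C_A\circ\M_i^{\dagger})(\rho)=\tr[\rho]\,M_i^{\dagger}AM_i$, i.e.\ $\cp{\C_{M_i^{\dagger}AM_i}}$. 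It then remains to check that $M_i^{\dagger}AM_i$ is a predicate: positivity follows from that of $A$, and the norm bound follows from $A\sqsubseteq I_{\H}$ together with $M_i^{\dagger}M_i\sqsubseteq\sum_j M_j^{\dagger}M_j=I_{\H}$, giving $M_i^{\dagger}AM_i\sqsubseteq M_i^{\dagger}M_i\sqsubseteq I_{\H}$.

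Condition (b) is the same computation at $A=I_{\H}$: each $m_i\diamond e=\cp{\C_{M_i^{\dagger}M_i}}$, and summing over $i$ via \lem{struct-preserving} with $\sum_i\C_{M_i^{\dagger}M_i}=\C_{\sum_i M_i^{\dagger}M_i}=\C_{I_{\H}}$ yields $\sum_i m_i e=\cp{\C_{I_{\H}}}=e$. The main obstacle here is conceptual rather than computational: one must track the opposite product $\diamond$ with care, since it is precisely the device that makes each measurement entry act on predicates in the dual (Heisenberg) direction $A\mapsto M_i^{\dagger}AM_i$, so that $\meas(\H)$ and $\pred(\H)$ interact correctly. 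The only genuinely quantum ingredient is the completeness relation $\sum_i M_i^{\dagger}M_i=I_{\H}$, which simultaneously supplies the Löwner bound needed for (a) and the total-sum identity needed for (b).
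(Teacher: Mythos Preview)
Your proposal is correct and follows essentially the same approach as the paper: reduce the NKA structure for $\diamond$ to \thm{QS-NKA} via the left--right symmetry of the NKA axioms, invoke \lem{p-is-ea} for the effect-algebra part, and verify the partition conditions by computing $\cp{\M_i^{\dagger}}\diamond\cp{\C_A}=\cp{\C_{M_i^{\dagger}AM_i}}$ and summing at $A=I_{\H}$. Your write-up is in fact a bit more explicit than the paper's, spelling out which star laws interchange under the opposite product and supplying the L\"owner bound $M_i^{\dagger}AM_i\sqsubseteq M_i^{\dagger}M_i\sqsubseteq I_{\H}$ that the paper leaves implicit.
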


Note we have substituted the right composition operation $\diamond$ for the left composition operation $;$ in $\LT(\H)$. This is mainly because our interpretation now uses dual superoperators. The verification of each axiom is standard. The detailed proofs are included in \app{ea-proofs}.

Several useful rules are derivable in NKAT, and their proofs are in \app{nkat-rules}.

\begin{lemma}
\label{lem:nkat-rules}
  The following formulae are derivable in NKAT. Here $I$ is a finite index set, $a, b, a_i$ are elements of the effect subalgebra, $(m_i)_{i\in I}$ is a partition.
  
  \begin{enumerate*}[series=MyList, before=\hspace{1ex}]
      \item ~$0\leq a\leq e$; $\qquad$
      \item ~$a+\neg{a}=e$; $\qquad$
      \item ~$\neg{\neg{a}}=a$;
  \end{enumerate*}
  \vspace{-1.5mm}
  \begin{enumerate}[resume=MyList]
      \item \label{negrev-rule} $a\leq b\rightarrow \neg{b}\leq \neg{a}$; \qquad \qquad \qquad \qquad (negation-reverse)
      \item \label{summa-rule} $\neg{\sum_{i\in I}m_ia_i}=\sum_{i\in I}m_i \neg{a_i}.$ \qquad \qquad (partition-transform)
  \end{enumerate}
\end{lemma}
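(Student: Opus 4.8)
The plan is to derive the five formulae in the dependency order $a+\neg a = e$, then $0\leq a\leq e$, then $\neg{\neg a}=a$, then \ref{negrev-rule}, then \ref{summa-rule}, keeping a clean separation between what comes from the effect-algebra axioms of \defn{nkat} and what comes from the NKA axioms of \fig{axiom-ka-nka}, in particular positivity $0\leq p$ (\ref{pos-rule}) and the monotonicity law $p\leq q\wedge r\leq s\rightarrow p+r\leq q+s$.

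First I would dispatch the three unnumbered items, which are essentially unwindings of definitions. For $a+\neg a=e$: effect-algebra axiom~(4) gives a unique $\neg a$ with $a\oplus\neg a=e$, and since $\oplus$ is the restriction of $+$ to pairs whose sum is $\leq e$ while $e\leq e$ holds, the sum is defined and $a+\neg a=a\oplus\neg a=e$. For $0\leq a\leq e$: the left inequality is NKA positivity; for the right, combine $0\leq\neg a$ (positivity) with $a\leq a$ through the monotonicity law to obtain $a=a+0\leq a+\neg a=e$. For $\neg{\neg a}=a$: commutativity of $\oplus$ (axiom~(1)) gives $\neg a\oplus a=a\oplus\neg a=e$, so $a$ witnesses the defining property of the complement of $\neg a$, and uniqueness forces $\neg{\neg a}=a$.

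The delicate case is \ref{negrev-rule}. The obstruction is that the NKA order is only assumed monotone under $+$: it is \emph{not} the additive order, so $a\leq b$ does not supply a witness $c$ with $a+c=b$, and NKA addition is not cancellative, so a naive monotonicity argument stalls. My plan is to route through the cancellation law, which holds in every effect algebra~\cite{foulis1994effect}. From $a\leq b$ and $b+\neg b=e$, monotonicity gives $a+\neg b\leq b+\neg b=e$, so $a\oplus\neg b$ is defined; setting $g=\neg{(a\oplus\neg b)}$ yields $a\oplus\neg b\oplus g=e=a\oplus\neg a$, and cancelling $a$ gives $\neg b\oplus g=\neg a$, i.e. $\neg b+g=\neg a$. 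Since $0\leq g$ by positivity, monotonicity then gives $\neg b=\neg b+0\leq\neg b+g=\neg a$. I expect invoking cancellation and correctly bridging the effect-algebra order with the ambient NKA order to be the main obstacle.

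Finally, for \ref{summa-rule} I would argue by uniqueness of the complement of $\sum_{i\in I}m_ia_i$. Each $m_ia_i$ and $m_i\neg{a_i}$ lies in $\L$ by the partition closure $m_i\L\subseteq\L$; using $a_i\leq e$, $\neg{a_i}\leq e$ and monotonicity, one gets $\sum_{i\in I}m_ia_i\leq\sum_{i\in I}m_ie=e$ and likewise for $\sum_{i\in I}m_i\neg{a_i}$, so both sums lie in $\L$ (the finitely many $\oplus$'s are defined because every partial sum stays $\leq e$). Then distributivity, the identity $a_i+\neg{a_i}=e$, and the partition property $\sum_{i\in I}m_ie=e$ give $\sum_{i\in I}m_ia_i+\sum_{i\in I}m_i\neg{a_i}=\sum_{i\in I}m_i(a_i+\neg{a_i})=\sum_{i\in I}m_ie=e$, so $\sum_{i\in I}m_i\neg{a_i}$ is the unique complement of $\sum_{i\in I}m_ia_i$, which is exactly the claim.
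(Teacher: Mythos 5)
Your proof is correct and essentially coincides with the paper's: both derive negation-reverse by forming $g=\neg{(a\oplus\neg{b})}$ and invoking uniqueness of the complement (your ``cancellation'' instance is exactly effect-algebra axiom (4)) together with positivity and monotonicity, and both prove partition-transform by showing the two sums lie in the effect subalgebra, sum to $e$, and concluding by uniqueness. The only differences are cosmetic — your ordering proves $a+\neg{a}=e$ first and uses it to get $a\leq e$, where the paper instead uses definedness of $0\oplus a$ — and, incidentally, your final inequality $\neg{b}\leq\neg{a}$ is stated in the correct direction, whereas the paper's last line of that item writes it backwards (a typo).
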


\subsection{Encoding of Quantum Hoare Triples}
\label{sec:NKAT:qht}

A natural usage of classical predicates is reasoning via Hoare triples. With an algebraic representation of quantum predicates and programs, we can encode quantum Hoare triples as algebraic formulae.
A quantum Hoare triple is a judgment of the form $\{A\} P \{B\}$ where $A, B$ are quantum predicates and $P$ is a quantum program. It refers to partial correctness \cite{Ying16}, denoted by $\models_{par} \{A\}P\{B\}$, if for all input $\rho\in\D(\H)$ there is
\begin{align}
\tr(A\rho)\leq\tr(B\sem{P}(\rho)) +\tr(\rho)-\tr\left(\sem{P}(\rho)\right).
\end{align}

Then partial correctness $\models_{par} \{A\} P\{B\}$ can be encoded as an inequality $p\neg{b}\leq\neg{a}$, where $p$ is the encoding of program $P$, and effect algebra elements $a, b$ are the encoding of constant superoperators $\C_A$ and $\C_B$. This encoding can be interpreted by a dual interpretation $\Qintdag$.
\footnote{
A \emph{dual interpretation} $\Qint^{\dagger}$ is defined similar to $\Qint$ except for $\Qintdag(e\cdot f)=\Qintdag(e)\diamond\Qintdag(f)$ and $\Qintdag(a)=\cp{\eval(a)^{\dagger}}$. It describes the dual superoperators lifted to $\LT(\H).$ Properties of $\Qint$ like \thm{q-model-sound-complete}, \cor{primal-int} hold for the dual interpretation similarly. Analogous of \thm{enc-recovery}, $\Qintdag(\enc(P))=\cp{\sem{P}^{\dagger}}$, holds as well.
}
By setting any non-zero input for $\Qintdag(p\neg{b})\preceq\Qintdag(\neg{a})$ and \lem{E-embedding}, it turns to $\sem{P}^{\dagger}(I-B)\sqsubseteq I-A$, which is equivalent to $\models_{par} \{A\} P\{B\}$.

\subsection{Propositional Quantum Hoare Logic}
\label{sec:NKAT:pqhl}

An important feature of KAT is that KAT subsumes the deductive system of propositional Hoare logic, which contains the rules directly related to the control flow of classical while programs but not the rule for assignments \cite{K00a}. 
As a counterpart, quantum Hoare logic is an important tool in the verification and analysis of quantum programs.
A sound and (relatively) complete proof system for partial correctness of \quwhile~programs presented in \fig{pqhl} is discussed in \cite{Yin11}.
We aim to subsume in NKAT a fragment of quantum Hoare logic, called \emph{propositional} quantum Hoare logic. 

Due to the no-cloning of quantum information, the role of assignment is played by initialization and unitary transformation together in quantum programming. In quantum Hoare logic, the rule (Ax.In) and (Ax.UT) for them include atomic transformations, which cannot be captured by algebraic methods. 
As such, propositional quantum Hoare logic will treat these rules as atomic propositions and work with the following program syntax
\begin{align*}
 P \enskip ::= & \enskip p \enskip | \enskip \cskip 
                 \enskip | \enskip \cfail
                 \enskip | \enskip  P_1;P_2 \enskip | \enskip \\
               & \enskip \qif{M[\overline{q}]\xrightarrow{i} P_i} \enskip | \enskip \qwhile{M[\overline{q}]=1}{P_1}.
\end{align*}
Therefore, the deductive system of propositional quantum Hoare logic consists of the rules marked red in \fig{pqhl}. Its relative completeness and soundness can be proved similarly to the original quantum Hoare logic~\cite{Yin11} as a routine exercise. 

\begin{figure}[bt]
  \centering
  \scalebox{0.77}{
  \begin{minipage}{1\linewidth}
  \begin{align*}
      &(\mathrm{Ax.UT}) \ \ \triple{U^{\dagger}AU}{\ \overline{q}:=U[\overline{q}]\ }{A} 
      & &(\mathrm{Ax.In})\ \left\{\sum_i \ket{i}_q\bra{0}A\ket{0}_q\bra{i}\right\}q:=\ket{0}\ \{A\} \\
      &{\color{red} (\mathrm{Ax.Sk})}\qquad \ \ \triple{A}{\ \cskip\ }{A} 
      & &{\color{red} (\mathrm{R.OR})}\ \quad \frac{A \sqsubseteq A' \quad \triple{A'}{P}{B'} \quad B'\sqsubseteq B}{\triple{A}{P}{B}} \\
      &{\color{red} (\mathrm{Ax.Ab})}\quad \ \triple{I_{\H}}{\ \cfail\ }{O_{\H}} 
      & &{\color{red} (\mathrm{R.IF})}\ \ \  \frac{\triple{A_i}{P_i}{B}~\mathrm{for~all}~i~}{\triple{\sum_i\M_i^{\dagger}(A_i)}{\mathbf{case~}{M\xrightarrow{i} P_i}\mathbf{~end}}{B}}\\
      &{\color{red} (\mathrm{R.SC})}\quad \  \frac{\triple{A}{P_1}{B}\ \ \ \triple{B}{P_2}{C}}{\triple{A}{P_1;P_2}{C}} 
      & &{\color{red} (\mathrm{R.LP})}\ \ \frac{\triple{B}{P}{C} \quad C=\M_0^{\dagger}(A)+\M_1^{\dagger}(B)}{\triple{C}{\qwhile{M=1}{P}}{A}}
  \end{align*}
  \end{minipage}
  }
  \caption{A proof system for partial correctness of quantum programs. Propositional quantum Hoare logic includes the rules marked {\color{red} red} in this figure (the lower six rules).}
  \label{fig:pqhl}
\end{figure}

By the discussions in \sec{NKAT:qht}, the partial correctness of quantum Hoare triples can be encoded in NKAT. For a quantum measurement $\{M_i\}_{i\in I}$ we have an additional normalization rule $\sum_{i}M_i^{\dagger}M_i=I$, which is encoded as $\sum_i m_ie=e.$ Then the encoding of these rules is
\begin{align*}
  \begin{cases}
    {\rm (Ax.Sk):} & 1\neg{a}\leq \neg{a}, \\
    {\rm (Ax.Ab):} & 0\neg{0}\leq \neg{1}, \\
    {\rm (R.OR):} &  a\leq a'\wedge p\neg{b'}\leq\neg{a'} \wedge b'\leq b\rightarrow p\neg{b}\leq\neg{a}, \\
    {\rm (R.IF):} &  \left(\bigwedge_{i\in I} p_i\neg{b}\leq \neg{a_i}\right) \rightarrow (\sum_{i\in I} m_ip_i)\neg{b}\leq\neg{\sum_i m_ia_i}, \\
    {\rm (R.SC):} &  p_1\neg{b}\leq \neg{a} \wedge p_2\neg{c}\leq \neg{b} \rightarrow p_1p_2\neg{c}\leq\neg{a},\\
    {\rm (R.LP):} &  p\neg{m_0a+m_1b}\leq\neg{b}\rightarrow (m_1p)^*m_0\neg{a}\leq\neg{m_0a+m_1b}.
  \end{cases} \label{eq:pqhl}
\end{align*}
Here $I$ is a finite index set, $p, p_i\in\K$, elements $a, b, c, a', b', a_i\in\L$, and $(m_i)_{i\in I}$ are partitions.

\begin{theorem}
  With partitions $(m_i)_{i\in I}$, the formulae above are derivable in NKAT.
\end{theorem}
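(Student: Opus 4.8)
The plan is to prove each of the six encoded rules as a separate NKAT derivation, treating them as Horn formulae whose hypotheses and conclusions are inequalities over the NKA carrier $\K$, while using the effect-algebra structure of $\L$ and the partition axioms of $\N$ where needed. The first two rules, (Ax.Sk) and (Ax.Ab), are immediate: $1\neg{a}=\neg{a}\leq\neg{a}$ by the semiring identity $1p=p$ and reflexivity, and $0\neg{0}=0\leq\neg{1}$ by $0p=0$ together with the positivity consequence $0\leq\neg{1}$ (derivable since $0\leq p$ for all $p$, a theorem listed in \fig{NKAtheorems}). The (R.OR) rule is a routine chaining of the hypotheses through monotonicity of $\cdot$ and the \lem{nkat-rules} fact \itm{negrev-rule} (negation-reverse): from $a\leq a'$ we get $\neg{a'}\leq\neg{a}$, from $b'\leq b$ we get $\neg{b}\leq\neg{b'}$, and then $p\neg{b}\leq p\neg{b'}\leq\neg{a'}\leq\neg{a}$ by monotone composition and transitivity. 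The (R.SC) rule is likewise a short chain: $p_1p_2\neg{c}\leq p_1\neg{b}\leq\neg{a}$, using the second hypothesis $p_2\neg{c}\leq\neg{b}$ inside the first composition (monotonicity on the right) and then the first hypothesis.

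For (R.IF) the key tool is the partition-transform identity \itm{summa-rule}, which states $\neg{\sum_{i\in I}m_ia_i}=\sum_{i\in I}m_i\neg{a_i}$. First I would distribute: $\left(\sum_{i\in I}m_ip_i\right)\neg{b}=\sum_{i\in I}m_ip_i\neg{b}$ by the semiring distributive law. Each hypothesis gives $p_i\neg{b}\leq\neg{a_i}$, so monotonicity of $\cdot$ on the left yields $m_ip_i\neg{b}\leq m_i\neg{a_i}$, and then the monotone-sum axiom (the blue NKA law $p\leq q\wedge r\leq s\rightarrow p+r\leq q+s$, iterated over the finite index set) gives $\sum_{i\in I}m_ip_i\neg{b}\leq\sum_{i\in I}m_i\neg{a_i}$. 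Finally \itm{summa-rule} rewrites the right-hand side as $\neg{\sum_{i\in I}m_ia_i}$, completing the derivation. The only subtlety is that $m_i\neg{a_i}$ must lie in $\L$ so that $\sum_i m_i\neg{a_i}$ is a legitimate effect-algebra sum bounded by $e$; this is guaranteed by the partition condition $m_i\L\subseteq\L$ together with the normalization $\sum_{i\in I}m_ie=e$.

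The main obstacle is (R.LP), the loop rule, whose conclusion $(m_1p)^*m_0\neg{a}\leq\neg{m_0a+m_1b}$ involves the Kleene star and therefore must invoke a star induction axiom rather than pure monotonicity. The plan is to abbreviate $c:=\neg{m_0a+m_1b}$ and show $(m_1p)^*m_0\neg{a}\leq c$ by the left-handed star law $q+pr\leq r\rightarrow p^*q\leq r$, instantiated with $p:=m_1p$, $q:=m_0\neg{a}$, $r:=c$; this reduces the goal to proving $m_0\neg{a}+m_1p\,c\leq c$. Here I would first apply \itm{summa-rule} to the two-branch partition $(m_0,m_1)$ to unfold $c=\neg{m_0a+m_1b}=m_0\neg{a}+m_1\neg{b}$, which immediately handles the $m_0\neg{a}$ summand since $m_0\neg{a}\leq c$. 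For the remaining summand I must show $m_1p\,c\leq c$, i.e. $m_1p\,c\leq m_0\neg{a}+m_1\neg{b}$; using the loop hypothesis $p\neg{m_0a+m_1b}\leq\neg{b}$, which reads $pc\leq\neg{b}$, monotonicity gives $m_1pc\leq m_1\neg{b}\leq c$. The delicate points are verifying that each intermediate expression genuinely resides in the effect subalgebra $\L$ (so that the partition-transform identity and the effect-algebra negation are applicable), and confirming that the star-induction axiom of NKA can be used despite the absence of idempotency — but since the star laws in \fig{axiom-ka-nka} are retained verbatim in NKA, this poses no additional difficulty beyond careful bookkeeping of the $\L$-membership conditions.
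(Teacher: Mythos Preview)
Your proof tracks the paper's argument almost exactly for (Ax.Sk), (Ax.Ab), (R.OR), (R.SC), and (R.IF); these are fine and identical in spirit to what the authors do.

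For (R.LP), however, there is a subtle but genuine gap in the way you combine the pieces. After reducing to the premise $m_0\neg{a}+m_1p\,c\leq c$ of the star-induction law, you argue by ``handling the $m_0\neg{a}$ summand'' via $m_0\neg{a}\leq c$ and then separately showing $m_1pc\leq c$. But the implication $x\leq c\ \wedge\ y\leq c\ \Rightarrow\ x+y\leq c$ is exactly what fails in NKA: it relies on $c+c=c$, i.e.\ idempotency. So from your two subgoals you would only get $m_0\neg{a}+m_1pc\leq c+c$, not $\leq c$.

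The fix is already present in your own derivation: you obtain $m_1pc\leq m_1\neg{b}$ along the way. Instead of weakening this to $m_1pc\leq c$, keep it and combine it with the trivial $m_0\neg{a}\leq m_0\neg{a}$ using the monotone-sum axiom of NKA, yielding
\[
m_0\neg{a}+m_1pc\ \leq\ m_0\neg{a}+m_1\neg{b}\ =\ c,
\]
the last equality being the partition-transform identity you already invoked. This is precisely how the paper argues, and it avoids any appeal to idempotency.
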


\begin{proof}
  ~
  \begin{enumerate}
  \item (Ax.Sk): $1\neg{a}=\neg{a}$.
  \item (Ax.Ab): $0\neg{0}=0\leq \neg{1}$ by \ref{pos-rule}.
  \item (R.OR): By \hyperref[negrev-rule]{negation-reverse}, we have $\neg{a'}\leq\neg{a}$ and $\neg{b}\leq\neg{b'}.$ So $p\neg{b}\leq p\neg{b'}\leq \neg{a'}\leq \neg{a}.$  \item (R.IF): Applying \hyperref[summa-rule]{partition-transform}, $(\sum_{i\in I} m_ip_i)\neg{b}=\sum_{i\in I} m_ip_i\neg{b}\leq \sum_{i\in I} m_i\neg{a_i}=\neg{\sum_i m_ia_i}.$
  \item (R.SC): $p_1(p_2\neg{c})\leq p_1\neg{b}\leq\neg{a}.$
  \item (R.LP): By \hyperref[summa-rule]{partition-transform}, $\neg{m_0a+m_1b}=m_0\neg{a}+m_1\neg{b}$. With $p\neg{m_0a+m_1b}\leq\neg{b},$ we have \begin{align*}
      m_0\neg{a}+m_1p\neg{m_0a+m_1b}\leq m_0\neg{a}+m_1\neg{b}=\neg{m_0a+m_1b}.
  \end{align*}
  Then $(m_1p)^*m_0\neg{a}\leq\neg{m_0a+m_1b}$ is concluded by applying $q+pr\leq r\rightarrow p^*q\leq r$.
  \end{enumerate}
\end{proof}

It is clear that the NKAT subsumes the encoding of propositional quantum Hoare logic. 

\section*{Acknowledgement}
We thank the anonymous reviewers and our shepherd, Calvin Smith, for their helpful feedback. 
Y.P. and X.W. were supported by the U.S. Department of Energy, Office of Science, Office of Advanced Scientific Computing Research, Quantum Testbed Pathfinder Program under Award
Number DE-SC0019040 and the U.S. Air Force Office of Scientific Research MURI grant FA9550-16-1-0082. M.Y. was supported by the National Key R\&D Program of China (2018YFA0306701) and the
National Natural Science Foundation of China (61832015).

\balance
\bibliography{refs,qrefs}

\newpage

\appendix

\begin{center}
 \LARGE \bf Appendix
\end{center}

\section{From NKA to Formal and Rational Power Series} \label{app:NKA:FPS}

Researches on formal power series date back to \cite{SCHUTZENBERGER1961245}, and see also some recent references \cite{berstel2011noncommutative, droste2009handbook}. 

Formal power series generalize formal languages by weighing strings with the extended natural number $\natinf$.

\begin{definition}
The \emph{extended set of natural numbers} is $\natinf=\nat\cup\{\infty\}$, where $\infty$ is an added top element.
The calculation in this semiring follows the correspondences in $\nat$, and: 
\begin{align*}
    &0+\infty=\infty, \quad 0\cdot\infty=\infty\cdot 0=0, \quad \ \ 0^*=1; \\
    \forall n\in\natinf\backslash\{0\}: \quad &n+\infty=\infty, \quad  n\cdot\infty=\infty\cdot n=\infty, \quad n^*=\infty.
\end{align*}

A countable summation $\sum_{i\in I} n_i$ for $n_i\in\natinf$ is defined to be $\infty$ if there exists an $i_0\in I$ such that $n_{i_0}=\infty$, or if there exists infinitely many non-zero $n_i$'s. In other cases, it degenerates to a finite summation and the definition follows naturally.

The partial order in $\natinf$ extends the natural partial order in $\nat$ by $\forall n\in\natinf, n\leq \infty$.  
\end{definition}


\begin{definition}[\cite{droste2009handbook, berstel2011noncommutative}]
  For a finite alphabet $\Sigma$, a \emph{formal power series} $\fps{f}$ over $\Sigma$ is a function $\fps{f}: \Sigma^*\rightarrow \natinf$, and can be represented by $\fps{f}=\sum_{w\in\Sigma^*} \fpsc{f}{w}w$
  where $\fpsc{f}{w}\in\natinf$ is the \emph{coefficient} of string $w$.
  We denote the set of the formal power series over $\Sigma$ by $\ps.$
  
  
\end{definition}

For example, the zero mapping in $\ps$ is represented by $\fps{f}=0$. The unit mapping $\fps{f}=1\empstr$ maps the empty string $\empstr$ to $1$, and the others to $0$. The mapping represented by $\fps{f}=1a$ for $a\in\Sigma$ maps $a$ to $1$, and the others to $0$.



\begin{definition}
  Addition, multiplication and the star operation are defined on $\ps$ by
  \begin{align}
    (\fps{f}+\fps{g})[w]&=\fpsc{f}{w}+\fpsc{g}{w}, \\
    (\fps{f}\cdot \fps{g})[w]&=\sum_{uv=w} \fpsc{f}{u}\fpsc{g}{v}, \\
    \label{eq:fps-star}
    (\fps{f}^*)[w]&=\sum_{n\geq 0} \sum_{u_1\cdots u_n=w} \fpsc{f}{u_1}\cdots\fpsc{f}{u_n}w. 
  \end{align}
  Here $uv$ is the concatenation of strings in $\Sigma^*$, and $u_i$ can be the empty string $\empstr$ in \eq{fps-star}.
  Note also that $\fps{f}^*=\sum_{n\geq 0} \fps{f}^n.$ 

  The partial order in $\ps$ is defined by:
  \begin{align}
    \fps{f}\leq \fps{g} \leftrightarrow \forall w\in\Sigma^*, \fpsc{f}{w}\leq \fpsc{g}{w}.
  \end{align}
\end{definition}


With these operations in formal power series, it is possible to interpret expressions over $\Sigma$ as formal power series over $\Sigma$ by a semantic mapping $\rpssem{-}$. 

\begin{definition}
  $\rpssem{-}:\expsig\rightarrow \ps$ is defined by
  \begin{align*}
     \rpssem{0}&=0, & \rpssem{a}&=1a,  & \rpssem{e+f}&=\rpssem{e}+\rpssem{f}, \\
     \rpssem{1}&=1\empstr, & \rpssem{e^*}&=\rpssem{e}^*, & \rpssem{e\cdot f}&=\rpssem{e}\cdot\rpssem{f},
  \end{align*}
  where $a\in\Sigma$, and $e, f\in\expsig$.
\end{definition}

Then we are able to define rational power series as an analogue to regular languages.

\begin{definition}[\cite{droste2009handbook, berstel2011noncommutative}]
  The set of \emph{rational power series}, denoted by $\ratps$, is the smallest subset of \, $\ps$ containing: $(1)~\fps{f}=0$; $(2)~\fps{f}=1\empstr$; $(3)~\fps{f}=1a$ for all $a\in \Sigma$, and is closed under $+, \cdot, *$.
\end{definition}

A series of works from \citet{esik2004inductive, beal2005equivalence, beal2006conjugacy, bloom2009axiomatizing} demonstrates the rational power series as a pivotal model for the NKA axioms.

\begin{theorem}[\cite{esik2004inductive, bloom2009axiomatizing}]
  \label{thm:rps-sound-complete}
  The NKA axioms are sound and complete for $(\ratps, +, \cdot, *, \\ \leq, 0, 1\empstr)$. 
  Namely, for any expression $e$ and $f$ over $\Sigma$, we have 
  \begin{align}
    \NKAmodel e=f ~~\Leftrightarrow~~ 
    \rpssem{e}=\rpssem{f}.
  \end{align}
\end{theorem}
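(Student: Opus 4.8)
The plan is to establish the two implications separately: soundness ($\Rightarrow$) is a direct verification that $\rpssem{-}$ validates every NKA axiom, while completeness ($\Leftarrow$) is the substantial part and is where I would invoke the weighted-automata machinery of \citet{esik2004inductive, bloom2009axiomatizing}. A convenient organizing observation is that the NKA star laws together with the semiring and partial-order laws say precisely that $\ratps$ is an \emph{inductive $*$-semiring}; the whole theorem then amounts to showing that $\ratps$ is the \emph{free} such semiring generated by $\Sigma$.

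For soundness I would check that $\rpssem{-}$ sends each axiom to a valid identity or inequality in $\ratps$. The semiring laws transfer immediately, since $\natinf$ is a commutative semiring and the power-series operations are defined pointwise (for $+$) and by convolution (for $\cdot$). The partial-order laws hold because the pointwise order $\fps{f}\leq\fps{g}$ is genuinely a partial order preserved by $+$ and by $\cdot$, the latter because multiplication by a fixed coefficient is monotone in $\natinf$. The star laws are the only ones requiring care: unfolding $\rpssem{e^*}=\sum_{n\geq 0}\rpssem{e}^n$ yields $1\empstr+\rpssem{e}\,\rpssem{e}^*\leq\rpssem{e}^*$ directly, and the two induction rules $q+pr\leq r\rightarrow p^*q\leq r$ follow from a least-fixpoint argument that is valid precisely because countable sums in $\natinf$ are always well-defined (with $\infty$ absorbing divergence), so $\rpssem{e}^*$ is the least solution of the associated linear inequality.

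For completeness I would pass through weighted automata. By a Kleene--Schützenberger correspondence, every series in $\ratps$ is the behavior of a finite $\natinf$-weighted automaton, presentable as a matrix triple $(\alpha,M,\beta)$ with behavior $\alpha M^*\beta$, and each such behavior is rational; crucially, when these correspondences are phrased as linear systems they are themselves NKA-derivable. The problem then reduces to: given two automata with equal behaviors, derive their equality from the axioms. The key tool is the equivalence-by-conjugacy theory for $\nat$- and $\natinf$-automata, in which two automata recognize the same series exactly when they are linked by a chain of covering/co-covering (conjugacy) relations, and each elementary step of such a chain corresponds to an algebraic manipulation provable in NKA --- the \ref{sliding-rule} and \ref{denesting-rule} identities of \fig{NKAtheorems} carry much of this load. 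Assembling the steps yields the purely algebraic derivation $\NKAmodel e=f$.

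I expect the main obstacle to be the presence of the top element $\infty$ and the divergent sums it produces, which is exactly what separates NKA completeness from the classical KA case. In KA, completeness comes from minimizing deterministic automata, an argument resting essentially on the Boolean (idempotent) structure where each trace is simply in or out. Here every trace carries a weight in $\natinf$, no minimization is available, and two expressions can differ solely by whether some coefficient is finite or $\infty$. One must therefore verify that the conjugacy arguments remain valid when coefficients are infinite, and that the least-fixpoint characterization of $*$ interacts correctly with these infinite weights. This is precisely the content imported from \citet{bloom2009axiomatizing}, and it is the step I would expect to demand the most care.
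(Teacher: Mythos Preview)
The paper does not prove this theorem; it is stated with a citation to \cite{esik2004inductive, bloom2009axiomatizing} and simply invoked where needed (notably in the completeness direction of \thm{q-model-sound-complete}). Your sketch is a faithful high-level outline of the argument in those references: soundness amounts to checking that $\ratps$ is an inductive $*$-semiring in the sense of \'Esik--Kuich, and completeness is obtained by translating to $\natinf$-weighted automata and applying the conjugacy/covering machinery of B\'eal--Lombardy--Sakarovitch as adapted by Bloom--\'Esik--Kuich. So there is nothing to compare against in the paper itself, and your plan matches the cited sources.

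One small caution on your completeness sketch: the step ``each elementary covering corresponds to an algebraic manipulation provable in NKA'' hides real work. In the $\natinf$ setting one must first split off the behaviour of the $\infty$-coefficients (reducing to automata over $\nat$) and then show that the chain of conjugacies between $\nat$-automata can be realized by matrix identities derivable from the semiring and star laws; \ref{sliding-rule} and \ref{denesting-rule} are used, but so is a matricial form of the induction axioms to justify that the Kleene--Sch\"utzenberger translation itself is NKA-provable. Your identification of the $\infty$-handling as the delicate point is exactly right; just be aware that the reduction to the $\nat$ case is an explicit lemma in \cite{bloom2009axiomatizing} rather than something the conjugacy theory gives you for free.
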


\section{Optimizing Quantum Signal Processing}
\label{app:qsp}

\begin{figure}[b]
    \centering
    \begin{minipage}{.495\linewidth}
    \begin{align*}
    &\textsc{qsp}[q]\equiv \\
    & c:=\ket{n}; ~p:=\ket{+}; \tag{$c_0p_0$}\\
    & r:=\ket{G}; \tag{$r_0$}\\
    & \mathbf{while~} M[c]=1 \mathbf{~do~} \tag{\{$m_i$\}}\\
    & \quad c, p:=\Phi[c, p]; \tag{$\phi$}\\
    & {\color{red} \quad r:=S[r];} \tag{$s$}\\
    & \quad p, r, q:=\mathrm{C}_W[p, r, q]; \tag{$w_c$}\\
    & {\color{red} \quad r:=S^{-1}[r];} \tag{$s^{-1}$}\\
    & \quad c, p:=\Phi^{-1}[c, p]; \tag{$\phi^{-1}$}\\
    & \quad c:=\mathrm{Dec}[c] \tag{$d$}\\
    & \mathbf{done;} \\
    & \mathbf{if~} M_{\ket{+}\ket{G}}[p, r]=0 \\
    & \mathbf{then~abort} \tag{$\tau_0 0+\tau_1 1$}
    \end{align*}
    \end{minipage}
    \begin{minipage}{.495\linewidth}
    \begin{align*}
    &\textsc{qsp'}[q]\equiv \\
    & c:=\ket{n}; ~p:=\ket{+}; \tag{$c_0p_0$}\\
    & r:=\ket{G}; \tag{$r_0$}\\
    & \mathbf{while~} M[c]=1 \mathbf{~do~} \tag{\{$m_i$\}}\\
    & \quad c, p:=\Phi[c, p]; \tag{$\phi$}\\
    & \quad p, r, q:=\mathrm{C}_W[p, r, q]; \tag{$w_c$}\\
    & \quad c, p:=\Phi^{-1}[c, p]; \tag{$\phi^{-1}$}\\
    & \quad c:=\mathrm{Dec}[c] \tag{$d$}\\
    & \mathbf{done;} \\
    & \mathbf{if~} M_{\ket{+}\ket{G}}[p, r]=0 \\
    & \mathbf{then~abort} \tag{$\tau_0 0+\tau_1 1$}\\
    & \\
    \end{align*}
    \end{minipage}
    \caption{The program $\textsc{qsp}$ and $\textsc{qsp'}$. The measurement $M[c]$ is $\{M_1=\ket{0}\bra{0}, M_0=I_c-M_1\}$ on register $c$. The measurement $M_{\ket{+}\ket{G}}[p, r]$ is $\{M_1=\ket{+}\bra{+}\otimes\ket{G}\bra{G}, M_0=I_{p, r}-M_1\}$ on register $p$ and $r$ jointly.} 
    \label{fig:qsp}
\end{figure}

Quantum signal processing (QSP) \cite{low2017optimal} is an advanced quantum algorithm for Hamiltonian simulation problem. In \cite{Childs9456} an optimization is observed by canceling adjacent sub-processes. The QSP implementation before ($\textsc{qsp}$) and after ($\textsc{qsp'}$) the optimization is illustrated in \fig{qsp}.
The algorithm QSP simulates the Hamiltonian $H=\sum_{l=1}^L \alpha_l H_l$ on qubit register $q$ with high probability.
Let us explain the components in QSP briefly, whose details imply some commutativity conditions for our purpose. 
$\ket{G}=1/\sqrt{\sum_{l=1}^L \alpha_l}\sum_{l=1}^L\sqrt{\alpha_l}\ket{l}$ is a state defined by $H$. $\Phi=\sum_{j=1}^{n}\ket{j}\bra{j}\otimes e^{-i\phi_j\sigma^Z/2}$ is an operation rotating qubit $p$ with a pre-defined angle $\phi_j$. Unitary $S=(1-i)\ket{G}\bra{G}-I$ is a partial reflection operator about state $\ket{G}$, and $W=-i((2\ket{G}\bra{G}-I)\otimes I)\sum_{l=1}^L\ket{l}\bra{l}\otimes H_l$, which defines $\mathrm{C}_W=\ket{+}\bra{+}\otimes I+\ket{-}\bra{-}\otimes W$. $\mathrm{Dec}=\ket{n}\bra{0}+\sum_{j=1}^n\ket{j-1}\bra{j}$ is the unitary implementing $j \mapsto (j-1) \mod n$. 

\vspace{2mm} \noindent \emph{Program Encoding}:
We encode the programs in \fig{qsp} as
\begin{align*}
    \enc(\textsc{qsp})&=c_0p_0r_0(m_1\varphi s w_c s^{-1} \varphi^{-1} d)^*m_0(\tau_0 0+\tau_1 1), \\
    \enc(\textsc{qsp'})&=c_0p_0r_0(m_1\varphi w_c \varphi^{-1} d)^*m_0(\tau_0 0+\tau_1 1).
\end{align*}
The detailed encoder setting is self-explanatory.

\vspace{2mm} \noindent \emph{Condition Formulation}:
One can derive commutative conditions because 
 $c, p:=\Phi[c, p]$ and $r:=S[r]$, similarly  $r:=S^{-1}[r]$ and $c, p:=\Phi^{-1}[c, p]; c:=\mathrm{Dec}[c]$, apply on different quantum variables and hence commute. 
Algebraically, we hence have $\varphi s=s\varphi$, and $\varphi^{-1}ds^{-1}=s^{-1}\varphi^{-1}d$.
Moreover, $M[c]$ is commutable to $r:=S[r]$, so $m_1 s=s m_1$ and $m_0 s=s m_0.$ 
Since $S\ket{G}\bra{G}S^{\dagger}=\ket{G}\bra{G}$, we have $r_0s=r_0.$ Similarly the Kraus operator $(\ket{+}\bra{+}\otimes\ket{G}\bra{G})\cdot (I_p\otimes((1+i)\ket{G}\bra{G}-I_r))=i\ket{+}\bra{+}\otimes\ket{G}\bra{G},$ and the phases are cancelled when represented by superoperator. This is encoded as $s^{-1}\tau_1=\tau_1.$ Then we need to show $\enc(\textsc{qsp})=\enc(\textsc{qsp'})$ with these hypotheses and the NKA axioms.

\vspace{2mm} \noindent \emph{NKA derivation}:
By \eq{loop-boundary}, we have 
\begin{align}
  &c_0p_0r_0(m_1\varphi s w_c s^{-1} \varphi^{-1} d)^*m_0(\tau_0 0+\tau_1 1) \notag \\
  =&c_0p_0r_0(sm_1\varphi w_c \varphi^{-1}d s^{-1})^*m_0\tau_1 \tag{commutativity}\\
  =&c_0p_0r_0s(m_1\varphi w_c \varphi^{-1}d)^*m_0s^{-1}\tau_1 \tag{\eq{loop-boundary}}\\
  =&c_0p_0r_0(m_1\varphi w_c \varphi^{-1}d)^*m_0\tau_1, \tag{absorption-hypotheses}\\
  =&c_0p_0r_0(m_1\varphi w_c \varphi^{-1}d)^*m_0(\tau_0 0+\tau_1 1). \notag
\end{align}
Notice that $m_1$ and $\varphi$ do not commute, so we cannot apply \eq{loop-boundary} further. By \cor{primal-int}, \thm{enc-recovery} and \lem{E-embedding}, $\sem{\textsc{qsp}}=\sem{\textsc{qsp'}}$. Note that in $\textsc{qsp'}$, $S$ and $S^{-1}$ vanish, which could largely reduce the total gate count.

\section{Proofs of Technical Results}

We call the last two star laws ($q+pr\leq r\rightarrow p^*q\leq r$ and $q+rp\leq r\rightarrow qp^*\leq r$) the inductive star laws. They are ubiquitous in the proofs.

\subsection{Detailed Proof of \lem{iter-star-rules}}
\label{app:nka-rules}

\begin{proof}[Proof of \lem{iter-star-rules}]
  We rewrite the proofs in \cite{esik2004inductive} for the rules in \fig{NKAtheorems}.
  \begin{itemize}
      \item ($1+pp^*=p^*$): By star laws there is $1+pp^*\leq p^*$, so we only need to prove the other side. Because $\leq$ is monotone, we multiply $p$ and then plus $1$ on the both sides, leading to
      \begin{align*}
          1+p(1+pp^*)\leq 1+pp^*.
      \end{align*}
      Applying the inductive star law gives $p^*\leq 1+pp^*$.
      \item ($1+p^*p=p^*$): First we show $\geq$ side. Notice that
      \begin{align*}
          1+p(1+p^*p)=1+p+pp^*p=1+(1+pp^*)p=1+p^*p.
      \end{align*}
      Applying the inductive star law, we have $p^*\leq 1+p^*p$.
      
      Then we show $\leq$ side. Applying star law,
      \begin{align*}
          p+ppp^*=p(1+pp^*) \leq pp^*.
      \end{align*}
      So $p^*p\leq pp^*$ holds. Because $\leq$ is preserved by $+$, we conclude $1+p^*p\leq 1+pp^*\leq p^*.$
      \item ($p\leq q\rightarrow p^*\leq q^*$): We multiply $q^*$ and add $1$ on both sides, which gives
      \begin{align*}
          1+pq^*\leq 1+qq^*\leq q^*.
      \end{align*}
      By star laws, there is $p^*\leq q^*$.
      \item ($1+p(qp)^*q=(pq)^*$): We show $\geq$ side first. By semiring laws there is
      \begin{align*}
          1+(pq)(1+p(qp)^*q)=1+p(1+qp(qp)^*)q =1+p(qp^*)q.
      \end{align*}
      Because of the inductive star law, we get $(pq)^*\leq 1+p(qp)^*q$.
      
      Similarly for $\leq$ side, we consider
      \begin{align*}
          q+qpq(pq)^*=q(1+pq(pq)^*=q(pq)^*.
      \end{align*}
      We know that $(qp)^*q\leq q(pq)^*.$ Multiplying $p$ and adding $1$ on the both sides give 
      \begin{align*}
        1+p(qp)^*q\leq 1+pq(pq)^*\leq(pq)^*.
      \end{align*}
      \item ($(pq)^*p=p(qp)^*$): Multiplying $p$ on \ref{product-star-rule} results in
      \begin{align*}
          (pq)^*p=p+p(qp)^*qp=p(qp)^*.
      \end{align*}
      \item ($(p+q)^*=(p^*q)^*p^*$): To show $(p+q)^*\leq (p^*q)^*p^*$, we apply \ref{sliding-rule} twice, \ref{fp-rule} twice, followed by \ref{sliding-rule} once :
      \begin{align*}
          1+(p+q)(p^*q)^*p^*&=1+p(p^*q)^*p^*+q(p^*q)^*p^* \\
          &=pp^*(qp^*)^*+(1+(qp^*)^*qp^*) \\
          &=pp^*(qp^*)^*+(qp^*)^* \\
          &=(1+pp^*)(qp^*)^* \\
          &=p^*(qp^*)^* \\
          &=(p^*q)^*p^*
      \end{align*} 
      Then by the inductive star law there is $(p+q)^*\leq (p^*q)^*p^*.$
      
      The other side is by 
      \begin{align*}
          (p+q)^*=1+(p+q)(p+q)^*=(1+q(p+q)^*)+p(p+q)^*.
      \end{align*}
      Because of the inductive star law there is
      \begin{align*}
          p^*+p^*q(p+q)^*=p^*(1+q(p+q)^*)\leq (p+q)^*.
      \end{align*}
      Apply it once more, we eventually get $(p^*q)^*p^*\leq (p+q)^*.$
      \item ($(p+q)^*=p^*(qp^*)^*$): By \ref{sliding-rule} there is $p^*(qp^*)^*=(p^*q)^*p^*=(p+q)^*.$
      \item ($0\leq p$): Note that $0+1\cdot p=p\leq p$. Apply the inductive star law, and we have $0=1^*\cdot 0\leq p.$
  \end{itemize}
  
  Rules in \fig{optlem} can be derived by:
  \begin{itemize}
  \item (\ref{unr-rule}): For $\leq$ side, applying \ref{fp-rule} twice on $p^*$, we have $$1+p+(pp)p^*=p^*.$$ Applying the inductive star law, we have $(pp)^*(1+p)\leq p^*$. \\
    For $\geq$ side, applying \ref{fp-rule} on $(pp)^*$ we have $$1+(pp)^*(1+p)p=(pp)^*p+(1+(pp)^*pp)=(pp)^*(1+p).$$ Then by the inductive law, we have $p^*\leq (pp)^*(1+p).$
  \item (\ref{comstar-rule}): Applying \ref{fp-rule} there is $$p^*q=q+p^*pq=q+p^*qp.$$ By the inductive star law there is $qp^*\leq p^*q$. \\
  Similarly, the other side is by $$qp^*=q+qpp^*=q+pqp^*,$$ which leads to $p^*q\leq qp^*$. 
  \item (\ref{star-rew-rule}): By \ref{fp-rule} there is $$r^*p=p+r^*rp=p+r^*pq.$$ Applying the inductive star law there is $pq^*\leq r^*p.$ \\
  To prove the other side, note that with \ref{fp-rule} there is $$pq^*=p+pqq^*=p+rpq^*.$$ The inductive star law gives $r^*p\leq pq^*.$
  \end{itemize}
\end{proof}

\subsection{Detailed Proofs of \lem{SHequiv} and Several Facts about $\S(\H)$}
\label{app:SHfacts}

\begin{proof}[Proof of \lem{SHequiv}]
 Reflexivity is proved by choosing $J'=I'$ in the definition.
 To prove transitivity, we assume $\biguplus_{i\in I}\rho_i\lesssim\biguplus_{j\in J}\sigma_j$ and $\biguplus_{j\in J}\sigma_j\lesssim \biguplus_{k\in K}\gamma_k.$ For $\epsilon>0$ and finite $I'\subseteq I$, there exists a finite $J'\subseteq J$ such that $\sum_{i\in I'}\rho_i\sqsubseteq \frac{\epsilon}{2} I_{\H}+\sum_{j\in J'}\sigma_j.$ Then there exists a finite $K'\subseteq K$ such that $\sum_{j\in J'}\sigma_j\sqsubseteq \frac{\epsilon}{2} I_{\H}+\sum_{k\in K'}\gamma_k$ as well. Because $\sqsubseteq$ is monotone with respect to $+$, we have $\sum_{i\in I'}\rho_i\sqsubseteq\epsilon I_{\H}+\sum_{k\in K'}\gamma_k.$
\end{proof}

\begin{lemma}
  \label{lem:SHfacts}
  We demonstrate several basic facts about $\S(\H).$
  \begin{enumerate}[label=(\roman*), ref={\ref{lem:SHfacts}.(\roman*)}]
    \item \label{lem:S-biguplus-order} If for all $i\in I$, $\biguplus_{j\in J_i}\rho_{ij}\lesssim\biguplus_{k\in K_i}\sigma_{ik},$ then
      \begin{align}
        \biguplus_{i\in I} \biguplus_{j\in J_i}\rho_{ij}\lesssim \biguplus_{i\in I} \biguplus_{k\in K_i}\sigma_{ik}.
      \end{align}
    \item \label{lem:S-natinf} Let $n_i\in\natinf$ for all $i\in I.$ Then for all $\biguplus_{j\in J}\rho_j\in\S(\H),$ there is
      \begin{align}
        \biguplus_{i\in I}\biguplus_{0\leq k<n_i}\biguplus_{j\in J_i}\rho_j\sim\biguplus_{0\leq k<\sum_{i\in I}n_i}\biguplus_{j\in J_i}\rho_j.
      \end{align}
    Here $\{k:0\leq k<\infty\}=\nat.$
    \item \label{lem:convergent-sum} If $\sum_{i\in I} \rho_i$ converges in $\PO(\H)$, then
      \begin{align}
        \biguplus_{i\in I}\rho_i\sim\left\{\!\left|\sum_{i\in I}\rho_i\right|\!\right\}.
      \end{align}
    \item \label{lem:S-*-continuity} For a series $\biguplus_{i\in\nat}\biguplus_{j\in J_i}\rho_{ij}\in\S(\H)$, if there exists $\biguplus_{k\in K} \sigma_k$ such that for all $n\geq 0,$
      \begin{align}
        \biguplus_{0\leq i<n} \biguplus_{j\in J_i}\rho_{ij}\lesssim \biguplus_{k\in K} \sigma_k,
      \end{align}
      then $\biguplus_{i\in\nat} \biguplus_{j\in J_i}\rho_{ij}\lesssim \biguplus_{k\in K} \sigma_k.$
    \item \label{lem:E-preorder} If $\biguplus_{i\in I}\rho_i\lesssim\biguplus_{j\in J}\sigma_j,$ then for $\E\in\CP(\H),$ there is
      \begin{align}
        \biguplus_{i\in I}\E(\rho_i)\lesssim \biguplus_{j\in J}\E(\sigma_j).
      \end{align}
  \end{enumerate}
\end{lemma}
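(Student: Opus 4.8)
The plan is to read every part directly off the definition of $\lesssim$, whose only two moving pieces are an arbitrary \emph{finite} subset $I'$ of the source index and the additive slack $\epsilon I_{\H}$. Parts \ref{lem:S-biguplus-order}, \ref{lem:S-*-continuity}, and \ref{lem:E-preorder} are pure manipulations of these two pieces, so I would dispatch them first; \ref{lem:convergent-sum} needs one genuinely analytic observation and \ref{lem:S-natinf} is a counting argument, so I would treat those last.

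For \ref{lem:S-biguplus-order}, given $\epsilon>0$ and a finite $I'$ of the combined index $\{(i,j):i\in I,\,j\in J_i\}$, observe that $I'$ meets only finitely many outer indices, say $i\in I_0$ with $|I_0|=N$; for each such $i$ I apply the hypothesis $\biguplus_{j\in J_i}\rho_{ij}\lesssim\biguplus_{k\in K_i}\sigma_{ik}$ with slack $\epsilon/N$ to get a finite $K_i'\subseteq K_i$, and then sum the $N$ resulting L\"owner inequalities. For \ref{lem:S-*-continuity}, the point is simply that a finite subset of $\biguplus_{i\in\nat}\biguplus_{j\in J_i}\rho_{ij}$ has first coordinates bounded by some $n$, hence lives inside the initial segment $\biguplus_{0\le i<n}\biguplus_{j\in J_i}\rho_{ij}$, so the required finite $K'\subseteq K$ is furnished directly by the hypothesis at that $n$. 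For \ref{lem:E-preorder}, I apply $\E$ to the defining inequality $\sum_{i\in I'}\rho_i\sqsubseteq\epsilon I_{\H}+\sum_{j\in J'}\sigma_j$; linearity and positivity of $\E$ give $\sum_{i\in I'}\E(\rho_i)\sqsubseteq\epsilon\,\E(I_{\H})+\sum_{j\in J'}\E(\sigma_j)$, and since $\E$ is trace-non-increasing one has $\E(I_{\H})\sqsubseteq(\dim\H)\,I_{\H}$, so rescaling $\epsilon$ absorbs the extra factor back into $\epsilon I_{\H}$ form.

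For \ref{lem:convergent-sum}, write $\rho=\sum_{i\in I}\rho_i$. The direction $\biguplus_{i\in I}\rho_i\lesssim\{\!|\rho|\!\}$ is immediate because every finite partial sum satisfies $\sum_{i\in I'}\rho_i\sqsubseteq\rho$. The reverse direction is where analysis enters: testing the singleton source $\{\!|\rho|\!\}$ against $\epsilon$, I must produce a finite $J'\subseteq I$ with $\rho\sqsubseteq\epsilon I_{\H}+\sum_{j\in J'}\rho_j$, i.e. the positive tail $\rho-\sum_{j\in J'}\rho_j$ must be $\sqsubseteq\epsilon I_{\H}$; this holds because in the finite-dimensional Hermitian space the partial sums form an increasing net converging in norm to their supremum $\rho$, and norm-smallness of the positive tail yields the $\epsilon I_{\H}$ bound. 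Finally, for \ref{lem:S-natinf} I would exhibit an operator-matching index bijection: both series are the disjoint union of $\sum_{i\in I}n_i$ copies of $\biguplus_{j\in J}\rho_j$, hence the same multiset up to relabeling, and such a bijection makes $\lesssim$ trivial in both directions (match each $I'$ to its image $J'$). The only care is the $\natinf$-arithmetic bookkeeping of multiplicities when some $n_i=\infty$ or infinitely many are nonzero, using the convention $\{k:0\le k<\infty\}=\nat$.

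The main obstacle is the reverse inclusion in \ref{lem:convergent-sum}: it is the single step converting convergence of $\sum_i\rho_i$ into a uniform $\epsilon I_{\H}$ tail bound, which is precisely the behavior the preorder $\lesssim$ was designed to mimic for genuinely convergent sums. Everything else is finitary slack-splitting or reindexing.
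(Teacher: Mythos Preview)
Your proposal is correct and matches the paper's proof essentially line for line: the $\epsilon/N$ splitting in (i), the reindexing in (ii), the norm-tail bound for the reverse direction of (iii), the bounded-first-coordinate trick in (iv), and applying $\E$ to the defining inequality with a rescaled $\epsilon$ in (v). The only cosmetic difference is that in (v) the paper uses the exact bound $\norm{\E(I_{\H})}$ where you use the coarser $\dim\H$, which is fine since both just absorb into the slack.
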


\begin{proof}[Proof of \lem{SHfacts}]
  W.l.o.g. we assume the index sets to be subsets of $\nat$.
  \begin{enumerate}[label=(\roman*)]
      \item For any $\epsilon>0$ and any finite subseries $\biguplus_{i\in I, j\in J'_i}\rho_{ij}$ of $\biguplus_{i\in I} \biguplus_{j\in J_i}\rho_{ij}$, there exists an $N$ such that for $i\geq N$, there is $J'_i=\phi$. When $N=0$, then $\{(i, j):i\in I, j\in J'_i\}=\phi$ and the inequality holds with an empty subset chosen on the right hand side. Otherwise let $\epsilon'=\frac{\epsilon}{N},$ so there exist finite index set $K_i'$ for each $0\leq i<N$ such that $\sum_{j\in J_i'}\rho_{ij}\sqsubseteq \epsilon'I+\sum_{k\in K_i'}\sigma_{ik}$. Adding them up gives $\sum_{0\leq i<N, j\in J_i'} \rho_{ij}\sqsubseteq \epsilon I + \sum_{0\leq i < N, k\in K_i'} \sigma_{ik}$. This concludes $\biguplus_i \biguplus_{j\in J_i}\rho_{ij}\lesssim \biguplus_i \biguplus_{k\in K_i}\sigma_{ik}.$
      \item By reordering the multisets it holds apparently.
      \item ($\lesssim$): Notice that for any finite $I'\subseteq I$, $\sum_{i\in I'} \rho_i\sqsubseteq \sum_{i\in I}\rho_i.$ Then this direction comes from the definition.
  
      ($\gtrsim$): Since $\sum_{i\in I}\rho_i$ converges, for any $\epsilon>0$ there is an $N>0$ such that $\norm{\sum_{i\in I, i > N}\rho_i}\leq \epsilon,$ where $\norm{\cdot}$ is the spectral norm. Hence $\sum_{i\in I}\rho_i\sqsubseteq \epsilon I_{\H}+\sum_{i\in I, i \leq N}\rho_i.$ This gives $\gtrsim$ direction.
      \item Consider any finite subseries $\biguplus_{i\in\nat, j\in J_i'}\rho_{ij}$ selected from $\biguplus_{i\geq 0} \biguplus_{j\in J_i}\rho_{ij}.$ There exists $N$ such that for all $i\geq N, J_i'=\phi$. Let $n=N$ in the assumption, then we know that for any $\epsilon>0$ there exists a finite $K'\subseteq K$ such that $\sum_{0\leq i<N, j\in J_i'}\rho_{ij}\sqsubseteq \epsilon I_{\H}+\sum_{k\in K'}\sigma_k,$ and this concludes the proof.
      \item If $\E(I_{\H})=O_{\H},$ then $\E\equiv O_{\H}$, and we are done by definition. Now we assume $\E(I_{\H})\neq O_{\H}.$ For every finite $I'\subseteq I$ and $\epsilon>0$, there exists $J'\subseteq J$ such that $\sum_{i\in I'} \rho_i\sqsubseteq \frac{\epsilon}{\norm{\E(I_{\H})}}I_{\H}+\sum_{j\in J'}\sigma_j.$ Then \begin{align*}
        \sum_{i\in I'}\E(\rho_i)=\E\left(\sum_{i\in I'}\rho_i\right)&\sqsubseteq \E\left(\frac{\epsilon}{\norm{\E(I_{\H})}}I_{\H}+\sum_{j\in J'}\sigma_j\right) \\ &\sqsubseteq \epsilon I_{\H}+\sum_{j\in J'}\E(\sigma_j).
      \end{align*}
      Here $\norm{\cdot}$ is the spectral norm. This leads to $\biguplus_{i\in I}\E(\rho_i)\lesssim \biguplus_{j\in J}\E(\sigma_j).$
  \end{enumerate}
\end{proof}

\subsection{Detailed Proofs of \lem{P-oper-closed} and \thm{QS-NKA}}
\label{app:PH-thms}

\begin{lemma}
  \label{lem:P-oper-closed}
  $\sum_i, ;$ and $*$ operations are closed in $\LT(\H)$.
\end{lemma}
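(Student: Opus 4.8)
The plan is to prove that each of the three operations preserves both defining properties of $\LT(\H)$, namely linearity and monotonicity. Since $\T^*$ is defined as the countable sum $\sum_{i\geq 0}\T^i$ of the iterated compositions $\T^i=\I_{\H};\T;\cdots;\T$, the star case reduces entirely to the other two: once I show that composition $;$ and countable sum $\sum_i$ are closed, and that $\I_{\H}$ itself lies in $\LT(\H)$, an induction on $i$ gives $\T^i\in\LT(\H)$, and closure of $\sum_i$ then delivers $\T^*\in\LT(\H)$. So I would organize the argument as (i) composition, (ii) countable sum, (iii) star as a corollary.

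For composition, let $\T_1,\T_2\in\LT(\H)$. Linearity is a direct computation: for any series $\sum_{i\in I}x_i$ with $x_i\in\Seqv(\H)$, I would push the sum first through $\T_1$ and then through $\T_2$ using the linearity of each, obtaining $(\T_1;\T_2)(\sum_i x_i)=\T_2(\T_1(\sum_i x_i))=\T_2(\sum_i\T_1(x_i))=\sum_i\T_2(\T_1(x_i))=\sum_i(\T_1;\T_2)(x_i)$. Monotonicity is equally immediate: if $x\leq y$ then $\T_1(x)\leq\T_1(y)$ by monotonicity of $\T_1$, whence $\T_2(\T_1(x))\leq\T_2(\T_1(y))$ by monotonicity of $\T_2$. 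This step is routine definition-chasing.

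The substantive work is in the countable sum. For monotonicity, if $x\leq y$ then $\T_i(x)\leq\T_i(y)$ for every $i$, and I would conclude $\sum_i\T_i(x)\leq\sum_i\T_i(y)$ by appeal to \lem{S-biguplus-order}, which lifts a term-wise $\lesssim$ between families of series to their unions. For linearity I must show $(\sum_i\T_i)(\sum_j x_j)=\sum_j(\sum_i\T_i)(x_j)$; expanding both sides via the operation definitions and the linearity of each $\T_i$ reduces this to the identity $\sum_i\sum_j\T_i(x_j)=\sum_j\sum_i\T_i(x_j)$ in $\Seqv(\H)$. This exchange of two countable summations is the step I expect to be the main obstacle, and I would resolve it by recalling that summation in $\Seqv(\H)$ is realized as the multiset union $\biguplus$ in $\S(\H)$ over the product index set $\{(i,j)\}$, which is manifestly independent of the order of enumeration; well-definedness of the reindexing on equivalence classes is again guaranteed by \lem{S-biguplus-order}. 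Finally, for the star case I would note that $\I_{\H}\in\LT(\H)$ trivially, invoke the composition and sum closures just established, and observe that $\T^*$ is a genuine map $\Seqv(\H)\to\Seqv(\H)$ because countable unions of series are always series, so $\sum_{i\geq 0}\T^i(x)$ is defined for every $x$.
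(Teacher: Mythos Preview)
Your proposal is correct and follows essentially the same route as the paper: monotonicity of $\sum_i$ via \lem{S-biguplus-order}, monotonicity of $;$ by composing monotone maps, linearity of both by the same direct computations (including the swap of two countable sums, which the paper performs without comment), and the $*$ case obtained as a consequence of the other two. You are somewhat more explicit than the paper in justifying the sum exchange through the underlying multiset union and in spelling out the reduction for $*$, but there is no substantive difference in method.
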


\begin{proof}[Proof of \lem{P-oper-closed}]
  The monotone of $\sum_i$ follows \lem{S-biguplus-order}, and the monotone of $;$ follows the definition. It suffices to verify the linearity of them.
  
  For $\sum_i,$ notice that
  \begin{align*}
      \left(\sum_k\T_k\right)\left(\sum_i \sum_{j\in J_i}\eqv{\rho_{ij}}\right)&=\sum_k\T_k\left(\sum_i \sum_{j\in J_i}\eqv{\rho_{ij}}\right) \\
      &=\sum_k\sum_i\T_k\left(\sum_{j\in J_i}\eqv{\rho_{ij}}\right) \\
      &=\sum_i\sum_k\T_k\left(\sum_{j\in J_i}\eqv{\rho_{ij}}\right) \\
      &=\sum_i\left(\sum_k\T_k\right)\left(\sum_{j\in J_i}\eqv{\rho_{ij}}\right).
  \end{align*}
  
  For $;$ operation, it is directly proved by
  \begin{align*}
      (\T_1;\T_2)\left(\sum_i \sum_{j\in J_i}\eqv{\rho_{ij}}\right)&=\T_2\left(\sum_i\T_1\left(\sum_{j\in J_i}\eqv{\rho_{ij}}\right)\right) \\
      &=\sum_i\T_2\left(\T_1\left(\sum_{j\in J_i}\eqv{\rho_{ij}}\right)\right) \\
      &=\sum_i(\T_1;\T_2)\left(\sum_{j\in J_i}\eqv{\rho_{ij}}\right).
  \end{align*}
\end{proof}

\begin{proof}[Proof of \thm{QS-NKA}]
  The proofs of monotone of $+$ and $;$ operations, the star laws are presented here.
    \begin{itemize}
      \item $p\leq q\wedge r\leq s\rightarrow p+r\leq q+s$: First we show that $+$ and $\leq$ over $\Seqv(\H)$ follow this rule.
      
      Let $\uplus$ be an abbreviation of $\biguplus_i$ where there are only two operands.
      
      For $\sum_{i\in I}\eqv{\rho_i}\leq \sum_{j\in J}\eqv{\sigma_j}$ and $\sum_{k\in K}\eqv{\gamma_k}\leq \sum_{l\in L}\eqv{\chi_l}$, notice that $\biguplus_{i\in I}\rho_i\lesssim\biguplus_{j\in J}\sigma_j$ and $\biguplus_{k\in K}\gamma_k\lesssim\biguplus_{l\in L}\chi_l.$ By \lem{S-biguplus-order} there is $\biguplus_{i\in I}\rho_i\uplus \biguplus_{k\in K}\gamma_k\lesssim \biguplus_{j\in J}\sigma_j\uplus \biguplus_{l\in L}\chi_l.$ Hence 
      \begin{align*}
          &\sum_{i\in I}\eqv{\rho_i} + \sum_{k\in K}\eqv{\gamma_k}=\eqv{\biguplus_{i\in I}\rho_i\uplus \biguplus_{k\in K}\gamma_k} \\
          \leq &\eqv{\biguplus_{j\in J}\sigma_j\uplus \biguplus_{l\in L}\chi_l} =\sum_{j\in J}\eqv{\sigma_j} + \sum_{l\in L}\eqv{\chi_l}.
      \end{align*}
      Then at $\LT(\H)$ level, the inequality holds by definition.
      \item $p\leq q\wedge r\leq s\rightarrow pr\leq qs$: Because $\T\in\LT(\H)$ is monotone, by definition this law holds.
      \item $1+pp^*\leq p^*$: For any $\T\in\LT(\H)$, there is
      \begin{align*}
          \I_{\H}+(\T; \T^*)&=\T^0+\left(\T; \sum_{i\geq 0}\T^i\right) \\
          &=\T^0 + \sum_{i\geq 0}(\T; \T^i)  \\
          &=\sum_{i\geq 0}\T^i =\T^*.
      \end{align*}
      The second equality comes from the definition of $;$ operation.
      \item $*$-continuity: the $*$-continuity condition is defined as 
  \begin{align*}
      \left(\forall n\in\nat, \sum_{0\leq i\leq n} pq^ir\leq s\right)\rightarrow pq^*r\leq s. 
  \end{align*}
\lem{S-*-continuity} leads to the $*$-continuity in $\Seqv(\H)$: for $\sum_{i\in\nat}\sum_{j\in J_i}\eqv{\rho_{ij}},$ if there exists $\sum_{k\in K}\eqv{\sigma_k}$ such that for all $n\geq 0: \sum_{0\leq i<n} \sum_{j\in J_i}\eqv{\rho_{ij}}\leq \sum_{k\in K}\eqv{\sigma_k},$ then $\sum_{i\in\nat} \sum_{j\in J_i}\eqv{\rho_{ij}}\leq \sum_{k\in K}\eqv{\sigma_k}.$
      
      Eventually we show the $*$-continuity of $\LT(\H)$. For $\T_p, \T_q, \T_r, \T_s$ satisfying $\sum_{0\leq i<n} (\T_p; \T_q^i; \T_r)\preceq\T_s$ for all $n\geq 0$, there is $\sum_{0\leq i<n}(\T_p; \T_q^i; \T_r)(\sum_{j\in J}\eqv{\rho_j})\leq \T_s(\sum_{j\in J}\eqv{\rho_j})$ for every $\sum_{j\in J}\eqv{\rho_j}\in\Seqv(\H).$ By the $*$-continuity in $\Seqv(\H)$ and linearity, inequality
      \begin{align*}
          &(\T_p; \T_q^*; \T_r)\left(\sum_{j\in J}\eqv{\rho_j}\right) \\
          =~&\sum_{i\geq 0}(\T_p; \T_q^i; \T_r)\left(\sum_{j\in J}\eqv{\rho_j}\right)\leq \T_s\left(\sum_{j\in J}\eqv{\rho_j}\right)
      \end{align*}
      holds for every $\sum_{j\in J}\eqv{\rho_j}\in\Seqv(\H).$ This concludes the $*$-continuity rule in $\LT(\H).$
      Easily we have $\O_{\H}\preceq \T$ for any $\T\in\LT(\H).$
      
      To derive the other star laws, we make use of $0\leq p$ and the $*$-continuity. For $q+pr\leq r\rightarrow p^*q\leq r,$ note $\sum_{0\leq i\leq n} 1p^iq\leq p^{n+1}r+\sum_{0\leq i\leq n} p^iq=q+p(q+p(...q+p(q+pr)...))\leq r.$ Then $*$-continuity gives $p^*q\leq r.$ The other side follows similarly.
  \end{itemize}
\end{proof}

\subsection{Detailed Proof of \lem{PL-embed}}
\label{app:lift-embed}

\begin{proof}[Proof of \lem{PL-embed}]
~

  \begin{enumerate}[label=(\roman*)]
  \item By \lem{E-preorder}, $\cp{\E}$ is monotone. Linearity is from
  \begin{align*}
  \cp{\E}\left(\sum_i \sum_{j\in J_i}\eqv{\rho_{ij}}\right)
      =\sum_i \sum_{j\in J_i}\eqv{\E(\rho_{ij})} 
      =\sum_i \cp{\E}\left(\sum_{j\in J_i}\eqv{\rho_{ij}}\right).
  \end{align*}
  \item $(\Rightarrow)$: By definition this direction holds.
  
  $(\Leftarrow)$: To prove the injectivity of path lifting, we assume $\E_1\neq\E_2$ while $\cp{\E_1}=\cp{\E_2},$ then there exists $\rho\in\PO(\H)$ such that $\E_1(\rho)\neq\E_2(\rho).$ $\cp{\E_1}=\cp{\E_2}$ indicates that 
  \begin{align*}
      &\eqv{\E_1(\rho)}=\cp{\E_1}\left(\eqv{\rho}\right) = \cp{\E_2}\left(\eqv{\rho}\right)=\eqv{\E_2(\rho)}.
  \end{align*}
  Hence $\{\E_1(\rho)\}\sim\{\E_2(\rho)\}.$ If $\E_1(\rho)=O_{\H}$, then for every $\epsilon>0,$ there is $\E_2(\rho)\sqsubseteq \epsilon I_{\H},$ resulting in $\E_2(\rho)=O_{\H}=\E_1(\rho),$ which is a contradiction. If $\E_1(\rho)\neq O_{\H},$ for every $0<\epsilon<\norm{\E_1(\rho)},$ there is $\E_1(\rho)\sqsubseteq \epsilon I_{\H}+\E_2(\rho).$ Hence $\E_1(\rho)\sqsubseteq\E_2(\rho).$ Similarly we have $\E_2(\rho)\sqsubseteq\E_1(\rho).$ So $\E_1(\rho)=\E_2(\rho)$ is the contradiction.
  \item For $\E_1, \E_2\in\CP(\H)$ and $\sum_{i\in I}\eqv{\rho_i}\in\Seqv(\H),$ there is
  \begin{align*}
      (\cp{\E_1}; \cp{\E_2})\left(\sum_{i\in I}\eqv{\rho_i}\right) &= \cp{\E_2}\left(\sum_{i\in I}\eqv{\E_1(\rho_i)}\right) \\
      &= \sum_{i\in I}\eqv{\E_2(\E_1(\rho_i))} \\
      &= \cp{\E_1\circ\E_2}\left(\sum_{i\in I}\eqv{\rho_i}\right).
  \end{align*}
  
  Similarly, if $\sum_i\E_i$ is defined in $\CP(\H)$, then $\sum_i\E_i(\rho)$ converges for any $\rho\in\PO(\H).$ By \lem{convergent-sum}, for every $\sum_{j\in J}\eqv{\rho_j}\in\Seqv(\H),$ there is
  \begin{align*}
      \left(\sum_i \cp{\E_i}\right)\left(\sum_{j\in J}\eqv{\rho_j}\right) &= \sum_i \cp{\E_i}\left(\sum_{j\in J}\eqv{\rho_j}\right) \\
      &= \sum_j\sum_i \eqv{\E_i(\rho_j)} \\
      &= \sum_j\eqv{\sum_i\E_i(\rho_j)} \\
      &= \cp{\sum_i\E_i}\left(\sum_{j\in J}\eqv{\rho_j}\right).
  \end{align*}
  \end{enumerate}
\end{proof}

\subsection{Detailed Proof of \thm{q-model-sound-complete}}
\label{app:q-model-sc}

\begin{proof}[Proof of \thm{q-model-sound-complete}]
  ($\Rightarrow$): Formally we prove it by induction on the derivation of $\NKAmodel e=f.$ Practically it suffices to prove the soundness of the NKA axioms on the quantum path model, which is proved in \thm{QS-NKA}.

  
  
  ($\Leftarrow$): We will establish $\NKAmodel e=f$ by first showing $\rpssem{e}=\rpssem{f}$ and then applying \thm{rps-sound-complete}. 
  To that end, let us consider the case of any fixed $n\in \nat$, and show that for string $w$ with length less than $n$, there is $\rpssem{e}[w]=\rpssem{f}[w].$
  
  Let $S=\{s\in\Sigma^*: |s|\leq n\}.$ Because $\Sigma$ and $n$ are finite, $S$ is a finite set. We set $\H=\mathrm{span}\{\ket{s}:s\in S\}$ which is finite dimensional, and $\eval(a)(\rho)=\sum_{s\in S}K_{a,s}\rho K_{a, s}^{\dagger},$ where $K_{a, s}=\frac{1}{\sqrt{\#_a}}\ket{sa}\bra{s}$ for $sa\in S$, $K_{a, s}=O_{\H}$ for $sa\not\in S.$
  Here $\#_a=|\{s:sa\in S\}|$ is a normalization factor to make sure $\eval(a)\in\CP(\H)$. For $s=a_1a_2\cdots a_l,$ we set $\#_s=\prod_{i=1}^l \#_{a_i}.$
  
  Let $\intp=(\H, \eval)$.
  We claim for $s\in S$ and $r\in\mathbb{R}$, there is
  \begin{align}
    \label{eq:completeness-claim}
    \Qint(e)(\eqv{r\cdot\ket{s}\bra{s}})=\sum_{st\in S}\sum_{k=1}^{\rpssem{e}[t]}\eqv{r/\#_t\cdot\ket{st}\bra{st}}.
  \end{align}
  The proof is based on the induction on expression $e$, and its proof is left to the last.
  
  Then we consider two expressions $e, f$ such that $\Qint(e)=\Qint(f)$. We apply this action on $\empstr$ and $r=1$, resulting in 
  \begin{align*}
     &\sum_{s\in S}\sum_{k=1}^{\rpssem{e}[s]}\eqv{1/\#_s\cdot\ket{s}\bra{s}}
     = \sum_{s\in S}\sum_{k=1}^{\rpssem{f}[s]}\eqv{1/\#_s\cdot\ket{s}\bra{s}}.
  \end{align*}
  If there exists $t\in S: \rpssem{e}[t]<\rpssem{f}[t],$ then there exists $m\in \nat$ such that $\rpssem{e}[t]<m\leq\rpssem{f}[t].$ By selecting $I'=\{(t, k):0\leq k<m\}$ in the definition of $\biguplus_{s\in S}\biguplus_{k=1}^{\rpssem{f}[s]} 1/\#_s\cdot\ket{s}\bra{s}\lesssim\biguplus_{s\in S}\biguplus_{k=1}^{\rpssem{e}[s]} 1/\#_s\cdot\ket{s}\bra{s},$ it is impossible to find a $J'$ to satisfy definition inequality \eq{pre-def}, because there are at most $\rpssem{e}[t]$ operators that are non-zero in basis $\ket{t}\bra{t}$. The cases where $\rpssem{e}[s]>\rpssem{f}[s]$ can be ruled out similarly. Then $\forall s\in S, \rpssem{e}[s]=\rpssem{f}[s].$
  
  Notice that the above argument holds for any $n\in\nat$. Hence $\rpssem{e}=\rpssem{f}$. By \thm{rps-sound-complete}, $\NKAmodel e=f.$
  
  Now we come back to \eq{completeness-claim}.
  Let us prove it by induction on $e$. For the base cases, notice that
  \begin{align*}
    &\Qint(0)=\O_{\H}, \qquad \Qint(1)=\I_{\H}, \\
    &\Qint(a)(\eqv{r\cdot\ket{s}\bra{s}})=\begin{cases}\eqv{r/\#_a\cdot\ket{sa}\bra{sa}}, & sa\in S, \\ \eqv{O_{\H}}, & sa\not\in S.\end{cases}
  \end{align*}
  Combined with $\rpssem{0}=0, \rpssem{1}=1\empstr$ and $\rpssem{a}=1a$, the equation holds for the base cases.

  Consider the case $e+f$. For any $s\in S$ and $r\in \mathbb{R}$, by inductive hypotheses and \lem{S-natinf},
  \begin{align*}
    &\Qint(e+f)(\eqv{r\cdot\ket{s}\bra{s}}) \\
    =~&\Qint(e)(\eqv{r\cdot\ket{s}\bra{s}})+\Qint(f)(\eqv{r\cdot\ket{s}\bra{s}})\\
    =~&\sum_{st\in S}\left(\sum_{k=1}^{\rpssem{e}[t]}\eqv{r/\#_t\cdot\ket{st}\bra{st}}+\sum_{k=1}^{\rpssem{f}[t]}\eqv{r/\#_t\cdot\ket{st}\bra{st}}\right)\\
    =~&\sum_{st\in S}\sum_{k=1}^{\rpssem{e+f}[t]}\eqv{r/\#_t\cdot\ket{st}\bra{st}}.
  \end{align*}

  Consider the case $e\cdot f$. For any $s\in S$ and $r\in \mathbb{R}$, by inductive hypotheses and \lem{S-natinf},
  \begin{align*}
    &\Qint(e\cdot f)(\eqv{r\cdot\ket{s}\bra{s}})\\
    =~&\Qint(f)(\Qint(e)(\eqv{r\cdot\ket{s}\bra{s}}))\\
    =~&\Qint(f)\left(\sum_{st\in S}\sum_{k=1}^{\rpssem{e}[t]}\eqv{r/\#_t\cdot\ket{st}\bra{st}}\right)\\
    =~&\sum_{stw\in S}\sum_{k=1}^{\rpssem{e}[t]}\sum_{l=1}^{\rpssem{f}[w]}\eqv{r/(\#_t\cdot\#_w)\cdot\ket{stw}\bra{stw}}\\
    =~&\sum_{st\in S}\sum_{k=1}^{\rpssem{e\cdot f}[t]} \eqv{r/\#_t\cdot\ket{st}\bra{st}}.
  \end{align*}

  Consider the case $e^*$. For any $s\in S$, by inductive hypothesis, \lem{S-natinf} and the above proofs for $e+f$ and $e\cdot f$,
  \begin{align*}
    &\Qint(e^*)(\eqv{r\cdot\ket{s}\bra{s}}) \\
    =~&\Qint(e)^*(\eqv{r\cdot\ket{s}\bra{s}})\\
    =~&\sum_{i\geq 0}\Qint(e)^i(\eqv{r\cdot\ket{s}\bra{s}})\\
    =~&\sum_{i\geq 0}\Qint(e^i)(\eqv{r\cdot\ket{s}\bra{s}})\\
    =~&\sum_{i\geq 0}\sum_{st\in S}\sum_{k=1}^{\rpssem{e^i}[t]}\eqv{r/\#_t\cdot\ket{st}\bra{st}}\\
    =~&\sum_{st\in S}\sum_{k=1}^{\rpssem{e^*}[t]} \eqv{r/\#_t\cdot\ket{st}\bra{st}}.
  \end{align*}
\end{proof}

\subsection{Detailed Proof of \thm{enc-recovery}}
\label{app:enc-recovery}

\begin{proof}[Proof of \thm{enc-recovery}]
  We prove them by induction on $P$.
  
  \begin{itemize}
    \item For the base cases $P\equiv\cskip, \cfail$, the equation holds by definition. For $P\equiv q:=\ket{0}$ and $\overline{q}:=U[\overline{q}]$, we know $\enc(P)\in\Sigma$ by the encoder setting $E$. With $E^{-1}(\enc(P))=\cp{\sem{P}},$ the equation holds. 
    \item For $P=P_1;P_2$, by inductive hypotheses there are $\Qint(\enc(P_1))=\cp{\sem{P_1}}$ and $\Qint(\enc(P_2))=\cp{\sem{P_2}}.$ 
    Then by \lem{struct-preserving},
    \begin{align*}
      \Qint(\enc(P))&=\Qint(\enc(P_1)); \Qint(\enc(P_2)) \\
      &=\cp{\sem{P_1}}; \cp{\sem{P_2}}
      =\cp{\sem{P_1}\circ\sem{P_2}}.
    \end{align*}
    \item For $P\equiv \qif{M[\overline{q}]\xrightarrow{i}P_i}$, the inductive hypotheses are $\Qint(\enc(P_i))=\cp{\sem{P_i}}.$
    Then by \lem{struct-preserving},
    \begin{align*}
      \Qint(\enc(P))&=\sum_i(\Qint(E(\M_i)); \Qint(\enc(P_i)) \\
      &=\sum_i(\cp{\M_i}; \cp{\sem{P_i}})
      =\sum_i(\cp{\M_i\circ\sem{P_i}}) \\
      &=\cp{\sum_i(\M_i\circ\sem{P_i})}.
    \end{align*}
    \item For $P\equiv \qwhile{M[\overline{q}]=1}{S}$, the inductive hypothesis becomes $\Qint(\enc(S))=\cp{\sem{S}}$. By \cite{Ying16} $\sum_{n\geq 0}((\M_1\circ\sem{S})^n\circ\M_0)$ exists in $\CP(\H)$, so by \lem{struct-preserving} and linearity of transformations in $\LT(\H)$,
    \begin{align*}
      \Qint(\enc(P)) &=(\Qint(E(\M_1)); \Qint(\enc(S)))^*\Qint(E(\M_0)) \\
      &=(\cp{\M_1}; \cp{\sem{S}})^*; \cp{\M_0}  \\
      &=\left(\sum_{n\geq 0} (\cp{\M_1}; \cp{\sem{S}})^n\right); \cp{\M_0} \\
      &=\sum_{n\geq 0} ((\cp{\M_1}; \cp{\sem{S}})^n; \cp{\M_0})\\
      &=\sum_{n\geq 0} \cp{(\M_1\circ \sem{S})^n\circ \M_0} \\
      &=\cp{\sum_{n\geq 0}((\M_1\circ \sem{S})^n\circ \M_0)}.
    \end{align*}
  \end{itemize}
\end{proof}

\subsection{Detailed Proof of \thm{normal-form}}
\label{app:normal-form}

\begin{proof}[Proof of \thm{normal-form}]
  We prove the normal form theorem by induction on the program $P$. For each step we introduce a classical guard variable $g$ whose value is limited in a finite set $\{0, 1, ..., n-1\}$, and denote the space of $g$ by $\C_n$. We encode $g:=\ket{i}$ as $g^i$, the measurement $\mathrm{Meas}[g]=i$ as $g_i$ and the reset of space $\C$ as $c$. Each time $g$ is independent of the existing space, so the following assumptions hold for any $i, j$ in the value set:
  \begin{itemize}
  \item $g^i$ commutes with every elements except for $g^j.$
  \item $g^ig_j=\delta_{ij}g^i,$ where $\delta_{ij}=1$ when $i=j,$ and $\delta_{ij}=0$ when $i\neq j$.
  \item $g^ig^j=g^j.$
  \end{itemize}

  \bigskip
  (a) For the base case where $P=\cskip ~|~ \cfail ~|~ q:=\ket{0} ~|~ \overline{q}=U[\overline{q}],$ they are while-free. Let $\C=\C_1$ the space with only one value. We claim $P; g:=\ket{0}; \qwhile{\text{Meas}[g]=1}{\cskip}; g:=\ket{0}$ is equivalent to $P; g:=\ket{0}$. The NKA encoding of these two programs are $pg^0(g_11)^*g_0g^0$ and $pg^0$. This motivates the following derivation:
  \begin{align*}
    g^0(g_11)^*g_0=~&g^0g_0+g^0g_1g_1^*g_0 =g^0.
  \end{align*}
  Hence $pg^0(g_11)^*g_0g^0=pg^0g^0=pg^0.$

  \bigskip
  (b) For the $S_1; S_2$ case, by inductive hypothesis we have two external space $\C^1$ and $\C^2$ such that $S_i; p_{\C^i}:=\ket{0}$ is equivalent to $P_{i0}; \qwhile{M_i}{P_{i1}}; p_{\C^i}:=\ket{0}$, where $P_{ij}$ is while-free. We claim $S_1; S_2; p_{\C^1\otimes\C^2\otimes\C_3}:=\ket{0}$ and
  \begin{align*}
    &P_{10}; g:=\ket{1}; \\
    &\mathbf{while~}\mathrm{Meas}[g]>0\mathbf{~do} \\
    &\quad \mathbf{if~}\mathrm{Meas}[g]=1\mathbf{~then} \\
    &\quad \quad \mathbf{if~}M_1\mathbf{~then~} P_{11} \\
    &\quad \quad \mathbf{else~} P_{20}; g:=\ket{2} \\
    &\quad \mathbf{else} \\
    &\quad \quad \mathbf{if~}M_2\mathbf{~then~} P_{21} \\
    &\quad \quad \mathbf{else~} g:=\ket{0} \\
    &\mathbf{done}; \\
    &p_{\C^1\otimes\C^2\otimes\C_3}:=\ket{0},
  \end{align*}
  are equivalent, whose encodings are $s_1s_2c_1c_2g^0$ and $p_{10}g^1((g_1+g_2)(g_1(m_{11}p_{11}+m_{12}p_{20}g^2)+(g_0+g_2)(m_{21}p_{21}+m_{22}g^0)))^*g_0c_1c_2g^0.$

  Notice that $c_1$ acts on $\C^1$, so $c_1$ is commutable to those operators acting on $\H\otimes\C^2\otimes\C_3$. By inductive hypothesis, there is $s_ic_i=p_{i0}(m_{i1}p_{i1})^*m_{i2}c_i$, so
  \begin{align*}
    s_1s_2c_1c_2g^0&=s_1c_1s_2c_2g^0\\
    &=p_{10}(m_{11}p_{11})^*m_{12}c_1p_{20}(m_{21}p_{21})^*m_{22}c_2g^0\\
    &=p_{10}(m_{11}p_{11})^*m_{12}p_{20}(m_{21}p_{21})^*m_{22}c_1c_2g^0.
  \end{align*}

  Let $X=(g_1+g_2)(g_1(m_{11}p_{11}+m_{12}p_{20}g^2)+(g_0+g_2)(m_{21}p_{21}+m_{22}g^0))=g_1(m_{11}p_{11}+m_{12}p_{20}g^2)+g_2(m_{21}p_{21}+m_{22}g^0),$ and $Y=g_1(m_{11}p_{11}+m_{12}p_{20}g^2).$ Then by denesting rule:
  \begin{align*}
    g^1X^* =~& g^1(g_1(m_{11}p_{11}+m_{12}p_{20}g^2))^* \\
    &\quad \cdot (g_2(m_{21}p_{21}+m_{22}g^0)(g_1(m_{11}p_{11}+m_{12}p_{20}g^2))^*)^* \\
    =~& g^1Y^*(g_2(m_{21}p_{21}+m_{22}g^0)Y^*)^*. \\
    g^1Y^* =~& g^1(g_1m_{11}p_{11})^*(g_1m_{12}p_{20}g^2(g_1m_{11}p_{11})^*)^* \\
    =~& (m_{11}p_{11})^*g^1 \\
    &\quad \cdot(g_1m_{12}p_{20}g^2+g_1m_{12}p_{20}g^2g_1m_{11}p_{11}(g_1m_{11}p_{11})^*))^* \\
    =~& (m_{11}p_{11})^*g^1(g_1m_{12}p_{20}g^2)^* \\
    =~& (m_{11}p_{11})^*g^1(1+g_1m_{12}p_{20}g^2 \\
    &\qquad \qquad \quad  +g_1m_{12}p_{20}g^2g_1m_{12}p_{20}g^2(g_1m_{12}p_{20}g^2)^*) \\
    =~& (m_{11}p_{11})^*(g^1+m_{12}p_{20}g^2). \\
    g^2Y^* =~& g^2.
  \end{align*}
  By \ref{star-rew-rule} , we have:
  \begin{align*}
    & g^2(g_2(m_{21}p_{21}+m_{22}g^0)Y^*)^* \\
    =~& g^2(g_2m_{21}p_{21}Y^*)^*(g_2m_{22}g^0Y^*(g_2m_{21}p_{21}Y^*)^*)^* \\
    =~& g^2(g_2m_{21}p_{21})^*(g_2m_{22}g^0+g_2m_{22}g^0g_2m_{21}p_{21}Y^*(g_2m_{21}p_{21}Y^*)^*) \\
    =~& (m_{21}p_{21})^*g^2(g_2m_{22}g^0)^* \\
    =~& (m_{21}p_{21})^*g^2(1+g_2m_{22}g^0+g_2m_{22}g^0(g_2m_{22}g^0)^*) \\
    =~& (m_{21}p_{21})^*(g^2+m_{22}g^0).
  \end{align*}

  Hence we have:
  \begin{align*}
    &  p_{10}g^1((g_1+g_2)(g_1(m_{11}p_{11}+m_{12}p_{20}g^2) \\
    & \qquad \qquad \qquad +(g_0+g_2)(m_{21}p_{21}+m_{22}g^0)))^*g_0c_1c_2g^0 \\
    =~& p_{10}(m_{11}p_{11})^*(g^1+m_{12}p_{20}g^2)(g_2(m_{21}p_{21}+m_{22}g^0)Y^*)^*g_0c_1c_2g^0 \\
    =~& p_{10}(m_{11}p_{11})^*g^1g_0c_1c_2g^0 \\
    &+p_{10}(m_{11}p_{11})^*m_{12}p_{20}g^2(g_2(m_{21}p_{21}+m_{22}g^0)Y^*)^*g_0c_1c_2g^0 \\
    =~& p_{10}(m_{11}p_{11})^*m_{12}p_{20}(m_{21}p_{21})^*(g^2+m_{22}g^0)g_0c_1c_2g^0 \\
    =~& p_{10}(m_{11}p_{11})^*m_{12}p_{20}(m_{21}p_{21})^*m_{22}c_1c_2g^0 \\
    =~& s_1s_2c_1c_2g^0.
  \end{align*}

  \bigskip
  (c) For the $\qif{M\xrightarrow{i}S_i}$ case, w.l.o.g. we assume the measurement results are $\{1, 2, ..., n\}$. By inductive hypothesis we have two external spaces $\{\C^i\}_{1\leq i\leq n}$ such that $S_i; p_{\C^i}:=\ket{0}$ is equivalent to $P_{i0}; \qwhile{M_i}{P_{i1}}; p_{\C^i}:=\ket{0}$, where $P_{ij}$ is while-free. Let $\C=\left(\bigotimes_{1\leq i\leq n}\C^i\right)\otimes \C_{n+1}$. We claim $\qif{M\xrightarrow{i}S_i}; p_C=\ket{0}$ and
  \begin{align*}
    &\qif{M\xrightarrow{i} P_{i0}; g:=\ket{i}} \\
    &\mathbf{while~}\mathrm{Meas}[g]>0\mathbf{~do} \\
    &\quad \mathbf{case~}\mathrm{Meas}[g]\xrightarrow{i>0} \\
    &\quad \quad \mathbf{if~}M_i\mathbf{~then~}P_{i1} \\
    &\quad \quad \mathbf{else~}g:=\ket{0} \\
    &\quad \mathbf{end} \\
    &\mathbf{done}; \\
    &p_{\C}:=\ket{0}
  \end{align*}
  are equivalent, whose encodings are $(\sum_{i=1}^n m_is_i)(\prod_{i=1}^n c_i)g^0$ and 
  \begin{align*}
     \left(\sum_{i=1}^n m_ip_{i0}g^i\right)\left(\left(\sum_{i=1}^n g_i\right)\left(\sum_{i=1}^n g_i(m_{i1}p_{i1}+m_{i2}g^0)\right)\right)^*g_0\left(\prod_{i=1}^n c_i\right)g^0.
  \end{align*}
  
  First we show $\qif{M\xrightarrow{i}S_i}; p_C=\ket{0}$ is equivalent to
  \begin{align*}
    &\qif{M\xrightarrow{i}S_i; p_{C^i}:=\ket{0}}; \\
    &p_C=\ket{0}.
  \end{align*}
  $\left(\sum_{1\leq i\leq n} m_is_i\right)\left(\prod_{i=1}^n c_i\right)g^0=\left(\sum_{1\leq i\leq n} m_is_ic_i\right)\left(\prod_{i=1}^n c_i\right)g^0$ is what we need to derive. Because each $\C^i$ and $C_3$ are disjoint, we have $c_i$ commutes each other for $1\leq i\leq n$, and $c_ic_i=c_i$. With these assumptions added, the two expressions are equivalent by distributive law.

  Then we could apply inductive hypothesis $p_{i0}(m_{i1}p_{i1})^*m_{i2}c_i=s_ic_i$ on each branch. Let $X=\left(\sum_{i=1}^n g_i\right)\left(\sum_{i=1}^n g_i(m_{i1}p_{i1}+m_{i2}g^0)\right)=\sum_{i=1}^n g_i(m_{i1}p_{i1}+m_{i2}g^0), Y_i=g_im_{i1}p_{i1}+g_im_{i2}g^0$ for convenience. By denesting rule:
  \begin{align*}
    g^iX^*&=g^iY_i^*\left(\left(\sum_{j\neq i}g_j(m_{j1}p_{j1}+m_{j2}g^0)\right)Y_i^*\right)^*.
  \end{align*}
  Notice that for $1\leq i\leq n$,
  \begin{align*}
    g^iY_i^* =~& g^i(g_im_{i1}p_{i1})^*(g_im_{i2}g^0(g_im_{i1}p_{i1})^*)^* \\
    =~& (m_{i1}p_{i1})^*g^i(g_im_{i2}g^0+g_im_{i2}g^0g_im_{i1}p_{i1}(g_im_{i1}p_{i1})^*)^* \\
    =~& (m_{i1}p_{i1})^*g^i(g_im_{i2}g^0)^* \\
    =~& (m_{i1}p_{i1})^*g^i(1+g_im_{i2}g^0+g_im_{i2}g^0g_im_{i2}g^0(g_im_{i2}g^0)^*) \\
    =~& (m_{i1}p_{i1})^*(g^i+m_{i2}g^0),
  \end{align*}
  Meanwhile, for all $1\leq i\leq n$,
  \begin{align*}
    g^0&\left(\left(\sum_{j\neq i}g_j(m_{j1}p_{j1}+m_{j2}g^0)\right)Y_i^*\right)^* = g^0, \\
    g^i&\left(\left(\sum_{j\neq i}g_j(m_{j1}p_{j1}+m_{j2}g^0)\right)Y_i^*\right)^* = g^i. 
  \end{align*}
  Combining them up results in $g^iX^* = (m_{i1}p_{i1})^*(g^i+m_{i2}g^0)$ for $1\leq i\leq n$. Thus
  \begin{align*}
      &\left(\sum_{i=1}^n m_ip_{i0}g^i\right)\left(\left(\sum_{i=1}^n g_i\right)\left(\sum_{i=1}^n g_i(m_{i1}p_{i1}+m_{i2}g^0)\right)\right)^* \\
      &~\cdot g_0\left(\prod_{i=1}^n c_i\right)g^0 \\
      =~&\left(\sum_{i=1}^n m_ip_{i0}g^iX^*\right)g_0\left(\prod_{i=1}^n c_i\right)g^0 \\
      =~&\left(\sum_{i=1}^n m_ip_{i0}(m_{i1}p_{i1})^*(g^i+m_{i2}g^0)\right)g_0\left(\prod_{i=1}^n c_i\right)g^0\\
      =~&\left(\sum_{i=1}^n m_ip_{i0}(m_{i1}p_{i1})^*m_{i2}g^0\right)g_0\left(\prod_{i=1}^n c_i\right)g^0 \\
      =~&\left(\sum_{i=1}^n m_ip_{i0}(m_{i1}p_{i1})^*m_{i2}c_i\right)\left(\prod_{i=1}^n c_i\right)g^0 \\
      =~&\left(\sum_{i=1}^n m_is_ic_i\right)\left(\prod_{i=1}^n c_i\right)g^0 \\
      =~&\left(\sum_{i=1}^n m_is_i\right)\left(\prod_{i=1}^n c_i\right)g^0.
  \end{align*}

  \bigskip
  (d) For the $\qwhile{M_1}{S}$ case, by inductive hypothesis we have $\C$ such that $S; p_{\C}:=\ket{0}$ is equivalent to $P_1; \qwhile{M_2}{P_2}; p_{\C}:=\ket{0}$, where $P_i$ is while-free.

  We claim $\qwhile{M_1}{S}; p_{\C\otimes\C_3}:=\ket{0}$ and
  \begin{align*} 
    &g:=\ket{1}; \\
    &\mathbf{while~}\mathrm{Meas}[g]>0\mathbf{~do} \\
    &\quad \mathbf{if~}\mathrm{Meas}[g]=1\mathbf{~then} \\
    &\quad \quad \mathbf{if~} M_1 \mathbf{~then~} P_1; g:=\ket{2} \\
    &\quad \quad \mathbf{else~} g:=\ket{0} \\
    &\quad \mathbf{else~} \\
    &\quad \quad \mathbf{if~} M_2 \mathbf{~then~} P_2 \\
    &\quad \quad \mathbf{else~} g:=\ket{1} \\
    &p_{\C\otimes\C_3}:=\ket{0},
  \end{align*}
  are equivalent, whose encodings are $(m_{11}s)^*m_{12}cg^0$ and $g^1((g_1+g_2)(g_1(m_{11}p_1g^2+m_{12}g^0)+(g_0+g_2)(m_{21}p_2+m_{22}g^1)))^*g_0cg^0.$

  Similarly to the above case, utilizing inductive hypothesis, we have $sc=p_1(m_{21}p_2)^*m_{22}c.$ Let $X=(g_1+g_2)(g_1(m_{11}p_1g^2+m_{12}g^0)+(g_0+g_2)(m_{21}p_2+m_{22}g^1))=g_1(m_{11}p_1g^2+m_{12}g^0)+g_2(m_{21}p_2+m_{22}g^1).$ By denesting rule:
  \begin{align*}
    g^1X^* &= g^1(g_1(m_{11}p_1g^2+m_{12}g^0))^* \\
    &\quad \cdot (g_2(m_{21}p_2+m_{22}g^1)(g_1(m_{11}p_1g^2+m_{12}g^0))^*)^* 
  \end{align*}
  Let $Y=g_1(m_{11}p_1g^2+m_{12}g^0), Z=m_{11}p_1(m_{21}p_2)^*m_{22}.$ So
  \begin{align*}
    g^1X^*=~&g^1Y^*(g_2(m_{21}p_2+m_{22}g^1)Y^*)^*. \\
    g^1Y^* =~& g^1(1+g_1(m_{11}p_1g^2+m_{12}g^0)(g_1(m_{11}p_1g^2+m_{12}g^0))^*) \\
    =~& g^1+m_{12}g^0+m_{11}p_1g^2 \\
    g^2Y^* =~& g^2 \\
  \end{align*}
  Hence $g^2(g_2m_{21}p_2Y^*)^* = g^2(g_2m_{21}p_2)^* = (m_{21}p_2)^*g^2.$ By \ref{star-rew-rule}, there is
  \begin{align*}
    & g^2(g_2(m_{21}p_2+m_{22}g^1)Y^*)^*g_0 \\
    =~& g^2(g_2m_{21}p_2Y^*)^*(g_2m_{22}g^1Y^*(g_2m_{21}p_2Y^*)^*)^*g_0 \\
    =~& (m_{21}p_2)^*g^2(g_2m_{22}(g^1+m_{12}g^0+m_{11}p_1g^2)(g_2m_{21}p_2Y^*)^*)^*g_0 \\
    =~& (m_{21}p_2)^*g^2(g_2m_{22}(g^1+m_{12}g^0)+g_2m_{22}m_{11}p_1(m_{21}p_2)^*g^2)^*g_0 \\
    =~& (m_{21}p_2)^*g^2(g_2m_{22}m_{11}p_1(m_{21}p_2)^*g^2)^* \\
    & \quad \cdot (g_2m_{22}(g^1+m_{12}g^0)(g_2m_{22}m_{11}p_1(m_{21}p_2)^*g^2)^*)^*g_0 \\
    =~& (m_{21}p_2)^*(m_{22}m_{11}p_1(m_{21}p_2)^*)^*g^2(g_2m_{22}(g^1+m_{12}g^0))^*g_0
  \end{align*}
  Expand the star expression twice:
  \begin{align*}
    & g^2(g_2m_{22}(g^1+m_{12}g^0))^*g_0 \\
    =~& g^2[1+g_2m_{22}(g^1+m_{12}g^0) \\
    &+g_2m_{22}(g^1+m_{12}g^0)g_2m_{22}(g^1+m_{12}g^0)(g_2m_{22}(g^1+m_{12}g^0))^*]g_0 \\
    =~& g^2(1+g_2m_{22}(g^1+m_{12}g^0))g_0 \\
    =~& m_{22}m_{12}g^0. 
  \end{align*}
  By \ref{sliding-rule}
  \begin{align*}
    & g^2(g_2(m_{21}p_2+m_{22}g^1)Y^*)^*g_0 \\
    =~& (m_{21}p_2)^*(m_{22}m_{11}p_1(m_{21}p_2)^*)^*m_{22}m_{12}g^0\\
    =~& (m_{21}p_2)^*m_{22}(m_{11}p_1(m_{21}p_2)^*m_{22})^*m_{12}g^0 \\
    =~& (m_{21}p_2)^*m_{22}Z^*m_{12}g^0.
  \end{align*}
  Combining them up gives
  \begin{align*}
    g^1X^*g_0 =~& (g^1+m_{12}g^0+m_{11}p_1g^2)(g_2(m_{21}p_2+m_{22}g^1)Y^*)^*g_0 \\
    =~& m_{12}g^0+m_{11}p_1g^2(g_2(m_{21}p_2+m_{22}g^1)Y^*)^*g_0 \\
    =~& (1+m_{11}p_1(m_{21}p_2)^*m_{22}Z^*)m_{12}g^0 \\
    =~& (1+ZZ^*)m_{12}g^0 \\
    =~& Z^*m_{12}g^0
  \end{align*}
  Hence we have
  \begin{align*}
    &  g^1((g_1+g_2)(g_1(m_{11}p_1g^2+m_{12}g^0) \\
    & \quad +(g_0+g_2)(m_{21}p_2+m_{22}g^1)))^*g_0cg^0 \\
    =~& g^1X^*g_0cg^0 \\
    =~& Z^*m_{12}cg^0 \\
    =~& (m_{11}p_1(m_{21}p_2)^*m_{22})^*cm_{12}g^0 \\
    =~& (m_{11}p_1(m_{21}p_2)^*m_{22}c)^*cm_{12}g^0 \\
    =~& (m_{11}sc)^*cm_{12}g^0 \\
    =~& (m_{11}s)^*m_{12}cg^0.
  \end{align*}
\end{proof}

\subsection{Proof of \lem{p-is-ea}, \thm{q-nkat}}
\label{app:ea-proofs}

\begin{proof}[Proof of \lem{p-is-ea}]
~

  \begin{enumerate}
  \item If $\cp{\C_A}\oplus \cp{\C_B}$ is defined, then $\cp{\C_A}+\cp{\C_B}\preceq\cp{\C_{I_{\H}}}.$ Commutativity of addition makes $\cp{\C_B}+\cp{\C_A}\preceq\cp{\C_{I_{\H}}}$ and leads to $\cp{\C_B}\oplus \cp{\C_A}=\cp{\C_B}+\cp{\C_A}.$
  \item If $\cp{\C_A}\oplus \cp{\C_B}$ and $(\cp{\C_A}\oplus \cp{\C_B})\oplus \cp{\C_C}$ are defined, then $\cp{\C_A}+\cp{\C_B}\preceq\cp{\C_{I_{\H}}}$ and $(\cp{\C_A}+\cp{\C_B})+\cp{\C_C}\preceq\cp{\C_{I_{\H}}}.$ Hence $\cp{\C_B}+\cp{\C_C}\preceq\cp{\C_{I_{\H}}}$ and $\cp{\C_A}+(\cp{\C_B}+\cp{\C_C})\preceq\cp{\C_{I_{\H}}}.$ By definition, $\cp{\C_B}+\cp{\C_C}$ and $\cp{\C_A}+(\cp{\C_B}+\cp{\C_C})$ are defined, and $\cp{\C_A}+(\cp{\C_B}+\cp{\C_C})=(\cp{\C_A}+\cp{\C_B})+\cp{\C_C}.$
  \item If $\cp{\C_A}\oplus \cp{\C_{I_{\H}}}$ is defined in $\pred(\H)$, we assume $\cp{\C_A}+\cp{\C_{I_{\H}}}=\cp{\C_B}.$ Apply the quantum actions on $[\ket{0}\bra{0}],$ we have $[A+I_{\H}]=[B]$. Meanwhile, $\norm{A}, \norm{B}\leq 1.$ This forces $A=0$ so $\cp{\C_A}=\cp{\C_{O_{\H}}}=\O_{\H}.$
  \item For $\cp{\C_A}\in\pred(\H),$ there is $\cp{\C_A}+\cp{\C_{\neg{A}}}=\cp{\C_{I_{\H}}},$ hence $\cp{\C_A}\oplus\cp{\C_{\neg{A}}}=\cp{\C_{I_{\H}}}.$ Meanwhile, if $\cp{\C_A}\oplus\cp{\C_B}=\cp{\C_{I_{\H}}},$ we apply these quantum actions on $[\ket{0}\bra{0}],$ resulting in $[A+B]=[I_{\H}].$ Hence $B=I-A=\neg{A}.$ That is, $\neg{\cp{\C_A}}=\cp{\C_{\neg{A}}}$ is the unique negation of $\cp{\C_A}$ in $\pred(\H).$
  \item For $\cp{\C_A}\in\pred(\H)$, $\O_{\H}+\cp{\C_A}=\cp{\C_A}\preceq\cp{\C_{I_{\H}}}$, whose left hand side then equals to $\O_{\H}\oplus \cp{\C_A}$ by definition. 
  \end{enumerate}
\end{proof}

\begin{proof}[Proof of \thm{q-nkat}]
~
  Notice that the NKA axioms are symmetric for operands of $\cdot$. That is, if we define $a\star b=b\cdot a,$ any axiom substituting $\star$ for $\cdot$ has a corresponding axiom. Hence if $(\K, +, \cdot, *, 0, 1)$ forms an NKA, $(\K, +, \star, *, 0, 1)$ also forms an NKA. Hence \thm{QS-NKA} has verified (1) in \defn{nkat}.
  
  Meanwhile, \lem{p-is-ea} has verified (2) in \defn{nkat}. We only need to verify (3) here.
  \begin{itemize}
      \item By definition, $\cp{\M_i^{\dagger}}$ are elements of $\LT(\H).$
      \item Note $\cp{\M_i^{\dagger}}\diamond\cp{\C_A}(\sum_j[\rho_j])=\sum_j[\tr(\rho_j)\M_i^{\dagger}A\M_i]=\cp{\C_{\M_i^{\dagger}A\M_i}}(\sum_j[\rho_j]).$ Hence $\cp{\M_i^{\dagger}}\diamond\cp{\C_A}=\cp{\C_{\M_i^{\dagger}A\M_i}}$ and it is in $\pred(\H).$
      \item Similarly, we have
      \begin{align*}
          &\left(\sum_{i\in I}\cp{\M_i^{\dagger}}\diamond\cp{\C_{I_{\H}}}\right)\left(\sum_j[\rho_j]\right) \\
          =~~&\sum_j\left[\tr(\rho_j)\sum_{i\in I}\M_i^{\dagger}\M_i\right] \\
          =~~&\sum_j\left[\tr(\rho_j)I_{\H}\right] = \cp{\C_{I_{\H}}}\left(\sum_j[\rho_j]\right).
      \end{align*}
      This gives $\left(\sum_{i\in I}\cp{\M_i^{\dagger}}\diamond\cp{\C_{I_{\H}}}\right)=\cp{\C_{I_{\H}}}.$
  \end{itemize}
\end{proof}

\subsection{Proof of \lem{nkat-rules}}
\label{app:nkat-rules}

\begin{proof}[Proof of \lem{nkat-rules}]
~

  \begin{itemize}
   \item ($0\leq a\leq e$): Notice that $0\oplus a=a$ is defined by the definition of effect algebra. There is $0\leq 0+a=a\leq e.$
   \item ($a+\neg{a}=e$): Because $a\oplus \neg{a}=e$ is defined, we have $e=a\oplus \neg{a}=a+\neg{a}.$
   \item ($\neg{\neg{a}}=a$): Notice that there exists a unique $\neg{a}\in \L$ satisfying $a\oplus\neg{a}=e.$ Then there exists a unique $\neg{\neg{a}}$ satisfying $\neg{\neg{a}}\oplus\neg{a}=e.$ Therefore $a=\neg{\neg{a}}.$
  \item (\hyperref[negrev-rule]{negation-reverse}): Because $a\leq b$, $0\leq a+\neg{b}\leq b+\neg{b}=e.$ Hence $a\oplus\neg{b}\in\L.$ Let $c=\neg{a\oplus\neg{b}}\in\L,$ there is $0\leq c.$ So $a\oplus\neg{b}\oplus c=e=a\oplus\neg{a}.$ Thence $\neg{a}=\neg{b}\oplus c=\neg{b}+c,$ and $\neg{a}\leq\neg{b}.$
  \item (\hyperref[summa-rule]{partition-transform}): By $0\leq a_i\leq e$, monotone properties and $m_i a_i\in\L$, $\sum_{i\in I}m_ia_i\leq\sum_{i\in I}m_ie=e.$ So $\bigoplus_{i\in I}m_ia_i=\sum_{i\in I}m_ia_i\in \L$ by the definition of $\oplus$. Similarly $\bigoplus_{i\in I}m_i\neg{a_i}=\sum_{i\in I}m_i\neg{a_i}\in \L.$ Adding them together, $e=\sum_{i\in I}m_i e=\sum_{i\in I}m_i(a_i+\neg{a_i})=\sum_{i\in I}m_ia_i+\sum_{i\in I}m_i\neg{a_i}.$ Hence $\neg{\sum_{i\in I}m_ia_i}=\sum_{i\in I}m_i\neg{a_i}.$
  \end{itemize}
\end{proof}

\end{document}